\documentclass[11pt]{article}
\usepackage[a4paper, left=1in, right=1in, top=1in, bottom=1in]{geometry}
\usepackage{authblk}
\usepackage{graphicx}
\usepackage{amsfonts}
\usepackage{amssymb}
\usepackage{physics}
\usepackage{cancel}
\usepackage{hyperref}
\usepackage{float}
\usepackage{tikz}
\usepackage{quantikz}
\usepackage{enumitem}
\usepackage{mathtools}
\usepackage{amsthm}
\usepackage{booktabs}
\usepackage[dvipsnames]{xcolor}

\newtheorem{theorem}{Theorem}

\newtheorem{definition}[theorem]{Definition}

\newtheorem{lemma}[theorem]{Lemma}

\newtheorem{proposition}[theorem]{Proposition}

\allowdisplaybreaks

\usepackage{cleveref} 
\crefname{section}{Section}{Sections} 
\crefname{theorem}{Theorem}{Theorems} 
\crefname{definition}{Definition}{Definitions}
\crefname{proposition}{Proposition}{Propositions} 
\crefname{corrollary}{Corrollary}{Corrollaries} 
\crefname{remark}{Remark}{Remarks} 
\crefname{conjecture}{Conjecture}{Conjectures} 
\crefname{example}{Example}{Examples} 
\crefname{equation}{Eq.}{Eqs.} 

\newcommand{\rledit}[1]{{#1}}
\newcommand{\rleditb}[1]{{#1}}
\newcommand{\rleditc}[1]{{#1}}

\newcommand{\hide}[1]{{}}
\newcommand{\edited}[1]{{#1}}

\newcommand{\argline}{\underline{\hspace{0.3cm}}}

\newcommand{\hilb}[1]{\mathcal{#1}} 
\newcommand{\operators}[1]{L(\hilb{#1})}
\newcommand{\Herm}[1]{Herm(\hilb{{#1}})}
\newcommand{\states}[1]{S(\hilb{#1})}
\newcommand{\idop}{id} 
\newcommand{\idchan}{\mathcal{I}} 
\newcommand{\linmaps}[2]{L({#1},{#2})}
\newcommand{\channels}[2]{C({#1}, {#2})}
\newcommand{\cpmaps}[2]{CP({#1}, {#2})}

\newcommand{\visible}{observed}  
\newcommand{\hidden}{latent} 

\newcommand{\sampleindex}[2]{{#1}^{(#2)}}  
\newcommand{\sampleindexthree}[3]{{#1}_{#2}^{(#3)}}  
\newcommand{\decompindex}[2]{{#1}_{#2}}  

\newcommand{\sot}{SOT} 
\newcommand{\sotlong}{weak state-over-time}

\newcommand{\sotmap}{\odot}
\newcommand{\starprod}{\star}

\newcommand{\ASset}[3]{AS^{#1}_{#2,#3}} 
\newcommand{\ARCset}[4]{AC^{#1; #4}_{#2,#3}} 
\newcommand{\ARCsetpos}[4]{\widetilde{AC}^{#1; #4}_{#2,#3}}

\newcommand{\infmap}{extended inference map}
\newcommand{\infchan}{inference channel}
\newcommand{\infchansymbol}{\mathcal{E}}
\newcommand{\genmap}{extended generation map}
\newcommand{\genchan}{generation channel}
\newcommand{\genchansymbol}{\mathcal{D}}  

\newcommand{\dephchannel}{\mathcal{K}}

\newcommand{\LSodot}{\odot^{LS}}
\newcommand{\LSshort}{LS}
\newcommand{\JPodot}{\odot^{JP}}
\newcommand{\JPshort}{JP}  
		
\newcommand{\MPcapital}{Operator-product}	
\newcommand{\MPodot}{\odot^{OP}}
\newcommand{\MPshort}{OP}
\newcommand{\rMPodot}{\odot^{PO}}
\newcommand{\rMPshort}{PO}

\newcommand{\LSCPmap}[1]{\mathcal{E}_{#1}^{\LSshort}}
\newcommand{\JPCPmap}[1]{\mathcal{E}_{#1}^{\JPshort}}
\newcommand{\MPCPmap}[1]{\mathcal{E}_{#1}^{\MPshort}}

\newcommand{\CXPPT}{CX-PPT}

\newcommand{\DXPPT}{DX-PPT}

\newcommand{\scale}{1cm}
\newcommand{\spacecolor}{darkgray}

\newcommand{\gendataext}{generalised extension map}
\newcommand{\corr}{corr}

\tikzstyle{whitedott}=[circle, draw=black, fill=white, inner sep=.4ex]

\newcommand{\tinycomult}[1][whitedott]{
\smash{\raisebox{-2pt}{\hspace{-5pt}\ensuremath{\begin{aligned}\begin{tikzpicture}[scale=2.0, font=\tiny,scale=0.25,yscale=0.9]
    \node (0) at (0,0) {};
    \node[#1, inner sep=1.5pt] (1) at (0,0.55) {};
    \node (2) at (-0.5,1) {};
    \node (3) at (0.5,1) {};
    \draw (0.center) to (1.center);
    \draw (1.center) to [out=left, in=down, out looseness=1.5] (2.center);
    \draw (1.center) to [out=right, in=down, out looseness=1.5] (3.center);
    \node[#1, inner sep=1.5pt] (1) at (0,0.55) {};
\end{tikzpicture}\end{aligned}
}\hspace{-3pt}}}}

\newcommand{\tinycopy}{\tinycomult[whitedott]}

\title{Towards unsupervised representation learning for quantum data:\\quantum models with inference and generation}

\author{Robin Lorenz\thanks{robin.lorenz@quantinuum.com}} 
\author{Eric Brunner}
\author{Marcello Benedetti\thanks{marcello.benedetti@quantinuum.com}}

\affil{Quantinuum, London, United Kingdom}

\date{September 1, 2026}

\begin{document}

\maketitle	

\begin{abstract}
	With quantum sensors, simulators and networks emerging, a future of quantum technology may produce quantum states as data---that is, coherently rather than as classical measurement records---thus motivating the study of suitable quantum generalisations of modern machine learning, including the automated, unsupervised extraction of useful representations. Two ingredients are central to the latter: inference, mapping observations to latent representations, and generation, mapping latent states back to synthetic data. Both are related to each other and to joint distributions for training models by the chain-rule of classical probability theory. The fact that quantum states however lack such universal, standard factorisation property thus poses a challenge. 
	
	Here we develop a conceptual and mathematical framework for unsupervised representation learning from quantum data. Models are joint quantum states over visible and latent systems; state-over-time maps provide a notion of factorisation into a marginal state and inference (generation) channel; models with inference (generation) are ambiguous states---states for which such factorisation obtains---subject to a further consistency condition on extended inference maps as data extension. These stipulations are restrictive: we show that non-trivial models must feature non-linear such maps to the extended space. For three representative state-over-time maps, we completely characterise the ambiguous states, uncovering a hierarchy tied to the positive-partial-transpose (PPT) criterion from entanglement theory. Notably, the Leifer-Spekkens construction supports inference and generation exactly for model classes of PPT states, thus allowing genuinely quantum visible-latent correlations. We also formulate quantum counterparts of exact and approximate inference training, explore weaker notions of data extension and sketch a future research programme.
\end{abstract}

\section{Introduction} \label{sec:introduction}

Real-world data such as text, image, audio and video data tend to be high-dimensional, multi-modal and heterogeneous. 
Machine learning (ML) models take as input numerical representations of such raw data and are trained to solve tasks such as classification, clustering, anomaly detection and next-token prediction, to name a few. 
The knowledge stored in the model is expected to transfer across multiple tasks and be relevant even when the distribution of the input changes.      
The key problem of extracting useful features from raw data is what the field of \emph{representation learning} studies~\cite{Bengio2013}. 
\emph{Deep learning}, which has been the predominant paradigm for representation learning since the early 2010s, has unlocked tremendous progress in many areas of application~\cite{LeCun2015}. 
Deep learning emphasises the importance of extracting a hierarchy of increasingly abstract representations from raw data. Ideally, the costly process of sifting through the noise of raw data to extract this hierarchy of features should be automated, i.e. the learning done in an \emph{unsupervised} way. For example, a modern take on this consists of constructing pretext tasks on unlabelled data for the model to solve~\cite{Jing2021, Jaiswal2021, Uelwer2025}.   

All this progress concerns a classical world where the real-world data is of a classical nature as are the methods, the models, the latent spaces for representation, and the hardware which models are run on. Usually the underlying problem or phenomenon is fundamentally classical, but even for data that arises from quantum systems and to which machine learning has been applied, such as particle detection in high-energy physics~\cite{Astrand2026,Aad2026} or protein folding in biochemistry~\cite{Noe2020,Abramson2024}, any data fed into models is, roughly speaking, classical records of measurement outcomes.

However, it is fascinating to entertain---and increasingly reasonable to anticipate---a \emph{future of quantum technology}, in which quantum states are acquired by quantum sensors and quantum networks, prepared by programmable quantum simulators, and retained or transmitted through quantum memories or a quantum internet 
\edited{\cite{degen2017quantum, zaiser2016enhancing, khabiboulline2019optical, stas2026entanglement, altman2021quantum, ebadi2021quantum, wehner2018quantum, knaut2024entanglement}. Such platforms have already motivated learning tasks on coherently available quantum data, including quantum-state and phase classification, quantum principal-component analysis, and learning properties or dynamics of physical systems \cite{herrmann2022realizing, huang2022quantum}.} 
Though it is natural to ask more generally: what machine learning models could be deployed in such an environment? 
This requires a notion of \emph{quantum models} that generalise classical models in that they take raw \emph{quantum states as input} and process them in a fully quantum way---that is, \emph{coherently on a quantum computer}. 
Moreover, suppose that the general high-level goal would still be to learn correlations in the data so as to perform a wide range of tasks such as classification, clustering and so on, though now on quantum data. Then, in light of the lessons from classical representation learning, one would expect that suitably generalised methods for representation learning may well again lead to significant improvements across many domains. 
We thus propose the study of \emph{unsupervised representation learning for quantum data} and initiate its study in a principled manner.  

Tracing the trajectory of classical breakthroughs in representation learning, two topics stand out as central: 
\emph{inference} and \emph{generation}. 
The first refers to a model's ability to predict and reason in a manner that is consistent with data; 
depending on the context, the `inference map' is also called the encoder, used to represent data internally. 
The second refers to a model's ability to produce meaningful synthetic data; 
depending on the context, the `generation map' is also called a decoder, used to decode internal representations. 
Both are stochastic maps---the first from some visible space $X$ to some latent space $Y$, the second in the converse direction---subject to the condition of playing well with each other, i.e. being compatible with some suitable joint probability distribution on $X$ and $Y$, which is what typically enters some objective function during training.  

All these aspects---inference and generation, as well as their interplay---will need \rledit{to be generalised}. 
\rledit{As is well known, and as we will discuss in more depth later, in quantum theory there is no obvious, standard analogue to the chain rule of probability theory, yet the latter is what underlies the classical constructions alluded to above.} 
So how might one then go about a quantum generalisation? 
In particular, how does one factorise quantum states into something like an inference or generation map and a marginal state, or conversely, `put them back' together? 
These will turn out central questions, to be discussed in detail below. 
\rleditb{In addition, note that when using models for inference or generation one would want---and we will focus on---models that enable doing inference or generation on a \emph{one-shot} basis, i.e. outputting a corresponding inferred or generated state deterministically.} 

Now, one might wonder why such arguably natural questions would not have been adressed before in the field of quantum machine learning (QML), and more generally `quantum AI', given its increasingly large and rich literature.  
\edited{Part of the reason is that QML encompasses several distinct regimes, depending on whether the data, the processing and the intended output are classical or quantum. Much of the literature begins with classical data, which are embedded into quantum states or supplied through a quantum access model, and uses quantum or hybrid processing for a classical task such as classification, probabilistic inference or sample generation 
\cite{Rebentrost2014, Schuld2019, PerezSalinas2020, Liu2021, Lorenz2023, Sunderhauf2024, Zhang2024}. Quantum algorithms for Bayesian or variational inference and quantum-assisted generative models similarly tend to represent or generate classical probability distributions \cite{Low2014, Miyahara2018, Benedetti_2021, Benedetti_2018, Benedetti2019, Coyle2020, Huang2021, Rudolph2022}. In another line of work, the object of study is a quantum system, but the learner acts on measurement records or other classical descriptions \cite{de2026discovering}. Recent work retaining coherent quantum inputs has established advantages for particular inference tasks, including tasks with quantum outputs \cite{li2026exponential}, but does not aim at a generic unsupervised latent-variable model.

Closer in spirit, quantum autoencoders and fully quantum variational autoencoders map quantum states to a quantum latent system and reconstruct them \cite{romero2017quantum, wang2025quantum, cha2026toward}, while recent adversarial extensions learn a generator over the compressed latent states \cite{raj2025quantum}. Their encoder and decoder, however, are directly parametrised operations rather than being constrained to arise as compatible factorisations of a common joint model. Quantum Boltzmann machines and quantum GANs can also learn or generate quantum states, sometimes using quantum visible and hidden units \cite{kieferova2017tomography, lloyd2018quantum, Wilde2025}, 
but generally do not endow the same model with an inference map. A particularly close recent development from the joint-state side \rleditc{is the \emph{DO-EM}\footnote{\rleditc{standing for `Density Operator Expectation Maximization'}} framework, which takes a visible-latent density operator as the model and formulates an information-projection step based on the data processing inequality, and} with Petz recovery under suitable conditions \cite{vishnu2025density}. 
To our knowledge, these approaches do not impose all the ingredients together: a single joint visible-latent quantum model admitting both inference and generation factorisations, \rledit{with corresponding maps that yield valid states in the joint space for every input state} and a construction that reduces, in the classical limit, to the two chain-rule factorisations of a joint distribution. 
}

In this work we develop a first conceptual and theoretical framework for unsupervised representation learning for quantum data. 
In order to overcome the challenge posed by the lack of an obvious and generic analogue of the chain rule---a challenge one may in fact see as a manifestation of the famous no-cloning theorem---we take inspiration from works on `states over time'~\cite{Leifer_2013,Horsman_2017,Fullwood_2022} and define a suitable sense of a state factorising into a channel and a marginal state.  
On its basis we present definitions of quantum models with inference and generation, show that such models exist, sketch how one might in principle train them, and derive characterisation and no-go theorems for classes of models with various desirable properties. 

In more detail, after recalling some key elements of classical representation learning (Sec.~\ref{subsec:intro-classical-case}) and giving our conventions and notation for the quantum formalism (Sec.~\ref{subsec:notation}), we expand the analysis of the above mentioned key challenge in Sec.~\ref{subsec:intro-the-challenge} by carefully contrasting with the classical case.  
Sec.~\ref{sec:the-framework} then lays out our \emph{general framework}---the novel definitions lifting key ideas from classical representation learning to a quantum context---and argues that our desiderata are indeed appropriate. 
The framework involves in particular definitions of \emph{quantum data and model}, \emph{data extension maps}, 
a suitable notion of \emph{state-over-time} (\sot) maps, which induces notions of \emph{ambiguous quantum states}, which generalise how bipartite probability distributions factorise according to the chain rule, and on the basis of which we then define what a \emph{quantum model with inference and generation} is; finally, we explain how data extension and SOT-map-based inference maps relate to one another, and what quantum generalisations of basic training setups look like. 

Sec.~\ref{sec:sot-maps} presents three examples of \sot\ maps from the literature which also fit our definition and which we then focus on for the purpose of this exposition: the Leifer-Spekkens (LS) notion~\cite{Leifer_2013}, the Jordan-product notion~\cite{Fitzsimons2015,Horsman_2017} and one based on the ordinary operator product. 
Sec.~\ref{sec:general-results-sot-ambiguity} presents a general no-go result, establishing that well-behaved yet non-trivial models necessarily come with non-linear extended inference (and data extension) maps; we explain why this is not an in-principle issue.  
Sec.~\ref{sec:ambiguous-states} studies the respective ambiguous states for each of the \sot\ maps from Sec.~\ref{sec:sot-maps} and presents complete characterisation theorems for them, in one case rediscovering a result from~\cite{song2025bipartite}. All three are related to each other in interesting ways and to PPT states, i.e. ones with a positive partial transpose---a condition of great significance in entanglement theory since it implies limited (yet in general genuinely) quantum correlations known as \emph{bound entanglement}~\cite{Horodecki1998}. 

Sec.~\ref{sec:main-section-on-LS-models} puts all the pieces together and establishes the existence of classes of quantum models with inference and generation using the LS construction; the other \sot\ maps considered in this work, on the other hand, are affected by the no-go result. 
Finally, Sec.~\ref{sec:data-extension-not-from-inf} touches on a much weaker goal than the rest of the work, exploring what it would mean to use a data extension map not induced by \sot-map based quantum inference and discusses a concrete construction to this end.  
We conclude in Sec.~\ref{sec:discussion} with remarks on the wider relevance of the results and further discussion of future directions.

\section{Preliminaries} \label{sec:layign-the-ground}

\subsection{Cornerstones of classical representation learning} \label{subsec:intro-classical-case}

Representation learning is a vast topic with a multitude of deep learning approaches. 
It includes supervised learning of features, often performed via convolutional neural networks, in this paper though we are interested in the unsupervised approach because of its generality and historical importance---early breakthroughs happened within that paradigm~\cite{Dayan1995,Hinton2006,kingma2014auto}. 
Although the key idea of \emph{deep} learning is to train multiple layers of latent variables to extract a hierarchy of features, for the purpose of this paper,  
\rledit{we do not make any structural assumptions about the latent space; results will thus be general in this respect.\footnote{\rledit{Of course the structure of the latent space is important in practice, e.g., if it can be exploited by the training algorithm. However, such practical questions are not the focus of this work.}}} This section will recall standard, core ideas of unsupervised representation learning and present them in a way that will be useful later. 

Suppose data is given, which we assume to be described by a random variable $X$ taking values in (some subset of) $\mathbb{R}^{d_X}$ for some $d_X \in \mathbb{N}$, 
with the components $X_j$ for $j=1,...,d_X$ considered the \emph{visible variables}, 
and by a probability distribution $Q(X)$; 
a dataset is a collection $\{x_i \in \mathbb{R}^{d_X} \}_{i=1}^N$ of $N$ realisations, drawn i.i.d. according to $Q(X)$. 
With a view to representation learning then, models additionally use a \emph{latent} variable $Y$ for representation, which takes values in (some subset of) $\mathbb{R}^{d_Y}$ for some $d_Y \in \mathbb{N}$. 

A central fact of classical probability theory worth emphasising for later reference is the following 
\begin{align}
		P(Y | X) P(X) \ = \ P(X,Y) \ = \ P(X | Y) P(Y) \ , \label{eq:classcial-bipartite-factorisation}
\end{align}
which one may read in two ways: given a distribution $P(X,Y)$ it can always be factorised according to the chain rule as on the left and the right-hand side for some stochastic maps $P(Y | X)$ and $P(X | Y)$, respectively\footnote{These are uniquely given by the Bayesian conditional provided  $P(X,Y)$ has full support.}; conversely, given a stochastic map  $P(Y | X)$ and a distribution $P(X)$ they define a joint distribution via the left equality, analogously for the right equality. 

Returning to the notion of models, a model class is a parametrised family $\{P_{\theta}(X,Y)\}_{\theta \in I}$ for some $I \subseteq \mathbb{R}^p$ and $p \in \mathbb{N}$. 
With the above in mind, for each $\theta \in I$ the joint distribution induces two stochastic maps: $P_{\theta}(Y | X)$, commonly referred to as the model's \emph{encoder} or \emph{inference} map, and $P_{\theta}(X | Y)$, referred to as the \emph{decoder} or \emph{generation} map. 
Equivalently, a model class may be given by families of generation maps $P_{\theta}(X | Y)$ and prior distributions $P_{\theta}(Y)$---then often referred to as a generative model---which then \emph{define} the joint $P_{\theta}(X,Y)$, and in turn also the corresponding factorisation in the encoding direction. 
The terminology reflects that given a data point $x_i$, the model encodes a representation of it via $P_{\theta}(Y | X=x_i)$, which is obtained by feeding the point distribution on $x_i$ into $P_{\theta}(Y | X)$; depending on the context one may also think of sampling from $P_{\theta}(Y | X=x_i)$ as doing inference such as inferring a label. Conversely, given a vector $y$ from the latent space the model can generate synthetic data---it defines a distribution $P_{\theta}(X | Y=y)$ from which one may sample vectors $x$. 

Naturally, one says $P_{\theta}(X,Y)$ is a model of given data $Q(X)$ if
\begin{align}
	Q(X) \ = \ P_{\theta}(X) = \int dY P_{\theta}(X,Y) \label{eq:classical-model-of-data}
\end{align}
and the training phase consists of minimising, with respect to the model parameters, some notion of divergence $D$ between the respective distributions, i.e. $\theta^* = \arg\min_\theta \, D(Q(X) \| P_\theta(X))$.  
For concreteness we now use the Kullback-Leibler (KL) divergence 
\begin{align}
\label{obj_kl}
	D(Q(X) \| P_\theta(X)) := \int dX Q(X) \log \frac{Q(X)}{P_\theta(X)} ,
\end{align}
but emphasise that the ideas carry over to more general Csisz\'{a}r $f$-divergences, R\'{e}nyi $\alpha$-divergences, etc. 
The above Eq.~\eqref{eq:classical-model-of-data} is agnostic as to how the model is parametrised, but it is worth noting that the success of representation learning partly comes from using a generative model, i.e. where the integrand in Eq.~\eqref{eq:classical-model-of-data} is given by $P_{\theta}(X | Y)P_{\theta}(Y)$---intuitively speaking, with the model given as decoder and prior on the latent space, training to reproduce the probability distribution of samples in the dataset forces the latent variables to capture the most relevant features of the data. 
For the rest of this exposition assume the model is given as a generative model. 

The above kind of minimisation problems are usually approached by stochastic gradient descent (SGD)~\cite{Bottou2018}.
In its basic form, at iteration $t$ SGD updates the parameters as $\theta^{t+1} = \theta^{t} - \alpha^t \, g(\theta^{t})$ where $\alpha^t > 0$ is the learning rate and, $g$ is a finite-sample approximation to the gradient.
Using a batch of $M \leq N$ datapoints, and using that they are i.i.d., the gradient of the KL divergence is approximated as
\begin{align}
\label{grad_kl}
	\grad D = - \int dX Q(X) \frac{\grad P_\theta(X)}{P_\theta(X)}  \approx g(\theta) = -\frac{1}{M} \sum_{i=1}^M\frac{\grad P_\theta(X=x_i)}{P_\theta(X=x_i)}. 
\end{align} 
While \rleditc{one is here only explicitly training the} prior and decoder, an encoder can of course be recovered from the model posterior, $P_\theta(Y | X) = P_\theta(X , Y) / P_\theta(X)$. 

Importantly, in practice the latent space usually is high-dimensional and its marginalisation to yield $P_\theta(X)$ needed in Eq.~\eqref{grad_kl} is unfortunately intractable. 
To overcome this problem the standard approach is to minimise an upper bound instead. 
By the data processing inequality (DPI), which applies to any $f$-divergence, it holds that
\begin{align}
\label{classical_dpi}
	D(Q(X) \| P_\theta(X)) \ \leq \ D(Q(X, Y) \| P_\theta(X,Y) ) \ ,
\end{align}
where the probability distribution $Q(X,Y)$ is some version of the data $Q(X)$ \emph{extended} to the joint space of $X$ and $Y$\rleditc{, that is such that $Q(X) = \int dY Q(X,Y)$.}

First of all, note that if setting $Q(X, Y) := Q(X) P_\theta(Y | X)$, the above bound is tight, showing that the encoder recovered from the model posterior is indeed optimal. 
This observation suggests a training method where at iteration $t$ one \rleditc{holds $\theta^t$ fixed and performs SGD on 
\begin{align}
\label{objective_em}
    D(Q(X) P_{\theta^t} (Y|X) \| P_\theta(X,Y) ) ,
\end{align}
with respect to $\theta$ to yield $\theta^{t+1}$; this} is an instance of the celebrated Expectation-Maximisation (EM) algorithm~\cite{Dempster1977}.
Using the KL divergence, the approximate gradient now reads 
\begin{align}
\label{grad_em}
	\grad D = - \int dX dY Q(X) P_{\theta^t}(Y|X) \frac{\grad P_\theta(X, Y)}{P_\theta(X,Y)} \approx g(\theta) = -\frac{1}{M} \sum_{i=1}^M\frac{\grad P_\theta(X=x_i, Y=y_i)}{P_\theta(X=x_i, Y=y_i)} ,
\end{align}
where the dataset has been \emph{extended} using the encoder as $(x_i, y_i) \sim Q(X) P_{\theta^t}(Y|X)$. 
Since the encoder is optimal, this data extension is referred to as \emph{exact inference}. 
Its advantage over Eq.~\eqref{grad_kl} is that no high-dimensional integral \rleditc{over the latent space is ever explicitly computed (or rather, ever explicitly approximated)}. 

\begin{figure}
\centering
\begin{tikzpicture}[>=Latex]
	\node (A) at (-6.5*\scale,-1.0*\scale) {\begin{minipage}{2.0cm} \centering {\footnotesize \color{gray} $\ Q_{\phi}(X,Y):=$ \\ $Q_{\phi}(Y | X) Q(X)$} \end{minipage}}; 
	\node (A) at (7.0*\scale,-1.0*\scale) {\begin{minipage}{2.0cm} \centering {\footnotesize \color{gray} $\ P_{\theta}(X,Y):=$ \\ $P_{\theta}(X | Y) P_{\theta}(Y)$} \end{minipage}};
    \node (A) at (0,0) {$Q_{\phi}(Y)$};
    \node (B) at (2.5*\scale,0) {$P_{\theta}(Y)$};
    \node (C) at (0,-2.2*\scale) {$Q(X)$};
    \node (D) at (2.5*\scale,-2.2*\scale) {$P_{\theta}(X)$};
    \node at (0, 0.8*\scale) {\footnotesize Inference};
    \node at (2.5*\scale, 0.8*\scale) {\footnotesize Generation}; 
    \draw[->] (C) -- node[left] {$Q_{\phi}(Y | X)$} (A);
    \draw[->] (B) -- node[right] {$P_{\theta}(X | Y)$} (D);
    \node at (-2.3*\scale,0.05*\scale) {\footnotesize \color{\spacecolor} Latent space};
    \node at (-2.3*\scale,-2.15*\scale)  {\footnotesize \color{\spacecolor} Visible space};  
\end{tikzpicture}
\caption{The generic setup of what we here regard as basic classical representation learning. \label{fig:schema-classical}}
\end{figure}

Exact inference is practical when it can be implemented by an efficient algorithm, such as a quickly converging Markov chain Monte Carlo. 
But in most generative models this is not the case, and we have therefore not yet arrived at a generically practical method.  
The concept of \emph{approximate inference} then refers to a large class of methods that basically aim at trading accuracy for efficiency by introducing bias and variance. 
Again consider Eq.~\eqref{classical_dpi}, but set $Q(X, Y) :=  Q_\phi(Y | X) Q(X)$, where $Q_\phi(Y | X)$ is an independently given and parametrised `encoder' via which the data $Q(X)$ can still be \emph{extended} to the joint space of $X$ and $Y$. 
Training the model now also involves tuning the new parameters $\phi \in \mathbb{R}^q$ so as to approximate the model posterior. 
This idea goes all the way back to the Wake-Sleep (WS) algorithm~\cite{Hinton1995}, and was later popularised by the Variational Autoencoder (VAE)~\cite{kingma2014auto}. 
This leads to a training method where at iteration $t$ one performs SGD on both
\begin{align}
\label{objective_var}
    D_1(Q(X) Q_\phi(Y|X) \| P_{\theta^{t}}(X,Y)) \qquad \text{and} \qquad D_2(Q(X) Q_{\phi^t}(Y|X) \| P_{\theta}(X,Y)) .
\end{align}
For example, when $D_1$ is the reversed KL divergence and $D_2$ is the KL divergence, gradients are approximated as
\begin{align}
	&g_1(\phi) = -\frac{1}{M_1} \sum_{i=1}^{M_1} \frac{\grad Q_\phi(Y=y_i | X=x_i)}{Q_\phi(Y=y_i | X=x_i)}, \qquad (x_i, y_i) \sim P_{\theta^t}(X | Y) P_{\theta^t}(Y), \label{grad_var} \\
	&g_2(\theta) = -\frac{1}{M_2} \sum_{j=1}^{M_2} \frac{\grad P_\theta(X=x_j , Y=y_j) }{P_\theta(X=x_j , Y=y_j)}, \qquad (x_j, y_j) \sim Q_{\phi^t}(Y | X) Q(X), \label{grad_var_2}
\end{align}
yielding a generic method for unsupervised representation learning and, in fact, generative modelling.

Although we have only scratched the surface of this broad topic, the take-away is that the key to all the methods presented here is \emph{inference and generation}. 
Not just---as their names suggest---when \emph{using} a model to infer missing values, make predictions, generate new data etc., but also for practical training setups. 
Inference maps allow one to extend the visible data to include latent variables; 
generation provides new samples that reflect the model's current understanding. 
Together they enable the estimation of correlations, divergences, gradients etc.

\subsection{Quantum formalism -- notation and conventions} \label{subsec:notation}

This work will assume finite-dimensional quantum systems throughout. 
For $X$ a system (label), let $\hilb{X}$ denote the associated Hilbert space with dimension $d_X$, 
$\operators{X}$ the algebra of linear operators on $\hilb{X}$ and $\Herm{X} = \{A \in \operators{X} \mid A^{\dagger} = A \}$ the Hermitian ones.  
For $A \in \operators{X}$ write $A \succeq 0$ iff $A$ is positive semi-definite (PSD), i.e. $\forall \ket{x}\in \hilb{X}: \bra{x}A\ket{x} \geq 0$, or equivalently, $A \in \Herm{X}$ and $A$ has non-negative spectrum.  
The states on $X$ are denoted $\states{X}$, i.e. $\states{X}= \{\rho \in \operators{X} \mid \rho \succeq 0, \tr[\rho]=1\}$. 
We denote the support of $A \in \operators{H}$ by $supp(A) \subseteq \hilb{H}$ and `ONB' stands for orthonormal basis. 

Moving on from single systems, $\linmaps{X}{Y}$ denotes the space of linear maps of type $\operators{X} \rightarrow \operators{Y}$, 
with $\cpmaps{X}{Y}$ denoting the subset of completely positive (CP) maps and $\channels{X}{Y}$ the CPTP maps, i.e. CP and additionally trace-preserving (TP) ones. 
We use `channel' synonymously with CPTP map. 
Given a tensor product space $\hilb{X \otimes Y}$ we denote the partial trace over $Y$ by  $\tr_Y \in \channels{X \otimes Y}{X}$. 
Given a joint state $\rho \in \states{X \otimes Y}$, write $\rho_X := \tr_Y(\rho)$, i.e. a system label in subscript indicates being a marginal of a given joint state, otherwise we omit system labels and write, e.g., $\sigma \in \states{X}$. 

The identity operator is denoted $\idop_X \in \operators{X}$ and the identity channel $\idchan_X \in \channels{X}{X}$.
We use $\pi \in \operators{X}$ exclusively to denote an orthogonal projector, i.e. $\pi^2=\pi$ and $\pi^{\dagger} =\pi$, and then write $\pi^{\perp}:= \idop_X - \pi$ and $\Pi(\argline) := \pi (\argline)\pi$ for the corresponding map $\Pi \in \cpmaps{X}{X}$, as well as $\Pi^{\perp} := \idchan_X - \Pi$. 
For any $A\in \operators{X}, A \succeq 0$ we let $A^{-1}$ denote the pseudo-inverse, i.e. $A^{-1}A = \pi = AA^{-1}$, where $\pi$ projects onto $supp(A)$; similarly, writing $A^{-1/2}$ implicitly restricts to the support of the argument, i.e. it denotes the operator $\sum_{i:\lambda_i \neq 0} \lambda_i^{-1/2} \pi_i$ for $\sum_i \lambda_i \pi_i$ an eigendecomposition of $A$.	 
Recall a state $\rho \in \states{X \otimes Y}$ is called \emph{non-separable} (or entangled) iff it cannot be written as $\rho = \sum_i p_i \alpha_i \otimes \beta_i$ for some $\alpha_i \in \states{X}, \beta_i\in \states{Y}$ and $p_i \geq 0$ with $\sum_i p_i =1$. 

Finally, a common convention we will use throughout to avoid clutter drops identity operators in tensor product where this is innocent, i.e. for $A \in \operators{X \otimes Y}, B \in \operators{X}$ we write $AB$ as shorthand for $A(B \otimes \idop_Y)$---it is understood one always pads appropriately with identity operators to obtain well-formed products of operators. 
For emphasis we occasionally drop the convention and make the underlying structure explicit again.

\subsection{The challenges of a quantum generalisation} \label{subsec:intro-the-challenge}

With some cornerstones of classical representation learning thus laid out let us expand on why it is not at all obvious what a viable overall quantum generalisation would be. 

\paragraph{Straightforward part I: model components.} 
Some parts of lifting the setup in Fig.~\ref{fig:schema-classical} to a quantum version are straightforward: 
of course, quantum states and quantum channels (i.e. CPTP maps) formally generalise probability distributions and stochastic maps, respectively.  
A genuinely quantum version of a data source, which a user can access to sample data points from, is also straightforward---basically, an \emph{ensemble of} states and a distribution over them---formalised later in Def.~\ref{def:quantum-data}. 
In short, one arrives at the sort of picture as in Fig.~\ref{fig:schema-quantum-done-naively}: let $X$ and $Y$ be quantum systems, given is a dataset of $N$ quantum states $\{\sampleindexthree{\sigma}{}{i}\}_{i=1}^N$, which are i.i.d. realisations of some unknown density $\sigma \in \states{X}$ and the model components are lifted to quantum versions, where the classical prior $P_\theta(Y)$ becomes a quantum state $\eta_{\theta} \in \states{Y}$, the encoder $Q_\phi(Y|X)$ a channel $\mathcal{E}_\phi \in \channels{X}{Y}$, and the decoder $P_\theta(X|Y)$ a channel ${\rleditb{\genchansymbol}}_\theta \in \channels{Y}{X}$; just like before $\phi$ and $\theta$ parametrise respective families of states and channels.  

\begin{figure}[h]
\centering
\begin{tikzpicture}[>=Latex]
	\node (A) at (0,0) {$\mathcal{E}_{\phi}(\sigma)$};
    \node (B) at (2.5*\scale,0) {$\eta_{\theta}$};
    \node (C) at (0,-2.2*\scale) {$\sigma$};
	\node (D) at (2.5*\scale,-2.2*\scale) {${\rleditb{\genchansymbol}}_{\theta}(\eta_{\theta})$};
    \node at (0, 0.7*\scale) {\footnotesize Inference};
    \node at (2.5*\scale, 0.7*\scale) {\footnotesize Generation}; 
    \draw[->] (C) -- node[left] {$\mathcal{E}_{\phi}$} (A);
    \draw[->] (B) -- node[right] {${\rleditb{\genchansymbol}}_{\theta}$} (D);
    \node at (-2.3*\scale,0.05*\scale) {\footnotesize \color{\spacecolor} Latent space};
    \node at (-2.3*\scale,-2.15*\scale)  {\footnotesize \color{\spacecolor} Visible space};   
\end{tikzpicture}
\caption{A partial generalisation to quantum theory of the setup in Fig.~\ref{fig:schema-classical}.\label{fig:schema-quantum-done-naively}}
\end{figure}

\paragraph{Straightforward part II: a quantum divergence.} 
The KL divergence can be generalised, e.g., to the quantum relative entropy $D(\alpha \| \beta) := \tr(\alpha (\log \alpha - \log\beta))$ for $\alpha, \beta \in \operators{H}$, $\alpha, \beta \succeq 0$ and provided $supp(\alpha) \subseteq supp(\beta)$ so that $D(\alpha \| \beta)$ is well-defined given $0 \log 0 := 0$.  
The generalisation of Eq.~\eqref{obj_kl}, writing $\tilde{\eta}_{\theta} := {\rleditb{\genchansymbol}}_{\theta}(\eta_{\theta})$, then reads 
\begin{align}
\label{obj_qre}
	D(\sigma \| \tilde{\eta}_{\theta} ) := \tr\big[\sigma \big(\log \sigma - \log \tilde{\eta}_{\theta} \big)\big] \ .
\end{align}

\paragraph{The challenges.}
Similarly to the classical case one could in principle propose to train prior and decoder by SGD.\footnote{Although we do not give the expressions specifically for this approach, they can be derived by analogous steps as in Eqs.~\eqref{eq:log-der}-\eqref{eq:app-grad-expression} in App.~\ref{app:gradient_qre}, which refer to joint systems and the DPI.} 
One would then try to recover the encoder by some form of channel inversion. 
Albeit not a unique choice,  e.g., the Petz recovery map~\cite{PETZ1988} 
\begin{align}
	\mathcal{E}_\theta (\argline) := \eta_{\theta}^{\frac{1}{2}} \; {\rleditb{\genchansymbol}}_\theta^{\dagger}\left( {\rleditb{\genchansymbol}}_\theta(\eta_{\theta})^{-\frac{1}{2}} \;(\argline)\; {\rleditb{\genchansymbol}}_\theta(\eta_{\theta})^{-\frac{1}{2}} \right) \; \eta_{\theta}^{\frac{1}{2}} \label{eq:Petz-recovery-map}
\end{align} 
would be a valid choice in our context, seeing as indeed $\mathcal{E}_\theta (\tilde{\eta}_{\theta} )  = \eta_{\theta}$. 
However both, the gradient of Eq.~\eqref{obj_qre} and the Petz recovery map are daunting mathematical expressions to deal with in all but the very simplest models. 

The next natural step, mirroring the classical move in Eq.~\eqref{classical_dpi}, considers the DPI, which also holds quantumly  
$\forall \alpha,\beta \in \states{X \otimes Y}$, we have $D(\alpha_X \| \beta_X ) \leq D(\alpha \| \beta )$.\footnote{The DPI is valid for any channel $\mathcal{N}$, and thus one may be tempted to invoke $D(\mathcal{N}(\sigma)\|\mathcal{N}(\tilde{\eta}_{\theta}))$ which however leads to a \emph{lower} bound. The partial trace over a larger space is arguably the most sensible way to get an \emph{upper} bound for minimisation. Also, in the quantification over states $\alpha,\beta \in \states{X \otimes Y}$ we suppressed that one of course also requires that $supp(\alpha_X) \subseteq supp(\beta_X)$ and $supp(\alpha) \subseteq supp(\beta)$.} 
So a quantum generalisation of Eq.~\eqref{classical_dpi} structurally would look like this
\begin{align}
	D(\sigma \| \tilde{\eta}_{\theta}) \ \leq \ D(\bar{\sigma}_{\phi} \| \bar{\eta}_{\theta}), \label{eq:quantum-DPI}
\end{align} 
where $ \bar{\sigma}_{\phi}, \bar{\eta}_{\theta} \in \states{X \otimes Y}$ are joint states with the right marginals, i.e. $\bar{\sigma}_{\phi;X} = \sigma$ and $\bar{\eta}_{\theta;X} = \tilde{\eta}_{\theta}$.  
Above we have already allowed for generally distinct parameters $(\phi,\theta)$ so that the RHS above may potentially be thought of as a quantum version of Eq.~\eqref{objective_em} as well as Eq.~\eqref{objective_var}, depending on the relation between $\phi$ and $\theta$. 
The approximate gradients that would be needed---the analogues of Eqs.~\eqref{grad_em}, \eqref{grad_var} and \eqref{grad_var_2}---are discussed in App.~\ref{app:gradient_qre}.

Crucially however, now we face this \emph{challenge}: 
what are these joint states $\bar{\sigma}_{\phi}, \bar{\eta}_{\theta}$ supposed to be? 
We have defined the model in terms of \emph{its components} as in Fig.~\ref{fig:schema-quantum-done-naively}, namely data, encoder, prior, and decoder; how then should joint states $\bar{\sigma}_{\phi}$ and $\bar{\eta}_{\theta}$ be constructed in terms of these and what would such states even mean? 
In essence, one ends up asking what, if anything at all, the quantum generalisation of \emph{data extension} is that, analogously to the classical case, encodes a data state $\sigma$ into some joint state $\bar{\sigma}_{\phi}$, but ideally also plays well with `having inference and generation': $\bar{\eta}_{\theta}$ must in some sense be compatible with the existence of inference and generation channels $\mathcal{E}_{\theta}$ and ${\rleditb{\genchansymbol}}_{\theta}$, respectively.\footnote{Assuming for the moment that valid extended data $\bar{\sigma}_{\phi}$ and model $\bar{\eta}_{\theta}$ can be defined in our context, can the DPI be saturated? For the partial trace, which is the channel used in our DPI inequality, recoverability theorems~\cite{PETZ1988,Hayden2004, Wilde2015} imply there exist \emph{some} pairs of states that saturate the DPI, but these may not be achievable within the model class $\{\bar{\eta}_{\theta}\}_\theta$, or may be trivial. In contrast, the classical DPI in Eq.~\eqref{classical_dpi} can \emph{always} be saturated by extending the data using the model posterior. Thus in the quantum setting there exists no posterior-like object that provides optimal inference in general.} 

It is worth dwelling on this for a moment for, classically, the corresponding step easily goes unnoticed. 
Recall Eq.~\eqref{eq:classcial-bipartite-factorisation}, the starting point of our exposition of classical models in Sec.~\ref{subsec:intro-classical-case}---data extension, inference and generation and their mutual compatibility are essentially facilitated by the existence of a universal copy map. 
The role of the latter is implicit in how the factorisation of probability distributions is conventionally written, just like in Eq.~\eqref{eq:classcial-bipartite-factorisation}; a representation that makes it explicit is the following \emph{string diagrammatic} notation  
\begin{align}
	\begin{minipage}{0.9\textwidth}
		\centering
		\includegraphics[scale=0.2]{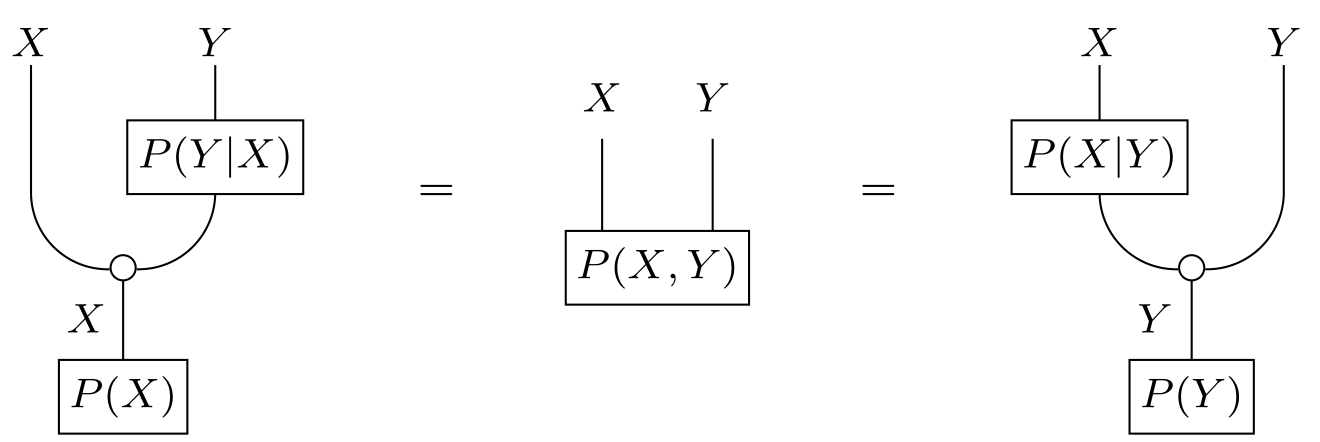}
	\end{minipage}
\end{align}
where \ $\tinycopy$ \ denotes the copy map. Here we only note that such string diagrams allow an intuitive, yet fully formal representation of probability theory based on (symmetric monoidal) categories, while for further details we  refer to, e.g. \cite{cho2019disintegration,fritz2020synthetic,lorenz2023causal} and references therein. 
A copy map can in particular be thought of as inducing a map of the following form that we will come back to later
\begin{align}
	\begin{minipage}{0.9\textwidth}
		\centering
		\includegraphics[scale=0.2]{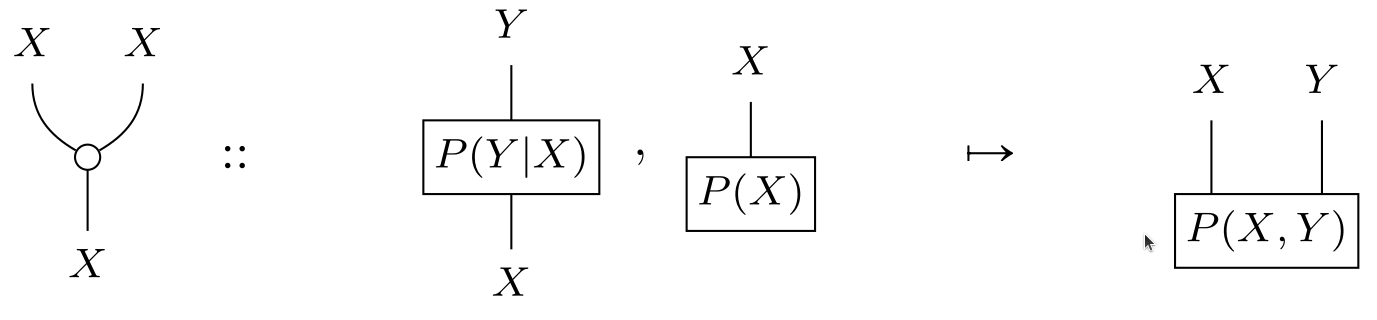}
	\end{minipage}
	\label{eq:classical-sot-like-map}
\end{align}

One way then to relate the challenges posed by a quantum generalisation of classical representation learning to fundamental facts about quantum theory thus is the no-cloning theorem \cite{wootters1982single, dieks1982communication}, that is, the fact that there does \emph{not} exist a (physical, i.e. CPTP) map that copies all pure states of a system.  
\rledit{Indeed, there is no unique quantum analogue of Eq.~\eqref{eq:classcial-bipartite-factorisation}. 
This fact has motivated and been studied in various lines of work, such as on 
quantum conditional amplitudes \cite{cerf1999quantum}, quantum conditional states \cite{leifer2006quantum, Leifer_2013}, 
Bayesian state updating and density-operator probability calculi \cite{schack2001quantum, warmuth2010bayesian}, 
quantum graphical models and noncommutative Bayesian inversion \cite{leifer2008quantum, coecke2012picturing, parzygnat2022non},  
and states over time \cite{Horsman_2017, Fullwood_2022, Parzygnat_2023}. 
These strands overlap---in particular, many proposed quantum Bayes rules can be formulated relative to a choice of \sot\ construction \cite{Parzygnat_2023}---but differ in their aims and axioms. 
The \sot\ perspective is the one most directly aligned with our problem, and the framework developed below adopts only the structural ingredients needed for representation learning rather than a complete quantum Bayesian probability calculus.}

\section{The framework} \label{sec:the-framework}

This section presents our framework: notions that generalise the key ideas from the classical setup in Sec.~\ref{subsec:intro-classical-case} to fully quantum versions. 
The desiderata in the definitions aim for strong, ideal notions in some respects, while keeping them broad and general in other respects, yielding a general framework of `foil notions'---we do not claim that every instance satisfying the definitions constitutes an interesting or efficiently implementable model; 
rather, it is a framework within which one can study at all quantum models in a principled manner and also make progress towards optimal classes of models from a practical perspective.

\subsection{Quantum data and models} \label{subsec:data-and-models}

Let us start with the central notion of quantum data. 

\begin{definition} \label{def:quantum-data}
	A \emph{quantum data source} on system $X$ 
	is given by an ensemble $\{ (p_i, \decompindex{\sigma}{i}) \}_{i=1}^k$ for some $k\in \mathbb{N}$, where for each $i= 1,..,k$ $\decompindex{\sigma}{i}  \in \states{X}$, $p_i \geq 0$ and $\sum_i p_i =1$. 
	Often we refer to just $\sigma=\sum_i p_i \decompindex{\sigma}{i} $ as the \emph{quantum data}, where it is understood that when querying the data source one obtains state $\decompindex{\sigma}{i} $ with probability $p_i$, and the process can be repeated to obtain i.i.d. samples. 
	While the `data' is defined with respect to a fixed ensemble, the user with access to the data source does not generally know this decomposition.
\end{definition}

First of all recall that this work considers only finite-dimensional quantum systems (see Sec.~\ref{subsec:notation}); this is for simplicity so as to focus on the conceptual and technical issues that are present already in finite dimensions, but also natural anyway for digital quantum computing paradigms. 
The classical case that the above definition then generalises is a discrete variable $V$ taking values in a set of finite cardinality $\{v_i\}_{i=1}^{d_X}$ together with a probability distribution $Q(V)$: for $\ket{v_i}_{i=1}^{d_X}$ an ONB of $\hilb{X}$, the tuple $\{ \big(Q(V=v_i),  \dyad{v_i} \big)_i  \}_{i=1}^{d_X}$ is a special case of a quantum data source.
Henceforth, given quantum system $X$ with Hilbert space $\hilb{X}$ we will at times allow ourselves to refer to a probability distribution $P(X)$, where we leave the choice of ONB to encode that distribution as a state in $\states{X}$ implicit---this ambiguity of $X$ will be harmless given the context. 

Given data source $\{ (p_i, \decompindex{\sigma}{i}) \}_{i=1}^k$, we will denote a specific realisation of some $j$-th query as $\sampleindex{\sigma}{j}$---for instance, the fifth query may return the third component of the ensemble, i.e. $\sampleindex{\sigma}{5} = \decompindex{\sigma}{3}$. 
By a \emph{dataset} then we mean the collection of $N$ i.i.d. queries (for some $N \in \mathbb{N}$) and note that this may be seen as a sequence $(\sampleindex{\sigma}{j})_{j=1}^N$, or alternatively as a tensor product $\otimes_{j=1}^N \sigma^{(j)}$---albeit relevant in practice, it is irrelevant to the remainder here which of the two one chooses to think of.\footnote{NB in the sequential case after a datapoint $\sigma^{(j)}$ is provided (and maybe processed), the data source must `reset' before providing another datapoint $\sigma^{(j+1)}$ to guarantee i.i.d. samples.}  
Importantly, the user does not have access to any state preparation protocol and cannot rely on the availability of copies of any particular datapoint; querying the source twice returns $\sampleindex{\sigma}{1} = \decompindex{\sigma}{i}$ and $\sampleindex{\sigma}{2} = \decompindex{\sigma}{j}$, for some $i$ and $j$, but no guarantee that $i=j$. 

With a view to \emph{quantum representation learning} we are interested in a notion of quantum model over \emph{\visible\ and \hidden} systems; the obvious basic notion---without any frills yet---is thus as follows.  

\begin{definition} \label{def:quantum-model-of-data}
	Given systems $X$ and $Y$, considered `\visible' and `\hidden', respectively, a \emph{model class} is a parametrised class of states $\{\rho_{\theta}\}_{\theta \in I}  \subseteq \states{X \otimes Y}$ for $I \subseteq \mathbb{R}^n$ for some $n \in \mathbb{N}$; each element $\rho_{\theta}$ is referred to as a model. 
	Given quantum data $\sigma \in \states{X}$ and some $\Theta \in I$ we say $\rho_{\Theta}$ is a model of the data if 	(approximately):\footnote{In practice choosing a suitable notion of quantifying the approximation is of course crucial and this would then go through all subsequent definitions. It is however irrelevant to the challenges and concerns this work studies and hence we consider equalities on the nose everywhere.}
	\begin{align}
		\tr_Y(\rho_{\Theta}) = \sigma \ . \label{eq:model-of-sigma-X}
	\end{align}
\end{definition}

Obviously, any parametrised bipartite distribution $P_{\theta}(X,Y)$ defined by some classical model as in Sec.~\ref{subsec:intro-classical-case} can be seen (to be encoded) as a special case of the above notion by choosing some product ONB of $\hilb{X \otimes Y}$, and with Eq.~\eqref{eq:model-of-sigma-X} being the generalisation of Eq.~\eqref{eq:classical-model-of-data}.  

It is worth emphasising though that given some state $\sigma\in \states{X}$ ensemble decompositions of the form $\sigma= \sum_i p_i \sigma_i$ are of course highly non-unique---an aspect, which has no analogue classically.  
According to quantum theory no (even in principle) physically implementable operation, i.e. measurement or more generally quantum instrument, can distinguish different ensemble decompositions of some fixed $\sigma$. 
Indeed, we defined a model of data in terms of $\sigma$, that is, with a condition that does \emph{not} depend on the ensemble decomposition behind a specific data source. 
So why then stipulate a data source come with a fixed ensemble decomposition $\{ (p_i, \decompindex{\sigma}{i}) \}_{i=1}^k$ at all? 
First, this is the immediate and natural generalisation of a classical data source while the independence from any specific ensemble decomposition in quantum theory is an afterthought, i.e. a consequence, rather than a desideratum for modelling. 
Second, in light of the fact that our constructions will consider \emph{non-linear} maps, it is a priori actually not entirely obvious why the choice of ensemble decomposition does not matter \rledit{(as will turn out to be the case)}, hence it will be useful for pedagogical purposes to make explicit that the individual data samples $\sampleindex{\sigma}{j}$ may not be equal to \rledit{the ensemble average---the data state} $\sigma$.

\subsection{Quantum data extension} \label{subsec:def-data-extension}

Classical information and learning theory do not distinguish between the concepts of, and methods for, data extension and (extended) inference. 
The reason is that they \emph{always} can be taken to be given by the same construction---essentially because of the existence of a universal copy map, i.e. the facts reflected in Eq.~\eqref{eq:classcial-bipartite-factorisation}. 

For the development of a general quantum framework however it is instructive to separate data extension and (extended) inference. 
They are conceptually distinct and come with logically distinct requirements---as we will see, those of the latter being strictly stronger than those of the former. 
A priori, one may simply want to study constructions for data extension only (see Sec.~\ref{sec:data-extension-not-from-inf}); and even if not, just for pedagogical purposes, making the relation between the two explicit relies on separating them in the first place.

\begin{definition} \label{def:extension-map}
	Given \visible\ and \hidden\ systems, $X$ and $Y$, a \emph{data extension map} is a map 
	$\Phi \ : \ \operators{X} \ \rightarrow \ \operators{X \otimes Y}$  
	such that $\forall \eta \in \states{X}$ it holds that $\Phi(\eta) \in \states{X \otimes Y}$ and $\tr_Y( \Phi(\eta)) = \eta$. 
\end{definition}

A data extension map takes any input state on $X$---in particular any data state $\sigma$ or sampled state $\sampleindex{\sigma}{j}$---to a joint state on $X\otimes Y$ in a way that preserves information about the input. So it really just gives a name to the requirement that a quantum generalisation of Eq.~\eqref{classical_dpi} will need to specify a \emph{joint state}, obtained from data states on $X$, so as to compare against a model state on $X\otimes Y$ via, say, quantum relative entropy. 

It thus generalises \emph{one aspect} of the classical case, where any stochastic map $P(Y|X)$ induces the map that takes any distribution $P(X)$ to $P(Y|X) P(X)$; 
this map may be written diagrammatically---by fixing the first argument in Eq.~\eqref{eq:classical-sot-like-map}---as: 
\begin{align}
	\begin{minipage}{0.9\textwidth}
		\centering
		\includegraphics[scale=0.2]{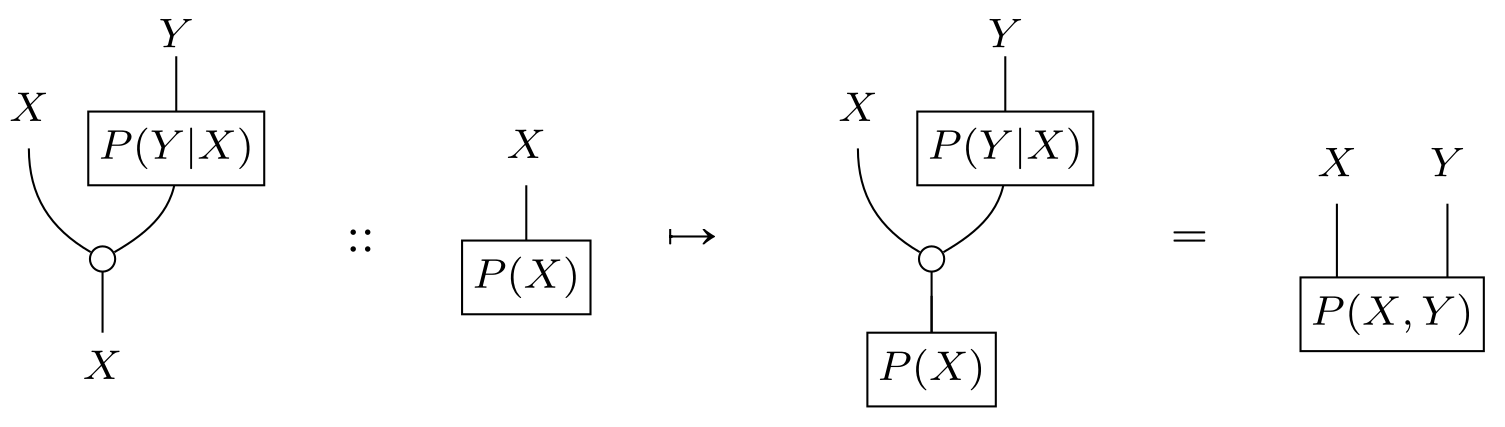}
	\end{minipage}
	\label{eq:classical-data-extension-map}
\end{align}

Note however that we have not said how any classical stochastic map could actually become a special case of the above definition, i.e. yield a map defined on all quantum states $\states{X}$, not just those diagonal in a fixed ONB of $\hilb{X}$. This aspect will be covered by our treatment of extended inference below. 

Notably, we have only required what is the minimum condition for `data extension', but have not asked for the map itself to be physical, or in any way efficiently implementable. 
In fact, demanding a data extension map be CPTP forces it to be trivial in the following sense---it would simply append a fixed state on $Y$, independent of the input state.  

\begin{proposition} \label{prop:no-braodcasting-kind-of-fact}
	Let $\mathcal{E} \in \channels{X}{X \otimes Y}$ such that $\tr_Y \circ \mathcal{E}  = \idchan_X$ then $\mathcal{E} = \idchan_X \otimes \eta$ for some $\eta \in \states{Y}$. 
\end{proposition}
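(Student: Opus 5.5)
The plan is to use a Stinespring dilation of $\mathcal{E}$ and reduce the claim to the standard fact that an isometry acting trivially on $X$ must factor through an ancilla. Write $\mathcal{E}(\rho) = \tr_E\big(V \rho V^{\dagger}\big)$ with $V: \hilb{X} \rightarrow \hilb{X} \otimes \hilb{Y} \otimes \hilb{E}$ an isometry. The hypothesis $\tr_Y \circ \mathcal{E} = \idchan_X$ then says that $W := V$, viewed as an isometry $\hilb{X} \rightarrow \hilb{X} \otimes \hilb{Y E}$ with environment $YE$, is a Stinespring dilation of the identity channel $\idchan_X$.

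\textbf{Step 1: the isometry factorises.} Use that the identity channel has the trivial dilation $\ket{x} \mapsto \ket{x}\otimes \ket{0}$, and that all Stinespring dilations are related by a partial isometry on the environment. Concretely, I would argue directly. Since $\tr_{YE}(W\rho W^\dagger)=\rho$ for all $\rho$, the map $\rho \mapsto \tr_{YE}(W\rho W^\dagger)$ has Kraus operators $K_j = (\idop_X \otimes \bra{j}_{YE}) W$ for an ONB $\{\ket{j}\}$ of $\hilb{Y E}$. Since the only Kraus representations of $\idchan_X$ are $K_j = c_j \idop_X$ with $\sum_j |c_j|^2 = 1$ (the Kraus operators of any representation are linear combinations of those of the minimal one, $\{\idop_X\}$), we get $W\ket{x} = \sum_j c_j \ket{x}\otimes\ket{j} = \ket{x}\otimes \ket{\omega}$, where $\ket{\omega} := \sum_j c_j\ket{j} \in \hilb{Y}\otimes\hilb{E}$ is a fixed unit vector.

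\textbf{Step 2: conclude.} It then follows that $\mathcal{E}(\rho) = \tr_E\big(\rho \otimes \dyad{\omega}\big) = \rho \otimes \eta$ with $\eta := \tr_E \dyad{\omega} \in \states{Y}$. This holds for all $\rho \in \operators{X}$ by linearity, so $\mathcal{E} = \idchan_X \otimes \eta$.

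\textbf{Main obstacle.} The only nontrivial step is the uniqueness of Kraus representations up to isometric mixing (unitary freedom), applied to the identity channel. If one prefers to avoid citing it, an alternative is to note that for every pure $\ket{\psi}$ the state $\mathcal{E}(\dyad{\psi})$ has pure marginal $\dyad{\psi}$ on $X$, hence is a product $\dyad{\psi}\otimes \eta_\psi$. One then shows $\eta_\psi$ is independent of $\psi$ by applying the same argument to superpositions $\alpha\ket{\psi}+\beta\ket{\phi}$ and using linearity together with the product form. Either route is routine, and I would take the Kraus route.
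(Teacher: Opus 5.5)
Your proof is correct and is essentially the paper's own argument. The paper takes a Kraus representation $\{K_i\}$ of $\mathcal{E}$, notes that $L_{ij} := \bra{j}_Y K_i$ is a Kraus representation of $\tr_Y\circ\mathcal{E} = \idchan_X$, uses the unitary freedom of Kraus representations to get $L_{ij} = c_{ij}\idop_X$, and reassembles $\mathcal{E}(\alpha) = \alpha\otimes\beta$; this is your Step~1, with your Stinespring environment $E$ playing the role of the Kraus index $i$.

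Incidentally, your fallback pure-marginal route uses only positivity and linearity, so it would directly give the stronger positive-map version, Lem.~\ref{lem:no-braodcasting-kind-of-fact-for-any-positive-map}.
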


\begin{proof} 
See \hyperref[proof:prop:no-braodcasting-kind-of-fact]{Proof in App.~B}. 
\end{proof}

\edited{It is not obvious that maps trivial in this sense are completely useless: by optimising over a parametrised family, a training procedure could make the appended state on $Y$ effectively depend on the training data, analogously to encoder-free latent optimisation in classical representation learning~\cite{BojanowskiEtAL_2018} and local variational inference in statistics~\cite{Blei2017}.}
However, a general framework surely ought to allow for some non-trivial maps, hence above definition does not require CPTPness.  
We return to questions of implementation in Sec.~\ref{subsec:general-training-setups}. 

It is worth noting here the connection to a fundamental fact about quantum theory, namely the \emph{no-broadcasting theorem} \cite{barnum1996noncommuting, barnum2007generalized}, which generalises the no-cloning theorem. As is immediate, Prop.~\ref{prop:no-braodcasting-kind-of-fact} indeed implies the impossibility of a universal broadcasting map, i.e. a channel $\mathcal{E}$ with $Y=X$ that satisfies, both, $\tr_Y \circ \mathcal{E}  = \idchan_X$ and $\tr_X \circ \mathcal{E}  = \idchan_Y$.\footnote{There also is a connection to the literature that studies the limits of CP evolution in quantum theory; there the `assignment problem' considers a consistency condition, which is precisely the marginal condition in Def.~\ref{def:extension-map} \cite{pechukas1994reduced, schmid2019initial}.}

\subsection{Quantum models with inference and generation} \label{subsec:def-inf-and-gen}

Let us finally turn to the quantum generalisation of models with inference and generation.  
Just like classically, inference and generation will be considered the same mathematical construction, the difference between them being just which roles the factors $X$ and $Y$ are considered to play. 

Now, as already pointed out in Sec.~\ref{subsec:intro-the-challenge}\hspace*{0.05cm}: while it is straightforward to say what the natural generalisation of an inference (generation) map is, namely a \emph{quantum channel} from the \visible\ to the \hidden\ system (the other way round, respectively), it is not obvious how such data is to relate to a quantum model as a \emph{joint state} over both systems as in Def.~\ref{def:quantum-model-of-data}.  
Basically, one has to ask: what is the right notion of `factorisation' such that---at least for some---quantum states in $\states{X\otimes Y}$ a generalisation of Eq.~\eqref{eq:classcial-bipartite-factorisation} does go through; and conversely, when and how can a channel $\mathcal{E} \in \channels{X}{Y}$ and a state $\eta \in \states{X}$ be combined into a state on $X \otimes Y$ that still encodes both, $\mathcal{E}$ and $\sigma$ in an analogous way to Eq.~\eqref{eq:classical-sot-like-map}?   

It turns out that the literature on \emph{states over time} (see, e.g., \cite{Leifer_2013, Horsman_2017,Fullwood_2022,Parzygnat_2023, lie2025multipartite}) has studied very similar questions. 
Although its motivation and context are somewhat different from ours, there is much overlap at a technical level. 
We will in fact adopt the terminology of a state-over-time map, though we will refer to our notion, defined below, as a \emph{weak} state-over-time map seeing as it requires only two out of the usual set of conditions.\footnote{The other conditions are not as obviously necessary for our context, in particular, there is no reason to insist on a sense of associativity since we are only studying a bipartite setup.}
Such a map $\odot$ essentially generalises the kind of map as in Eq.~\eqref{eq:classical-sot-like-map} in an appropriate way so as to complete our general quantum scheme as depicted in Fig.~\ref{fig:schema-quantum-done-properly}.

\begin{figure}[h]
\centering
\begin{tikzpicture}[>=Latex]
    \node (A) at (0.5*\scale,0) {$\bar{\sigma}_{\theta}$};
    \node (B) at (2.5*\scale,0) {$\rho_{\theta}$};
    \node (C) at (-1.5*\scale,-2*\scale) {$\sigma$};
    \node (D) at (4.5*\scale,-2*\scale) {$\rho_{\theta; Y}$};
	\node at (1.5*\scale, 0.7*\scale) {\footnotesize \color{\spacecolor} Joint space};
	\node at (6.5*\scale,-1.95*\scale) {\footnotesize \color{\spacecolor} Latent space};
    \node at (-3.4*\scale,-1.95*\scale)  {\footnotesize \color{\spacecolor} Visible space};
    \node[align=center] at (-1.5*\scale, -2.7*\scale) {\footnotesize Inference};
    \node[align=center] at (4.5*\scale, -2.7*\scale) {\footnotesize Generation};
    \draw[->] (C) -- node[left] {$\mathcal{E}_{\theta} \odot (\argline)$} (A);
    \draw[->] (D) -- node[right] {${\rleditb{\genchansymbol}}_{\theta} \odot (\argline)$} (B);
\end{tikzpicture}
\caption{Sketch of the scheme that, using an \sot{} map $\odot$, complements Fig.~\ref{fig:schema-quantum-done-naively}, generalising the classical scheme from Fig.~\ref{fig:schema-classical} appropriately to quantum theory. 
In the direction of inference it shows \emph{data extension} $\bar{\sigma}_{\theta} := \mathcal{E}_{\theta} \odot \sigma$, but \emph{extended inference} can be applied to any state, including the model's own marginal, i.e. $\rho_{\theta} = \mathcal{E}_{\theta} \odot \rho_{\theta; X}$. 
See main text for the details, including a version for approximate inference where data extension is not given by $\mathcal{E}_{\theta}$, but by a data extension map (Def.~\ref{def:extension-map}), i.e. $\bar{\sigma}_{\phi} := \Phi_{\phi}(\sigma)$. 
\label{fig:schema-quantum-done-properly}}
\end{figure}

\begin{definition} \label{def:sot-map}
 	Given finite-dimensional systems $X,Y$,  
 	a map 	
	$\odot : \cpmaps{X}{Y} \times \operators{X} \rightarrow \operators{X \otimes Y} \nonumber$, 
	where we usually write $\mathcal{E} \odot A := \odot(\mathcal{E},A)$,
 	is called a \emph{\sotlong} (\sot{}) map iff it satisfies both of the following conditions.
 	\begin{enumerate}[label=(\arabic*), leftmargin=2.3cm]
 		\item[(M)] \label{itm:condition-M} \emph{Preservation of marginals}. 
		$\forall \mathcal{E} \in \channels{X}{Y}$, $A \in \operators{X}$: 
		\ $ \tr_Y (\mathcal{E} \odot A) = A$, \\ and $\tr_X (\mathcal{E} \odot A) = \mathcal{E}(A)$. 
 		\item[(C)] \label{itm:condition-C} \emph{Preservation of classical limit}.  
		For any $\mathcal{E} \in \channels{X}{Y}$, $\sigma \in \states{X}$ such that with respect to a product ONB $\mathcal{E}$ encodes a stochastic map $P(Y|X)$ and $\eta$ a distribution $P(X)$, $\mathcal{E} \odot \eta$ is diagonal in that same product ONB and encodes $P(Y|X) P(X)$.\footnote{Equivalently, though using the Jamio{\l}kowski representation $D^J(\argline)$ of linear maps in $\linmaps{X}{Y}$, which is only introduced properly in Sec.~\ref{sec:sot-maps}, $(C)$ may be expressed as\hspace*{0.07cm}: $\forall \mathcal{E} \in \channels{X}{Y}$, $\eta \in \states{X}$ such that there exists a product ONB, with respect to which $D^J(\mathcal{E})$ and $\eta \otimes \idop_Y$ are simultaneously diagonal, it holds that $\mathcal{E} \odot \eta \ = \ D^J(\mathcal{E}) ( \eta \otimes \idop_Y)$.}
 	\end{enumerate}
\end{definition}

Condition $(C)$ ensures that we only consider genuine generalisations of the classical case, that is, it stipulates that the map in Eq.~\eqref{eq:classical-sot-like-map} is a special case of the above quantum notion---as always, once encoded via choosing a product ONB.

Condition $(M)$ may be considered the most quintessential one:\footnote{and the motivation for referring to  $\mathcal{E} \odot \eta$ as a `state over time' in the literature---thinking of a channel $\mathcal{E}$ as a time evolution map and of state $\eta$ as an initial state \label{footnote:meaning-sot}} 
just like classically, where the respective marginals of $P(Y|X)P(X)$ are $P(X)$ and $P(Y)$, the condition demands that $\mathcal{E} \odot \eta$---the supposed generalisation of a model's joint state---had better not just somehow encode correlations between states $\eta$ and their images $\mathcal{E}(\eta)$ under inference or generation, but also have both states be directly accessible and recoverable from $\mathcal{E} \odot \eta$. 

We will see concrete choices of \sot\ maps in Sec.~\ref{sec:sot-maps}, but for now continue developing the general framework. 
Crucially, given a state $\eta \in \states{X}$ and channel $\mathcal{E} \in \channels{X}{Y}$, the operator $\mathcal{E} \odot \eta$ may \emph{not} be a state---Def.~\ref{def:sot-map} does not require it and the later examples will indeed illustrate this feature.  
Yet, the whole point of having introduced \sot\ maps was to induce a notion of \emph{state} factorisation. 
Hence, one is led to the central notion of ambiguity.

\begin{definition} \label{def:sot-map-based-factorisation-of-state}
	Given \rleditc{a} state $\rho \in \states{X \otimes Y}$ and \sot\ map $\odot : \cpmaps{X}{Y} \times \operators{X} \rightarrow \operators{X \otimes Y}$, we will say that $\rho$ has a \emph{$(X,Y)$-factorisation wrt $\odot$} iff \ $\exists \mathcal{E} \in \channels{X}{Y}$ such that 
	$\rho = \mathcal{E} \odot \rho_X$. 
	We will refer to such states also as \emph{ambiguous states} and denote their collection as\footnote{Note the distinguished roles of factors is encoded in their order in subscripts, i.e. we distinguish $\ASset{\sotmap}{X}{Y}$ and $\ARCset{\sotmap}{X}{Y}{\rho}$ from $\ASset{\sotmap}{Y}{X}$ and $\ARCset{\sotmap}{Y}{X}{\rho}$, respectively, while we do not distinguish $\hilb{X} \otimes \hilb{Y}$ and $\hilb{Y} \otimes \hilb{X}$, i.e. we suppress the isomorphism between the latter.}
\begin{align}
	\ASset{\sotmap}{X}{Y} \ := \ \left\{\rho \in \states{X \otimes Y} \mid  \rho \text{ has } (X,Y)\text{-factorisation wrt} \ \odot   \right\}
\end{align}
and for any $\rho \in \states{X \otimes Y}$ the (possibly empty) set of \emph{ambiguity-realising channels} as 
\begin{align}
	\ARCset{\sotmap}{X}{Y}{\rho} := \left\{ \mathcal{E} \in \channels{X}{Y} \mid \rho = \mathcal{E} \odot \rho_X \right\}
\end{align}
\end{definition}
\rledit{The term `ambiguity' is thus owed to the fact that ambiguous states denote operators that are compatible with two distinct interpretations: they can be seen as ordinary states \emph{and} as `states over time' (see footnote \ref{footnote:meaning-sot}). It thus is an ambiguity in the same spirit as it applies to bipartite classical probability distributions, which are compatible with different spatio-temporal and causal relations between the two random variables. 

Suppose now} $\rho \in \ASset{\sotmap}{X}{Y}$, i.e. $\rho \ = \ \mathcal{E} \odot \rho_X$ for some channel $\mathcal{E}$. 
This is starting to look a lot like the generalisation of a classical model with inference, or generation, depending on which of $X$ and $Y$ are considered \visible\ and \hidden, respectively. 
However, for the pair $(\rho, \mathcal{E})$ to qualify as either it ought to meet at least one further condition, namely that the induced map 
\begin{align}
	\mathcal{E} \odot (\argline) \ & : \ \operators{X} \rightarrow \operators{X \otimes Y} \\
					& :: \ \alpha \mapsto \mathcal{E} \odot \alpha
\end{align}
map states to states again. 
We thus end up with the following basic notion of our framework. 

\begin{definition} \label{def:model-with-inf-or-gen}
	Given systems $X$ and $Y$, considered \visible\ and \hidden, respectively, a model class $\{\rho_{\theta}\}_{\theta \in I}  \subseteq \states{X \otimes Y}$ with $I \subseteq \mathbb{R}^n$ and an \sot\ map $\odot : \cpmaps{X}{Y} \times \operators{X} \rightarrow \operators{X \otimes Y}$, we say the model class has 
	\begin{itemize}
		\item \emph{inference} if $\forall \theta \in I$ the state is ambiguous as in 
		$\rho_{\theta} \in \ASset{\sotmap}{X}{Y}$ and 
		$\exists \infchansymbol_{\theta} \in \ARCset{\sotmap}{X}{Y}{\rho_{\theta}}$ such that 
		$\infchansymbol_{\theta} \odot (\argline)$ takes states to states, i.e. $\states{X} \hookrightarrow \states{X \otimes Y}$, in which case we refer to $\infchansymbol_{\theta}$ as an \emph{\infchan} and to $\infchansymbol_{\theta} \odot (\argline)$ as the \emph{\infmap};

		\item \emph{generation} if $\forall \theta \in I$ the state is ambiguous as in 
		$\rho_{\theta} \in \ASset{\sotmap}{Y}{X}$, and 
		$\exists \genchansymbol_{\theta} \in \ARCset{\sotmap}{Y}{X}{\rho_{\theta}}$ such that $\genchansymbol_{\theta} \odot (\argline)$ takes states to states, i.e. $\states{Y} \hookrightarrow \states{X \otimes Y} $, in which case we refer to $\genchansymbol_{\theta}$ as a \emph{\genchan} and to $\genchansymbol_{\theta} \odot (\argline)$ as the \emph{\genmap}.		
	\end{itemize}
\end{definition}

The requirement that $\rho_{\theta}$ is not just ambiguous, but also such that the \infchan\ $\infchansymbol_{\theta}$---which as a channel of course is positive---has a corresponding extended map $\infchansymbol_{\theta} \odot (\argline)$ into the joint system that maps states to states again\footnote{In case the extended map is linear this of course is equivalent to positivity and trace-preservation.}  will be very consequential; analogously for the generative direction with $\genchansymbol_{\theta}$ and $\genchansymbol_{\theta} \odot (\argline)$. That a good notion of model with inference (generation) ought to satisfy this condition of `positivity on all input states' stems from the fact that, just like classically, it should make sense---not least for training---to consider inference (generation) on any input state, while also the joint object that is constructed via the \sot\ map and encodes the `correlations', is well-defined. The next section will expand on this point. 

Now, one might naturally wonder why follow our seemingly involved approach, which starts from quantum generalisations of the obvious model components, i.e. inference channels and joint states, and which then faces the challenges of when and how this data is consistent, leading to the business of ambiguous states and `positive' extended inference and generation maps; why not simply start from a channel $\mathcal{C} \in \channels{X}{X \otimes Y}$, which by definition is positive and then simply declare its marginal $\infchansymbol := \tr_X \circ \mathcal{C}$ to be the inference channel, thereby ensuring that, by construction, the desired marginal condition on $Y$ holds, i.e. that for all states $\eta$ we have $\infchansymbol(\eta) = \tr_X[ \mathcal{C}(\eta)]$? 
First of all, such an approach breaks conceptually with aiming for a structurally as analogous as possible quantum generalisation of classical representation learning. 
Second, Prop.~\ref{prop:no-braodcasting-kind-of-fact} amounts to a no-go-result for such an approach: if $\mathcal{C}$ also satisfies the marginal condition $\tr_Y \circ \mathcal{C} = \idchan_X$, \rleditb{then this \infmap{} $\mathcal{C}$ is rendered trivial and the model's inference channel $\infchansymbol$ necessarily is a discard-and-prepare-a-fixed-state channel, i.e. it would be entirely useless for inference.}  
An analogous question and argument can of course be given for the generative direction.  

Note that this analysis in turn also means that the \infmap{s} and \genmap{s} in our framework cannot be expected to be CPTP maps. This is a central observation, which we will strengthen and expand on in Sec.~\ref{sec:general-results-sot-ambiguity}.

\subsection{Towards representation learning} \label{subsec:general-training-setups}

Before turning to concrete \sot\ map constructions and corresponding model classes in Sections~\ref{sec:sot-maps}-\ref{sec:main-section-on-LS-models}, let us sketch the overall picture we have arrived at and see how the notions coined thus far indeed yield quantum generalisations of the standard approaches of (unsupervised) representation learning from Sec.~\ref{subsec:intro-classical-case}.

\paragraph{Training: exact inference.}
Let $\sigma\in\states{X}$ be given quantum data and $\{\rho_{\theta}\}_{\theta \in I}$ for some $I \subseteq \mathbb{R}^n$ and some $n \in \mathbb{N}$ a model class with inference with respect to $\odot$, denoting the corresponding \infchan\ as $\infchansymbol_{\theta}$. 
Then each \infmap\ $\infchansymbol_{\theta} \odot (\argline)$ is in particular a data extension map---this is ensured by condition (M) of Def.~\ref{def:sot-map} together with the fact that Def.~\ref{def:model-with-inf-or-gen} requires $\infchansymbol_{\theta} \odot (\argline)$ to map states to states. 
Hence, one may use \infmap{s} to study what becomes the quantum analogue of \emph{exact inference}, that is, training the model through the minimisation of 
\begin{align}
	 D\big( \infchansymbol_{\theta^t} \odot \sigma \ \| \ \rho_{\theta} \big) ,
\end{align}
with respect to $\theta$ and at time $t$ and for some information-theoretical function $D$ quantifying divergence, such as the quantum relative entropy. This is the quantum generalisation of Eq.~\eqref{objective_em}.

\paragraph{Training: approximate inference.} 
Just like classically, data extension maps do not necessarily have to be induced by the (current) model itself and the above definitions allow us to also say what a quantum version of \emph{approximate inference} would be.  
Let again $\sigma$ be given quantum data and $\{\rho_{\theta}\}_{\theta \in I}$ a model class with inference with respect to $\odot$; in addition let $\{\Phi_{\phi}\}_{\phi \in I'}$ for some $I' \subseteq \mathbb{R}^m$ and $m \in \mathbb{N}$ be a family of data extension maps. 
The quantum analogue of an approximate inference training setup then is the minimisation with respect to both $\phi$ \emph{and} $\theta$ of~\footnote{NB here $\Phi_{\phi}$ could of course be given by the extended inference maps of a further model with inference, but in principle also by a completely different approach.}
\begin{align}
	 D_1\big( \Phi_{\phi}( \sigma) \ \| \ \rho_{\theta^t} \big) \qquad \text{and} \qquad D_2\big( \Phi_{\phi^t}( \sigma) \ \| \ \rho_{\theta} \big)
\end{align}
at time $t$ and for two (possibly equal) divergences $D_1$ and $D_2$. This is the quantum generalisation of Eq.~\eqref{objective_var}.

\paragraph{Trainability and non-CPTP data extension maps.}
So very generally in the framework, training a model class $\{\rho_{\theta}\}_{\theta}$, given quantum data $\sigma$, is taken to refer to the minimisation of an objective function of the form
\begin{align}	
	D\big( \Phi_{\phi}( \sigma) \ \| \ \rho_{\theta} \big) \label{eq:generic-minimisation-quantity}
\end{align}
with respect to parameters $\theta$ and $\phi$, for some choice of divergence $D$ such as the quantum relative entropy and some family of data extension maps $\Phi_{\phi}$.\footnote{Recalling Sec.~\ref{subsec:intro-the-challenge}, we always assume that $ supp\big(\Phi_{\phi}( \sigma) \big) \subseteq supp(\rho_{\theta})$.} 
Such minimisation will involve estimating gradients of Eq.~\eqref{eq:generic-minimisation-quantity}. 
Given this work's scope, not focussing on the practical side, we will not develop all the technical details needed if one were to actually run any experiments; we are laying the basic ground and framework for that. 
However, see App.~\ref{app:gradient_qre} for the generic form of some of the needed gradients. 

Recalling Prop.~\ref{prop:no-braodcasting-kind-of-fact} data extension maps cannot be expected to be CPTP; in fact, as we will establish later, not even to be linear. 
Now, by definition, the extended operator, $\Phi_{\phi}( \sigma)$, is required to be a state, but how to implement the training procedure if $\Phi_{\phi}$ is not an implementable physical map?
The reason this is not an in-principle obstruction to trainability is that one does not actually ever need to get hold of $\Phi_{\phi}( \sigma)$---estimating gradients means estimating functionals of states; and often they can be written as estimating expectation values for some observable on $\hilb{X \otimes Y}$ with respect to $\Phi_{\phi}( \sigma)$. 
If given enough data samples, drawn i.i.d. from a data source, then for certain non-CPTP or non-linear maps $\Phi_{\phi}$ it is at least in principle possible to approximate the desired gradients using standard techniques.   
See App.~\ref{app:gradient_qre} for more details on this.

\paragraph{Using a model: inference and generation.} 
As just remarked, that an \infmap\ $\infchansymbol_{\theta} \odot (\argline)$ is non-CPTP or even non-linear, is not a priori an issue, because training does not require that one can deterministically prepare the extended joint state $\infchansymbol_{\theta} \odot \sigma$, only enough statistics to estimate gradients. 
However, after training has converged, or in any case given some fixed model $\rho_{\theta}$ with inference and generation, one certainly would demand to be able to \emph{do} inference and  \emph{do} generation, that is, output \rleditb{deterministically---in a one-shot manner---the} representation on the latent space given any state $\alpha\in\states{X}$, or conversely generate a `synthetic data' state on $X$ given any state $\beta\in\states{Y}$ on the latent space.  
Indeed, by definition the marginal of an \infmap\ is the \infchan\ $\infchansymbol_{\theta}$, i.e. a CPTP map and hence, in principle  $\infchansymbol_{\theta}(\alpha)$ can always be computed. Analogously, the marginal of the \genmap\ is the \genchan\ $\genchansymbol_{\theta}$, also CPTP, and thus $\genchansymbol_{\theta}(\beta)$ is computable.

\paragraph{Generative decomposition for more efficient training.}
Finally we note that, just like it is the case classically, using a decomposition in the generative direction, i.e. via an extended generation map may in some cases simplify the parametrisation of the model and its training. 
For instance, if $\{\rho_{\theta}\}_{\theta \in I}$ is a model with inference and generation (wrt $\odot$), training via `exact inference' may take the form of minimising 
$D\big( \infchansymbol_{\theta^t} \odot \sigma \ \| \ \genchansymbol_{\theta^t} \odot  \rho_{Y;\theta} \big)$,  where $\genchansymbol_{\theta}$ is the corresponding \genchan. 
Analogously for `approximate inference'. 
In fact the model may be thought of as defined in the form $\genchansymbol_{\theta} \odot \rho_{Y;\theta}$ to start with.

\section{Examples of state-over-time maps} \label{sec:sot-maps}

The previous section stipulated in Def.~\ref{def:sot-map} the notion of \sot{} maps and argued why it is important for quantum models with inference or generation.  
In the following we will give, and start discussing, a selection of specific examples. 
To this end a little more formalism is needed.  

First, the two common choices of an isomorphism for `channel-state duality' will be useful. 
Recall that the Jamio{\l}kowski isomorphism~\cite{Jamiolkowski1972} 
    \begin{align}
    	D^J \ : \ \linmaps{X}{Y} \rightarrow \operators{X \otimes Y} \quad :: \quad 
        \mathcal{E} \quad \mapsto \quad  & D^J(\mathcal{E}) :=  \sum_{i,j} \dyad{j}{i} \otimes \mathcal{E}(\dyad{i}{j})  \\ 
        (D^J)^{-1} \ : \ \operators{X \otimes Y}  \rightarrow \linmaps{X}{Y} \quad :: \quad
        A \quad \mapsto \quad  &  \tr_{X} \left[ A \ ( \cdot \otimes \idop_Y) \right]
    \end{align}
    yields a basis-independent representation $D^J(\mathcal{E})$ for every linear map $\mathcal{E} : \operators{X} \rightarrow \operators{Y}$, though for a CP map $\mathcal{E} \in \cpmaps{X}{Y}$ the operator $D^J(\mathcal{E})$ is in general not positive semi-definite, $D^J(\mathcal{E}) \nsucceq 0$. 
In turn, given some choice of ONB of $\hilb{X}$, the Choi isomorphism~\cite{Choi1975} 
    \begin{align}
    	D^C \ : \ \linmaps{X}{Y} \rightarrow \operators{X \otimes Y} \quad :: \quad
        \mathcal{E} \quad \mapsto \quad  &  D^C(\mathcal{E}) :=  \sum_{i,j} \dyad{i}{j} \otimes \mathcal{E}(\dyad{i}{j}) \label{eq:def-Choi-iso} \\ 
        (D^C)^{-1} \ : \ \operators{X \otimes Y}  \rightarrow \linmaps{X}{Y} \quad :: \quad
        A \quad \mapsto \quad  &  \tr_{X} \left[ A \ \left( \ ( \cdot )^T \otimes \idop_Y \right) \right]
    \end{align}
 	    where $(\cdot)^T$ represents transposition with respect to the given ONB, yields a basis-dependent representation  $D^C(\mathcal{E})$, which is positive semi-definite $D^C(\mathcal{E}) \succeq 0$ iff the map is CP, $\mathcal{E} \in \cpmaps{X}{Y}$.  
 	Note that for all $\mathcal{E} \in \cpmaps{X}{Y}$ it holds that $\tr_Y\big[D^J(\mathcal{E})\big] = \idop_X = \tr_Y\big[D^C(\mathcal{E})\big]$ iff $\mathcal{E}$ is also trace-preserving, i.e. $\mathcal{E} \in \channels{X}{Y}$. 
    Also note that for every linear map $\mathcal{E}$ the two representations are related by $D^J(\mathcal{E})  = \mathcal{T}_X \big( D^C(\mathcal{E}) \big)$, and equivalently by $\mathcal{T}_X \big( D^J(\mathcal{E}) \big) = D^C(\mathcal{E})$, where $\mathcal{T}_X$ is partial transposition with respect to the basis chosen for $D^C$.  
See, e.g., \cite{frembs2024variations}  for an overview of such isomorphisms and further details. 

Second, and as is common in the context of states over time, we will refer to a map of the form
\begin{align}
	\star : \operators{H} \times \operators{H} \rightarrow \operators{H} \label{eq:def-satr-product}
\end{align}
as a \emph{star product}. 
Relevant examples are the following with $\alpha,\beta \in \operators{H}$:\footnote{Among other constructions considered in the literature, though not studied in this work, is e.g. $\alpha \star^{(n)} \beta :=  \left(\beta^{\frac{1}{2n}} \alpha^{\frac{1}{n}} \beta^{\frac{1}{2n}}\right)^n$ and in particular, $\alpha \star^{CA} \beta := \lim_{n \rightarrow \infty} \alpha \star^{(n)} \beta $ \cite{Cerf_1997, Warmuth_2005, Leifer_2013}. 
NB this generalises LS seeing as $\alpha \star^{\LSshort} \beta = \alpha  \star^{(1)} \beta$.}  
\begin{align}
	\alpha \star^{\MPshort} \beta \quad := \quad & \alpha \beta \\
   \alpha \star^{\rMPshort} \beta \quad := \quad & \beta  \alpha \\
   \alpha \star^{\JPshort} \beta \quad := \quad & \frac{1}{2} \left\{\alpha, \beta \right\} = \frac{1}{2}  \big(\alpha \beta + \beta  \alpha \big) \\ 
    \alpha \star^{\LSshort} \beta \quad  := \quad & \beta^{\frac{1}{2}} \alpha \beta^{\frac{1}{2}} 
\end{align}
So \rleditc{$\star^{\MPshort}$} is just the ordinary operator product, i.e. the composition that defines the algebra $\operators{H}$ in the first place and \rleditc{$\star^{\rMPshort}$} the same just with the reverse order. 
Note that for reasons of simplicity and an analogous treatment we presented $\star^{\LSshort}$ as a map of the form as in Eq.~\eqref{eq:def-satr-product}, but suppressed the choices such definition strictly speaking involves (concerning where square roots exist but are non-unique or do not exist at all), since they will not matter in this work. 

We now have all the ingredients to state the examples of \sot{} maps this work focuses on:
\begin{align}
	\mathcal{E} \MPodot \eta \quad  := \quad & D^J(\mathcal{E}) \ \star^{\MPshort} \ (\eta \otimes \idop_Y)  
	\ = \ D^J(\mathcal{E}) (\eta \otimes \idop_Y)  \label{eq:def-MPodot} \\
    \mathcal{E} \rMPodot \eta \quad  := \quad & D^J(\mathcal{E}) \ \star^{\rMPshort} \ (\eta \otimes \idop_Y) 
    \ = \ (\eta \otimes \idop_Y) D^J(\mathcal{E})   \label{eq:def-rMPodot} \\
    \mathcal{E} \JPodot \eta \quad  := \quad & D^J(\mathcal{E}) \ \star^{\JPshort} \ (\eta \otimes \idop_Y) 
    \ = \ \frac{1}{2} \left\{D^J(\mathcal{E}), (\eta \otimes \idop_Y) \right\}  
     \label{eq:def-JPodot}  \\
     \mathcal{E} \LSodot \eta \quad  := \quad &  D^J(\mathcal{E}) \ \star^{\LSshort} \ (\eta \otimes \idop_Y) 
     \ = \  (\eta \otimes \idop_Y)^{\frac{1}{2}} D^J(\mathcal{E}) (\eta \otimes \idop_Y)^{\frac{1}{2}} \ ,  \label{eq:def-LSodot}     
\end{align} 
Going forward we will use our convention again and for the respective right-hand sides above write $D^J(\mathcal{E}) \eta$, $D^J(\mathcal{E}) \star^{JP} \eta$ and $D^J(\mathcal{E}) \star^{LS} \eta$ etc. 
That $\LSodot$, $\JPodot$, $\MPodot$ and $\rMPodot$ indeed define \sot\ maps is easy to verify (also see  \cite{Horsman_2017}).

This choice of examples will turn out to serve us well to expose the intersection between \sot\ maps and our framework of quantum models, while a comprehensive study of the landscape of all \sot\ maps with regards to quantum models is left for future work.  
It is worth noting already some of the basic facts that distinguish the chosen ones, such as concerning linearity, hermiticity-preservation and positivity.\footnote{A map $f : \operators{H} \rightarrow \operators{H'}$ is  hermiticity-preserving iff for all $A \in \Herm{H}$ it holds that $f(A) \in \Herm{H'}$; it is positive (also say `positivity-preserving') iff $\forall A \in \operators{H}$ such that $A \succeq 0$ it holds that $f(A) \succeq 0$. 
NB for (non-)\allowbreak{linear} maps positivity does (not) imply hermiticity-preservation. 
For star-products $\star : \operators{H} \times \operators{H} \rightarrow \operators{H}$, naturally, by hermiticity-preservation we mean $A \star B \in \Herm{H}$ for all $A,B \in \Herm{H}$ and analogously, by positivity that $A \star B \succeq 0$ for all $A,B \succeq 0$.}

\begin{proposition} \label{prop:basic-properties-star-products} 
The above star-products and respective \sot\ maps have the following properties (with $\mathcal{E} \in \channels{X}{Y}$). 
	\begin{enumerate}[label=(\arabic*), leftmargin=1.3cm]
		\item Ordinary operator multiplication: 
		$\star^{\MPshort}$ is bilinear but not hermiticity-preserving. 
		In particular a map $\mathcal{E} \MPodot (\argline) : \operators{X} \rightarrow \operators{X \otimes Y}$ is linear and trace-preserving, but generally not hermiticity-preserving. Similarly for $\star^{\rMPshort}$.		

		\item The Jordan-product: $\star^{JP}$ is bilinear and hermiticity-preserving, but not positive.  
		In particular a map $\mathcal{E} \JPodot (\argline) : \operators{X} \rightarrow \operators{X \otimes Y}$ is linear, trace-preserving and hermiticity-preserving, but generally not positive.
		
		\item Leifer-Spekkens: $\star^{LS}$ is linear in the first, but non-linear in the second argument; 
		it is hermiticity-preserving and positive. 
		In particular the map $\mathcal{E} \LSodot (\argline) : \operators{X} \rightarrow \operators{X \otimes Y}$ is non-linear, hermiticity-preserving and trace-preserving; it is positive iff $D^J(\mathcal{E})$ is. 
	\end{enumerate}
\end{proposition}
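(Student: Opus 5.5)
The plan is to verify each item directly from the definitions of the star products and the \sot\ maps in Eqs.~\eqref{eq:def-MPodot}--\eqref{eq:def-LSodot}. Throughout we use the identity $\tr_Y D^J(\mathcal{E}) = \idop_X$ for $\mathcal{E}\in\channels{X}{Y}$, together with the Hermiticity of $D^J(\mathcal{E})$, which holds because $\mathcal{E}$ is hermiticity-preserving. For the latter, note that $D^J(\mathcal{E})^\dagger = \sum_{i,j}\dyad{i}{j}\otimes \mathcal{E}(\dyad{i}{j})^\dagger = \sum_{i,j}\dyad{i}{j}\otimes\mathcal{E}(\dyad{j}{i})$, which equals $D^J(\mathcal{E})$ after relabelling.

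\textbf{Trace preservation.} This is a single computation. For $\MPodot$ we have $\tr[D^J(\mathcal{E})(A\otimes\idop_Y)] = \tr_X[\tr_Y(D^J(\mathcal{E}))A] = \tr A$. The same computation covers $\rMPodot$ by cyclicity, and $\JPodot$ by averaging the two. For $\LSodot$, cyclicity of the trace gives $\tr[(A\otimes\idop)^{1/2}D^J(\mathcal{E})(A\otimes\idop)^{1/2}] = \tr[D^J(\mathcal{E})(A\otimes\idop)] = \tr A$. (Here $A\succeq0$, or square roots are taken according to whichever convention is suppressed in the text.)

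\textbf{Linearity.} Bilinearity of $\star^{\MPshort}$, $\star^{\rMPshort}$ and $\star^{\JPshort}$ is immediate. For $\star^{\LSshort}$, linearity in $\alpha$ is clear, while non-linearity in $\beta$ follows from $1\star^{\LSshort}\beta=\beta$ when $\alpha=\idop$ is fixed. This reduces the question to non-linearity of $\beta\mapsto\beta^{1/2}$, e.g.\ $(2\beta)^{1/2}\neq 2\beta^{1/2}$. For the \sot\ map $\LSodot$ itself we need a concrete non-linearity witness: take $\mathcal{E}=\idchan$ on a qubit (so $D^J$ is the swap), or simply use homogeneity. Since $(t\eta)^{1/2}=\sqrt{t}\,\eta^{1/2}$ and $D^J(\mathcal{E})$ is independent of $\eta$, we get $\mathcal{E}\LSodot(t\eta) = t\,(\mathcal{E}\LSodot\eta)$, so homogeneity does not break. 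Additivity must therefore fail instead. I would exhibit non-commuting $\eta_1,\eta_2$ and check that $(\eta_1+\eta_2)^{1/2}\neq\eta_1^{1/2}+\eta_2^{1/2}$ produces a different output. The cleanest choice is a trivial $Y$ with $\mathcal{E}$ the trace, but then $D^J=\idop$ and linearity trivially holds. Hence I would use $\mathcal{E}=\idchan_X$, for which $D^J$ is the swap $S$. The map $\eta\mapsto(\eta\otimes\idop)^{1/2}S(\eta\otimes\idop)^{1/2}$ evaluated at $\eta=\dyad{0}$, $\dyad{+}$, and their sum gives an explicit witness. Locating such a counterexample is the one step requiring a concrete choice, and I expect it to be the main (though minor) obstacle.

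\textbf{Hermiticity and positivity.} For $\star^{\MPshort}$, the product of two Hermitian matrices is Hermitian only if they commute, so a qubit example with $\sigma_x,\sigma_z$ suffices. For $\MPodot$ take $\mathcal{E}=\idchan$ on a qubit and $\eta$ non-diagonal: then $S(\eta\otimes\idop)$ is not Hermitian, since its adjoint $(\eta\otimes\idop)S = S(\idop\otimes\eta)$ differs from it. The Jordan product is Hermitian since $\{\alpha,\beta\}^\dagger=\{\alpha,\beta\}$. For non-positivity take the projectors $\alpha=\dyad{0}$, $\beta=\dyad{+}$, whose anticommutator has a negative eigenvalue. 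At the map level, use $\mathcal{E}=\idchan$, for which $\frac12\{S,\eta\otimes\idop\}$ with $\eta$ pure has a negative eigenvalue; this matches the known fact that the JP state over time of the identity channel is generally not positive. For $\star^{\LSshort}$, both properties follow from congruence, since $\beta^{1/2}\alpha\beta^{1/2}$ is Hermitian (resp.\ PSD) whenever $\alpha$ is. Finally, positivity of $\mathcal{E}\LSodot(\cdot)$: if $D^J(\mathcal{E})\succeq0$, congruence gives positivity. Conversely, evaluating at $\eta=\idop_X/d_X$ yields $D^J(\mathcal{E})/d_X$, so positivity of the map forces $D^J(\mathcal{E})\succeq0$.
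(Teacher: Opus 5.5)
Your route matches the paper's. You verify everything directly from the definitions, use the same trace computation via $\tr_Y D^J(\mathcal{E})=\idop_X$, the same qubit pair $\dyad{0},\dyad{+}$ for non-positivity of the Jordan product, and congruence for $\star^{\LSshort}$. Your map-level witnesses with $\mathcal{E}=\idchan$ (where $D^J(\idchan)$ is the swap $S$) make explicit what the paper calls ``immediate''. So does your converse via $\mathcal{E}\LSodot(\idop_X/d_X)=D^J(\mathcal{E})/d_X$, where the paper instead appeals to both directions of Lem.~\ref{lem:psd-corrected}.

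\textbf{The non-linearity argument for $\star^{\LSshort}$ is wrong as written.} You claim non-linearity in the second argument, but fixing $\alpha=\idop$ gives $\beta\mapsto\beta^{1/2}\idop\,\beta^{1/2}=\beta$, which is linear. So nothing reduces to non-linearity of $\beta\mapsto\beta^{1/2}$. The scaling test $(2\beta)^{1/2}\neq 2\beta^{1/2}$ also cannot detect anything, because $\beta\mapsto\beta^{1/2}\alpha\beta^{1/2}$ is positively $1$-homogeneous for every $\alpha$. You make exactly this observation yourself a few lines later for $\LSodot$.

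\textbf{How to fix it.} Non-linearity has to come from failure of additivity with a suitable $\alpha$. For example, take $\alpha=\dyad{+}$, $\beta_1=\dyad{0}$, $\beta_2=\dyad{1}$. Then $\beta_1^{1/2}\alpha\beta_1^{1/2}+\beta_2^{1/2}\alpha\beta_2^{1/2}=\tfrac12\idop$, whereas $(\beta_1+\beta_2)^{1/2}\alpha(\beta_1+\beta_2)^{1/2}=\dyad{+}$.

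Alternatively, your swap witness for $\LSodot$ already settles it, because $\eta\mapsto\eta\otimes\idop_Y$ is linear. That witness does work:
\begin{itemize}
\item For pure $\eta=\dyad{\psi}$ one finds $\idchan\LSodot\eta=\dyad{\psi}\otimes\dyad{\psi}$, so $\idchan\LSodot\dyad{0}+\idchan\LSodot\dyad{+}\succeq 0$.
\item $\dyad{0}+\dyad{+}$ is invertible, so $\idchan\LSodot(\dyad{0}+\dyad{+})$ is a congruence of $S$ by an invertible operator.
\item By Sylvester's law of inertia it therefore has a negative eigenvalue, so additivity fails.
\end{itemize}
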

\begin{proof} 
	Mostly obvious, but for completeness see \hyperref[proof:prop:basic-properties-star-products]{Proof in App.~B}. 
\end{proof}

Finally, the chosen \sot\ maps induce corresponding notions of ambiguous states (see Def.~\ref{def:sot-map-based-factorisation-of-state}), and Sec.~\ref{sec:ambiguous-states} will study precisely the sets $\ASset{\odot}{X}{Y}$ for $\odot \in \{\LSodot, \JPodot, \MPodot, \rMPodot\}$. 
To avoid clutter we will usually denote these sets as $\ASset{sot}{X}{Y}$ for $sot \in \{ \LSshort, \JPshort, \MPshort, \rMPshort \}$ and at times, when the direction of the factorisation $(X,Y)$ vs $(Y,X)$ is clear from context, also refer to a state simply as \LSshort{}-, \JPshort{}-, \MPshort{}- and \rMPshort{}-ambiguous.

\section{Non-trivial models need non-linear maps} \label{sec:general-results-sot-ambiguity}

This section will expose a few general facts and consequences of the framework from Sec.~\ref{sec:the-framework}, in particular that non-linearity is an inevitable feature of it. 
To this end note the following generalisation of Prop.~\ref{prop:no-braodcasting-kind-of-fact} to positive, that is not necessarily completely positive, maps.

\begin{lemma}[See {\cite[Appendix]{jordan2004dynamics}}] \label{lem:no-braodcasting-kind-of-fact-for-any-positive-map}
	Let $\mathcal{E} \in \linmaps{X}{X \otimes Y}$ be a positive and trace-preserving linear map such that $\tr_Y \circ \mathcal{E}  = \idchan_X$ then $\mathcal{E} = \idchan_X \otimes \eta$ for some $\eta \in \operators{Y}$. 
\end{lemma}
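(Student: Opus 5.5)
The plan is to pin down $\mathcal{E}$ on pure states, show the resulting $Y$-factor is independent of the pure state, and then extend by linearity, since rank-one projectors span $\operators{X}$. Positivity enters only through the images of pure states, and the marginal condition $\tr_Y \circ \mathcal{E} = \idchan_X$ then forces those images into product form. This is essentially the argument behind Prop.~\ref{prop:no-braodcasting-kind-of-fact}, with complete positivity never used.

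\emph{Step 1 (pure inputs give product outputs).} Fix a unit vector $\ket{\psi}\in\hilb{X}$ and set $P:=\mathcal{E}(\dyad{\psi})$. By positivity $P\succeq 0$, and by assumption $\tr_Y P=\dyad{\psi}$.
\begin{itemize}
\item For any $\ket{\psi^\perp}\perp\ket{\psi}$ we get $\tr\big[(\dyad{\psi^\perp}\otimes\idop_Y)P\big]=0$.
\item Since $P\succeq 0$, this gives $P(\ket{\psi^\perp}\otimes\ket{y})=0$ for all $\ket{y}$, so $supp(P)\subseteq \ket{\psi}\otimes\hilb{Y}$.
\item Hence $P=\dyad{\psi}\otimes\eta_\psi$ with $\eta_\psi:=(\bra{\psi}\otimes\idop_Y)P(\ket{\psi}\otimes\idop_Y)\succeq 0$.
\end{itemize}
Trace preservation gives $\tr\eta_\psi=1$, so in fact $\eta_\psi\in\states{Y}$.

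\emph{Step 2 (independence of $\psi$).} This is the step I expect to be the main obstacle, because linearity of $\mathcal{E}$ has to be played against the a priori $\psi$-dependence of $\eta_\psi$.
\begin{itemize}
\item Take unit vectors $\ket{\psi},\ket{\varphi}$ and a two-dimensional subspace containing both. Choose ONBs $\{\ket{\psi},\ket{\psi'}\}$ and $\{\ket{\varphi},\ket{\varphi'}\}$ of it.
\item Both sums $\dyad{\psi}+\dyad{\psi'}$ and $\dyad{\varphi}+\dyad{\varphi'}$ equal the projector onto this subspace. Linearity and Step~1 then give
\begin{align}
\dyad{\psi}\otimes\eta_\psi+\dyad{\psi'}\otimes\eta_{\psi'} \ = \ \dyad{\varphi}\otimes\eta_\varphi+\dyad{\varphi'}\otimes\eta_{\varphi'} .
\end{align}
\item Sandwiching on $X$ with $\bra{\psi}\cdot\ket{\psi}$ gives $\eta_\psi=|\langle\psi|\varphi\rangle|^2\eta_\varphi+|\langle\psi|\varphi'\rangle|^2\eta_{\varphi'}$. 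Sandwiching with $\bra{\psi'}\cdot\ket{\psi'}$ gives the analogous expression for $\eta_{\psi'}$.
\item First apply this with $\ket{\varphi}=(\ket{\psi}+\ket{\psi'})/\sqrt2$, whose partner is $\ket{\varphi'}=(\ket{\psi}-\ket{\psi'})/\sqrt2$. Then both weights equal $1/2$, so $\eta_\psi=\eta_{\psi'}=\tfrac12(\eta_\varphi+\eta_{\varphi'})$. In words: orthogonal pure states carry the same $\eta$.
\item Now return to general $\ket{\psi},\ket{\varphi}$. By the orthogonal case $\eta_\varphi=\eta_{\varphi'}$, and the first sandwich identity then collapses to $\eta_\psi=\eta_\varphi$.
\end{itemize}
So there is a single $\eta\in\states{Y}$ with $\mathcal{E}(\dyad{\psi})=\dyad{\psi}\otimes\eta$ for every pure state $\ket{\psi}$.

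\emph{Step 3 (extension).} Rank-one projectors span $\operators{X}$ over $\mathbb{C}$, for instance through polarisation of $\dyad{i}{j}$ into combinations of $\dyad{\chi}$ with $\ket{\chi}\in\{\ket{i},\ket{j},(\ket{i}+\ket{j})/\sqrt2,(\ket{i}+i\ket{j})/\sqrt2\}$. Since both $\mathcal{E}$ and $A\mapsto A\otimes\eta$ are linear and agree on these projectors, $\mathcal{E}=\idchan_X\otimes\eta$. This proves the statement, and in fact gives the stronger conclusion that $\eta$ is a state and not merely an operator in $\operators{Y}$.
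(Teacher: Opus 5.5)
Your proof is correct. The paper does not prove this lemma; it only cites the appendix of \cite{jordan2004dynamics}. So the relevant comparison is with the paper's proof of Prop.~\ref{prop:no-braodcasting-kind-of-fact}. That proof uses a Kraus representation of $\mathcal{E}$ and the unitary freedom between Kraus decompositions of $\tr_Y\circ\mathcal{E}=\idchan_X$ to force each marginal Kraus operator $L_{ij}$ to be a multiple of $\idop_X$. That step relies on complete positivity and has no counterpart for merely positive maps.

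You replace it with two facts that need only positivity on rank-one projectors, together with linearity. First, a PSD operator whose $X$-marginal is $\dyad{\psi}$ must equal $\dyad{\psi}\otimes\eta_\psi$; your kernel/support argument in Step~1 is sound. Second, comparing two orthonormal decompositions of the same rank-two projector forces $\eta_\psi$ to be constant. Step~2 checks out, and the reduction of the general case to the orthogonal one is genuinely needed when $d_X=2$: there orthogonality alone cannot connect non-orthogonal pure states, since each has a unique orthogonal partner.

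Your route gives a short, self-contained argument and the sharper conclusion $\eta\in\states{Y}$ rather than $\eta\in\operators{Y}$. That sharper form is what the paper actually uses in the proof of Thm.~\ref{thm:no-go-linear}, where it writes $\beta\in\states{Y}$.

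One minor remark: the trace-preservation hypothesis is redundant. The identity $\tr\,\mathcal{E}(A)=\tr\,\tr_Y\mathcal{E}(A)=\tr A$ already follows from the marginal condition, so $\tr\,\eta_\psi=1$ comes for free.
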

 
Next note a complementary fact concerning the following class of `trivially ambiguous' states, which always exists for any \sot\ map. 

\begin{lemma} \label{lem:general-fact-product-states}
	Let $\odot : \cpmaps{X}{Y} \times \operators{X} \rightarrow \operators{X \otimes Y}$ be an \sot\ map. 
	Then for $\rho \in \ASset{\sotmap}{X}{Y}$ the following are equivalent:  
	\begin{enumerate}[label=(\arabic*), leftmargin=2cm]
		\item it is a product state, i.e. $\rho = \rho_X \otimes \rho_Y$;  
		\item $\exists \mathcal{E} \in \ARCset{\sotmap}{X}{Y}{\rho}$ such that $\mathcal{E} = \rho_Y \circ \tr_X$, i.e. $\mathcal{E}$ is discard-and-prepare; 
		\item $\exists  \mathcal{E} \in \ARCset{\sotmap}{X}{Y}{\rho}$ such that $\mathcal{E} \odot (\argline) = \idchan_X \otimes \rho_Y$. 
	\end{enumerate}
	Moreover, for any \sot\ map $\odot$ all product states are indeed ambiguous, that is, $\forall \alpha \in \states{X}, \beta \in \states{Y}$ it holds that $\alpha \otimes \beta \in \ASset{\sotmap}{X}{Y}$. 
\end{lemma}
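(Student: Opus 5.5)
The plan is to prove the cycle of implications (1)$\Rightarrow$(2)$\Rightarrow$(3)$\Rightarrow$(1), using only conditions (M) and (C) of Def.~\ref{def:sot-map}. The "moreover" part will drop out of the first step, since that step never uses the assumption that $\rho$ is ambiguous.

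\textbf{Key step: condition (C) applied to a discard-and-prepare channel.} Fix $\alpha \in \states{X}$ and $\beta \in \states{Y}$, and set $\mathcal{E} := \beta \circ \tr_X \in \channels{X}{Y}$. Choose an ONB $\{\ket{i}\}$ of $\hilb{X}$ diagonalising $\alpha$ and an ONB $\{\ket{k}\}$ of $\hilb{Y}$ diagonalising $\beta$. In the product ONB $\{\ket{i}\otimes\ket{k}\}$ three things hold:
\begin{itemize}
\item $D^J(\mathcal{E}) = \sum_{i,j}\dyad{j}{i}\otimes \tr(\dyad{i}{j})\,\beta = \idop_X \otimes \beta$ is diagonal;
\item $\mathcal{E}$ encodes the stochastic map $P(Y|X) = P(Y)$ with $P(Y=k)=\bra{k}\beta\ket{k}$, constant in $X$;
\item $\alpha$ encodes a distribution $P(X)$.
\end{itemize}
Condition (C) then says that $\mathcal{E}\odot\alpha$ is diagonal in this basis and encodes $P(Y)P(X)$. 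That is,
\begin{align}
\mathcal{E}\odot\alpha = \alpha\otimes\beta . \label{eq:plan-discard-prepare}
\end{align}
The one point needing care is that (C) requires a single product ONB. This is available because the Jamio{\l}kowski operator of a discard-and-prepare channel is of product form $\idop_X\otimes\beta$, so the bases on $X$ and $Y$ can be chosen independently.

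\textbf{The implications.} With \eqref{eq:plan-discard-prepare} in hand, each implication is a one-liner.
\begin{itemize}
\item \emph{Moreover part.} Taking $\mathcal{E}=\beta\circ\tr_X$, \eqref{eq:plan-discard-prepare} gives $\alpha\otimes\beta = \mathcal{E}\odot(\alpha\otimes\beta)_X$, so $\alpha\otimes\beta \in \ASset{\sotmap}{X}{Y}$.
\item (1)$\Rightarrow$(2). Take $\mathcal{E} = \rho_Y\circ\tr_X$. Then \eqref{eq:plan-discard-prepare} gives $\mathcal{E}\odot\rho_X = \rho_X\otimes\rho_Y = \rho$, so $\mathcal{E}\in\ARCset{\sotmap}{X}{Y}{\rho}$.
\item (2)$\Rightarrow$(3). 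The same channel satisfies $\mathcal{E}\odot\alpha = \alpha\otimes\rho_Y$ for every $\alpha\in\states{X}$, again by \eqref{eq:plan-discard-prepare}.
\item (3)$\Rightarrow$(1). Evaluate at $\rho_X$: $\rho = \mathcal{E}\odot\rho_X = \rho_X\otimes\rho_Y$.
\end{itemize}

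\textbf{Main obstacle: the meaning of "$=$" in (3).} An \sot\ map is defined on all of $\operators{X}$ but is not assumed linear. Condition (C) only pins $\mathcal{E}\odot A$ down for states $A$. Condition (M) fixes only the marginals $\tr_Y(\mathcal{E}\odot A)=A$ and $\tr_X(\mathcal{E}\odot A)=\tr(A)\,\rho_Y$, not the full operator for non-positive $A$. So (2)$\Rightarrow$(3) can be established as an equality of maps on $\states{X}$, the domain relevant to extended inference maps in Def.~\ref{def:model-with-inf-or-gen}, and I will read (3) that way.

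If equality on all of $\operators{X}$ is wanted, my first attempt would be to apply (C) to $\mathcal{E}$ together with positive multiples of states. This handles $\alpha\succeq 0$ of any trace, assuming (C) is read for unnormalised diagonal operators. Beyond positive operators one would need an extra assumption, such as linearity in the second argument, which holds for the concrete \sot\ maps of Sec.~\ref{sec:sot-maps} but is not part of Def.~\ref{def:sot-map}.
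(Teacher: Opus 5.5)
Your proof is correct and follows the paper's argument essentially step for step. The paper also derives $(\beta\circ\tr_X)\odot\alpha=\alpha\otimes\beta$ from (C), using a product ONB in which $D^J(\beta\circ\tr_X)=\idop_X\otimes\beta$ and $\alpha\otimes\idop_Y$ are simultaneously diagonal, and then obtains the moreover part and the cycle (1)$\Rightarrow$(2)$\Rightarrow$(3)$\Rightarrow$(1) exactly as you do; your caveat about (3) is apt, since the paper likewise verifies $(\rho_Y\circ\tr_X)\odot\alpha=\alpha\otimes\rho_Y$ only for states $\alpha$ and concludes equality of maps from that, so reading (3) on $\states{X}$ is precisely what the argument establishes.
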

\begin{proof}
	See \hyperref[proof:lem:general-fact-product-states]{Proof in App.~B}. 
\end{proof}
 
Seeing as they are all in correspondence with each other we will use the attribute \emph{trivial} for all of the following objects: a state $\rho$ if it is a product state, an inference (generation) channel $\mathcal{E}$ if it is discard-and-prepare and an extended inference (generation) map $\mathcal{E} \odot (\argline)$ if it factorises into $\idchan_X$ and a fixed state. We refer to the respective objects as \emph{non-trivial} otherwise. 
NB trivialness of $\mathcal{E}$ or $\mathcal{E} \odot (\argline)$ implies that of a corresponding ambiguous state $\rho$, though not the converse---in general, not every ambiguity realising channel of a product state will be trivial. 
Though in combination with the first observation above in Lem.~\ref{lem:no-braodcasting-kind-of-fact-for-any-positive-map} such non-trivial channels cannot yield positive, yet linear extended maps. Hence, we conclude the following general fact. 

\begin{theorem} \label{thm:no-go-linear} 
	For any \sot\ map $\odot : \cpmaps{X}{Y} \times \operators{X} \rightarrow \operators{X \otimes Y}$ 
	and $\rho \in \ASset{\sotmap}{X}{Y}$ if $\mathcal{E} \odot (\argline)$ is linear and positive for some  $\mathcal{E} \in \ARCset{\sotmap}{X}{Y}{\rho}$ then $\mathcal{E}$ and $\rho$ are trivial in the above sense. 
	In other words, there are no quantum models with inference (generation) such that the inference (generation) channel is non-trivial, but the \emph{extended}  inference (generation) map is linear. 
\end{theorem}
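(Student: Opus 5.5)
The plan is to reduce the statement to Lem.~\ref{lem:no-braodcasting-kind-of-fact-for-any-positive-map} and then use the marginal condition (M) to identify the appended state. Fix $\rho \in \ASset{\sotmap}{X}{Y}$ and $\mathcal{E} \in \ARCset{\sotmap}{X}{Y}{\rho}$, and suppose $\Phi := \mathcal{E} \odot (\argline) : \operators{X} \rightarrow \operators{X \otimes Y}$ is linear and positive.

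\textbf{Step 1: $\Phi$ satisfies the hypotheses of the lemma.} By (M), $\tr_Y(\Phi(A)) = A$ for every $A \in \operators{X}$. Hence $\tr_Y \circ \Phi = \idchan_X$, and so $\tr \circ \Phi = \tr$, i.e.\ $\Phi$ is trace-preserving. Together with linearity and positivity, Lem.~\ref{lem:no-braodcasting-kind-of-fact-for-any-positive-map} gives $\Phi = \idchan_X \otimes \eta$ for some $\eta \in \operators{Y}$. We must check that the lemma's notion of positivity matches ours (positive on PSD inputs, $\operators{X} \to \operators{X\otimes Y}$); I expect this to be immediate.

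\textbf{Step 2: identify $\eta$ and $\mathcal{E}$.} Applying the second half of (M), for every $A$ we get $\mathcal{E}(A) = \tr_X(\Phi(A)) = \tr(A)\,\eta$, so $\mathcal{E} = \eta \circ \tr_X$. Because $\mathcal{E}$ is a channel, $\eta = \mathcal{E}(\alpha)$ for any state $\alpha$, hence $\eta \in \states{Y}$ and $\mathcal{E}$ is discard-and-prepare. Evaluating at $\rho_X$ gives $\rho = \Phi(\rho_X) = \rho_X \otimes \eta$. Therefore $\eta = \rho_Y$ and $\rho$ is a product state. Equivalently, condition (3) of Lem.~\ref{lem:general-fact-product-states} holds, and that lemma yields (1) and (2). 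So both $\rho$ and $\mathcal{E}$ are trivial.

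\textbf{Step 3: the model-level restatement.} For a model with inference, the extended inference map sends states to states, so it is positive on PSD operators (by scaling). If it were also linear, Steps 1--2 would force the inference channel to be trivial. The generation case is identical with the roles of $X$ and $Y$ swapped.

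The main (mild) obstacle is in Step 3: the framework only requires that the map take $\states{X}$ into $\states{X\otimes Y}$, so positivity on all PSD operators has to come from linearity and rescaling, including the zero operator. Beyond that, the proof is a direct application of the cited lemma.
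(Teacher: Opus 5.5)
Your proof is correct and follows the paper's argument. You use (M) to get $\tr_Y \circ (\mathcal{E}\odot(\argline)) = \idchan_X$, apply Lem.~\ref{lem:no-braodcasting-kind-of-fact-for-any-positive-map} to obtain $\mathcal{E}\odot(\argline) = \idchan_X \otimes \eta$, and conclude $\rho = \rho_X \otimes \eta$. Your extra use of the second half of (M) to get $\mathcal{E} = \eta \circ \tr_X$ is a genuine improvement: the paper only routes through Lem.~\ref{lem:general-fact-product-states}, whose conditions (2)--(3) are existential, so by themselves they do not show that the given $\mathcal{E}$ is discard-and-prepare.
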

\begin{proof}
	Suppose $\rho \in \ASset{\sotmap}{X}{Y}$ and $\mathcal{E} \in \ARCset{\sotmap}{X}{Y}{\rho}$ such that 
	$\mathcal{E} \odot (\argline)$ is linear and positive. 
	This together with (M) of Def.~\ref{def:sot-map} then implies, by Lem.~\ref{lem:no-braodcasting-kind-of-fact-for-any-positive-map}, that 
	$\mathcal{E} \odot (\argline) = \idchan_X \otimes \beta$ for some $\beta \in \states{Y}$. 
	By Lem.~\ref{lem:general-fact-product-states} it therefore must be the case that indeed $\rho = \rho_X \otimes \beta$. 	
\end{proof}

With the condition on an \infmap\ to map states to states being a requirement in our framework, non-trivialness of the model is traded for non-linear \infmap{s}---it is the linearity that has to be given up. 
See Sec.~\ref{subsec:general-training-setups} for why this is not an issue in principle. 
Any \sot\ map that is based on a star product $\starprod$, which is linear in the second argument, can therefore never yield non-trivial models. 
Recalling Prop.~\ref{prop:basic-properties-star-products} this thus rules out \JPshort, \MPshort\ and \rMPshort\ as interesting constructions, but leaves \LSshort\ as a candidate. 
While the above theorem identifies non-linearity of the extended inference (generation) maps as a necessary condition, it is left for future work to identify a sufficient condition for \sot\ maps that yield non-trivial models.  
That at least one such \sot\ map exists is demonstrated in Sec.~\ref{sec:main-section-on-LS-models}. 

\rledit{Finally note that Lem.~\ref{lem:no-braodcasting-kind-of-fact-for-any-positive-map} also strengthens the conclusions from Prop.~\ref{prop:no-braodcasting-kind-of-fact} for data extension maps in general, that is, regardless of whether they are given by an \infmap\ or not, any data extension map $\Phi \ : \ \operators{X} \ \rightarrow \ \operators{X \otimes Y}$ in the sense of Def.~\ref{def:extension-map} that is linear has to be \rleditb{trivial in that} $\Phi = \idchan_X \otimes \eta$ for some $\eta \in \states{Y}$.}

\section{Ambiguous states} \label{sec:ambiguous-states}

This section gives complete characterisations of the respective classes of ambiguous states for all \sot\ maps this work considers. 
Now, Sec.~\ref{sec:general-results-sot-ambiguity} already concluded that whatever the ambiguous states with respect to \MPshort, \rMPshort\ and \JPshort\ are, they can only lead to trivial models. 
Nonetheless we will present the characterisation of the respective ambiguous states for they are closely related to those of \LSshort\ and in a way that makes their exposition of pedagogical value. 
Also App.~\ref{app:algorithm} provides additional motivation, and of course one may also be interested in \sot\ maps and ambiguity independently from quantum representation learning.

\subsection{Some useful facts} \label{subsec:useful-facts}

We start with a few useful concepts and facts that will make frequent appearance throughout. 

First, recall that a state $\rho \in \states{X \otimes Y}$ satisfies the \emph{positive partial transpose} (PPT) condition with respect to the $X,Y$ factorisation iff $\mathcal{T}_X(\rho) \succeq 0$, where  $\mathcal{T}_X : \operators{X \otimes Y} \rightarrow \operators{X \otimes Y}$ denotes the \emph{partial transpose} with respect to some fixed ONB of $\hilb{X}$. 
Note though that the PPT condition is independent from the choice of ONB. 
Moreover it is symmetric in the underlying factorisation, i.e. $\mathcal{T}_X(\rho) \succeq 0$ iff $\mathcal{T}_Y(\rho) \succeq 0$. See Sec.~\ref{subsec:ambiguity-summary} for a few details on the significance of this important notion in quantum-information theory; also see, e.g., \cite{Horodecki2009, Horodecki1997}.
We will often refer to a state that satisfies the PPT condition as `being PPT' and we let `PPT' also denote the set of all PPT states (wrt a fixed tensor product space)---the context will always make it unambiguous what is being referred to.

\begin{lemma} \label{lem:psd-corrected}
For $A,B \in \operators{X}$ with $A \succeq 0$, it holds that $B \succeq 0$ implies $A^{1/2} B A^{1/2}  \succeq 0$ and conversely, $A^{1/2} B A^{1/2}  \succeq 0$ implies $\pi_A B \pi_A \succeq 0$, where $\pi_A$ is the projector onto the support of $A$.  
\end{lemma}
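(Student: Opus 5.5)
The plan is to treat both directions directly through the quadratic form $\bra{x}\,\cdot\,\ket{x}$, using only that $A$ is Hermitian PSD, so that $A^{1/2}$ is Hermitian PSD and has the same support as $A$. The first direction is a congruence argument. For any $\ket{x} \in \hilb{X}$ set $\ket{y} := A^{1/2}\ket{x}$. Since $A^{1/2}$ is Hermitian,
\begin{align}
\bra{x} A^{1/2} B A^{1/2} \ket{x} \ = \ \bra{y} B \ket{y} \ \geq \ 0 .
\end{align}
Hermiticity of $A^{1/2} B A^{1/2}$ is immediate from that of $B$, but the nonnegative quadratic form alone already suffices by the definition of PSD in Sec.~\ref{subsec:notation}.

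For the converse, I will use the pseudo-inverse $A^{-1/2}$ from Sec.~\ref{subsec:notation}, which satisfies $A^{-1/2}A^{1/2} = \pi_A = A^{1/2}A^{-1/2}$ and is Hermitian. The key step is the identity
\begin{align}
\pi_A B \pi_A \ = \ A^{-1/2}\big(A^{1/2} B A^{1/2}\big)A^{-1/2} .
\end{align}
Given this, $\pi_A B \pi_A$ is a congruence of the PSD operator $A^{1/2}BA^{1/2}$ by the Hermitian operator $A^{-1/2}$. The same argument as in the first direction, now with $\ket{y} := A^{-1/2}\ket{x}$, then gives $\pi_A B \pi_A \succeq 0$.

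There is no real obstacle. The only points needing care are two. First, $A^{1/2}$ and $A^{-1/2}$ commute and their product is exactly $\pi_A$; this is checked in a common eigenbasis of $A$, since both are functions of $A$. Second, the converse genuinely only controls the compression of $B$ to $supp(A)$. I might add a remark that $B$ itself need not be PSD, for instance when $A$ is a rank-one projector and $B$ is negative on $supp(A)^{\perp}$. This explains why the statement is phrased with $\pi_A$.
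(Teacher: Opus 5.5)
Your proof is correct and is essentially the paper's argument: both directions are congruence arguments, with the converse obtained by sandwiching $A^{1/2}BA^{1/2}$ between copies of the pseudo-inverse $A^{-1/2}$ to recover $\pi_A B \pi_A$. The only difference is cosmetic: the paper concludes via a factorisation $A^{1/2}BA^{1/2} = C^\dagger C$, giving $\pi_A B \pi_A = (CA^{-1/2})^\dagger (CA^{-1/2})$, whereas you reuse the quadratic-form step with $A^{-1/2}\ket{x}$.
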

\begin{proof} See \hyperref[proof:lem:psd-corrected]{Proof in App.~B}. 
\end{proof}

\begin{lemma} \label{lem:transpose-commutativity} 
	For $A \in \operators{X}$ Hermitian and $\mathcal{T}_X : \operators{X \otimes Y} \rightarrow \operators{X \otimes Y}$ the partial transpose on $X$ with respect to $A$'s eigenbasis, $\mathcal{T}_X$ commutes with the map $\Phi_{f(A)}(\argline) = (f(A) \otimes \idop_Y) (\argline) (f(A)^\dag \otimes \idop_Y)$ for any function $f : D \rightarrow \mathbb{C}$ with $D \subseteq \mathbb{R}$ such that $f$ is defined on the eigenvalues of $A$, i.e. $\mathcal{T}_X \circ \Phi_{f(A)} = \Phi_{f(A)} \circ \mathcal{T}_X$.
\end{lemma}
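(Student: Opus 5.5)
The plan is to compute directly in the eigenbasis of $A$, which is also the basis used to define $\mathcal{T}_X$. Write the spectral decomposition $A = \sum_k \lambda_k \dyad{k}$ with $\{\ket{k}\}$ an ONB of $\hilb{X}$ and $\lambda_k \in \mathbb{R}$ (Hermiticity). Then $f(A) = \sum_k f(\lambda_k) \dyad{k}$ is diagonal in this basis, and $f(A)^\dagger = \sum_k \overline{f(\lambda_k)} \dyad{k}$ is diagonal as well. Since $f$ is only required to be defined on the eigenvalues, nothing else about $f$ enters, and $f$ may be complex-valued. The convention is that $\mathcal{T}_X$ acts on product operators as $\mathcal{T}_X(\dyad{k}{l} \otimes B) = \dyad{l}{k} \otimes B$ and extends linearly.

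Both sides of $\mathcal{T}_X \circ \Phi_{f(A)} = \Phi_{f(A)} \circ \mathcal{T}_X$ are linear maps on $\operators{X \otimes Y}$. It therefore suffices to check the identity on the spanning set $\{\dyad{k}{l} \otimes B\}$, with $k,l$ ranging over the eigenbasis and $B \in \operators{Y}$.

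First compute $\Phi_{f(A)}(\dyad{k}{l} \otimes B)$. This equals $f(A)\dyad{k}{l} f(A)^\dagger \otimes B$. Using $f(A)\ket{k} = f(\lambda_k)\ket{k}$ and $\bra{l} f(A)^\dagger = \overline{f(\lambda_l)}\bra{l}$, it becomes $f(\lambda_k)\overline{f(\lambda_l)}\, \dyad{k}{l} \otimes B$. Applying $\mathcal{T}_X$ then gives $f(\lambda_k)\overline{f(\lambda_l)}\, \dyad{l}{k} \otimes B$.

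In the other order, $\mathcal{T}_X(\dyad{k}{l} \otimes B) = \dyad{l}{k} \otimes B$. Applying $\Phi_{f(A)}$ to this gives $f(\lambda_l)\overline{f(\lambda_k)}\, \dyad{l}{k} \otimes B$.

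Comparing the two results, the scalar in the first is $f(\lambda_k)\overline{f(\lambda_l)}$ and in the second it is $f(\lambda_l)\overline{f(\lambda_k)}$. These agree whenever $f(\lambda_k)\overline{f(\lambda_l)}$ is real, which holds for real-valued $f$ such as $x^{1/2}$ or the pseudo-inverse powers used later. It does not hold for general complex $f$: the two scalars are complex conjugates of each other.

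So the main obstacle is this complex-conjugation mismatch. I would handle it by noting that transposition in the eigenbasis sends $f(A)$ to $f(A)^T = f(A)$, since $f(A)$ is diagonal, and sends $f(A)^\dagger$ to itself likewise. Under this reading the identity to verify is $\mathcal{T}_X(M\, \omega\, M^\dagger) = (M^\dagger)^T\, \mathcal{T}_X(\omega)\, M^T$, which for diagonal $M = f(A)$ reduces to $\overline{M}\, \mathcal{T}_X(\omega)\, \overline{M}^{\dagger}$. This coincides with $\Phi_{f(A)}$ exactly when the diagonal phases are real, or more generally up to replacing $f$ by $\bar f$.

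I would therefore state the proof for real-valued $f$, which covers every use in the paper, namely $A^{1/2}$ and $A^{-1/2}$ with $A \succeq 0$. I would add a remark that for complex $f$ one obtains $\mathcal{T}_X \circ \Phi_{f(A)} = \Phi_{\bar f(A)} \circ \mathcal{T}_X$. This equals $\Phi_{f(A)} \circ \mathcal{T}_X$ whenever $|f|$ is what matters, for instance if $f(\lambda_k)\overline{f(\lambda_l)} \in \mathbb{R}$ for all $k,l$.
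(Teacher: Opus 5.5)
Your proposal is correct and is essentially the paper's own argument, a direct entrywise computation in the eigenbasis of $A$. The conjugation mismatch you flag is genuine: for complex-valued $f$ the lemma as stated fails. For example, with $f(A)=\mathrm{diag}(1,i)$ and input $\dyad{0}{1}\otimes\idop_Y$, one side gives $-i\,\dyad{1}{0}\otimes\idop_Y$ and the other gives $i\,\dyad{1}{0}\otimes\idop_Y$. The paper's proof (with $A=\sum_i \alpha_i \dyad{i}{i}$) passes over this in the step from $\sum_{i,j,k,l} f(\alpha_i) f(\alpha_j)^* \, \mathcal{T}_X\big(\bra{i,k} B \ket{j,l}\, \dyad{i,k}{j,l}\big)$ to $\sum_{i,j,k,l} f(\alpha_i) f(\alpha_j)^* \bra{j,k} B \ket{i,l}\, \dyad{i,k}{j,l}$. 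That step relabels $i\leftrightarrow j$ in the matrix element and the ket-bra but not in the coefficient, so it is only valid when $f(\alpha_i) f(\alpha_j)^*$ is real. Your restriction to real-valued $f$, together with the general identity $\mathcal{T}_X\circ\Phi_{f(A)}=\Phi_{\bar f(A)}\circ\mathcal{T}_X$, is the right repair. It suffices for every application in the paper, which only involves $\rho_X^{\pm 1/2}$, $\pi_X$ and $\eta^{1/2}$.
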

\begin{proof} See \hyperref[proof:lem:transpose-commutativity]{Proof in App.~B}. 
\end{proof}

The following is easy to verify, e.g., by plugging $\mathcal{T}_X(A) = \sum_{i,j} (\dyad{i}{j} \otimes \idop_Y) A (\dyad{i}{j} \otimes \idop_Y)$ into the corresponding expressions.
\begin{proposition}
\label{prop:partial-transpose-trace-properties}
Partial transposition and partial trace with respect to the same factor satisfy the following property: $\forall A,B \in \operators{X \otimes Y}$ $\tr_X( \mathcal{T}_X(A) B ) = \tr_X( A \mathcal{T}_X(B))$; 
and with respect to different factors they commute, that is, $\tr_Y( \mathcal{T}_X(A)) = \mathcal{T}_X(\tr_Y(A))$.
\end{proposition}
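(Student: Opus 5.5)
Both identities follow by direct computation from the explicit formula for the partial transpose quoted just before the statement, $\mathcal{T}_X(A) = \sum_{i,j} (\dyad{i}{j} \otimes \idop_Y) A (\dyad{i}{j} \otimes \idop_Y)$, with $\{\ket{i}\}$ the fixed ONB of $\hilb{X}$. By linearity of all maps involved, it suffices to check the claims on product operators $A = a \otimes a'$ and $B = b \otimes b'$ with $a,b \in \operators{X}$ and $a',b' \in \operators{Y}$. These span $\operators{X \otimes Y}$, and both sides of each identity are (bi)linear in the arguments. On such elements the formula gives $\mathcal{T}_X(a \otimes a') = a^T \otimes a'$, with $a^T = \sum_{i,j}\dyad{i}{j} a \dyad{i}{j}$ the ordinary transpose in that basis.

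For the first identity, compute both sides on product operators:
\begin{align}
\tr_X(\mathcal{T}_X(A) B) &= \tr(a^T b)\, a' b', \\
\tr_X(A\, \mathcal{T}_X(B)) &= \tr(a b^T)\, a' b'.
\end{align}
The two scalars agree because $\tr(a^T b) = \tr\big((a^T b)^T\big) = \tr(b^T a) = \tr(a b^T)$, using invariance of the trace under transposition and cyclicity. Alternatively one can work directly with the Kraus-like formula: substitute it into $\tr_X(\mathcal{T}_X(A)B)$ and use cyclicity of the partial trace over $X$ for operators of the form $(\dyad{i}{j}\otimes \idop_Y)$. Cyclicity holds for such operators since they act only on $X$. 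Relabelling $i \leftrightarrow j$ then moves the sandwiching onto $B$. This is the only step requiring any care: the cyclicity $\tr_X((C\otimes \idop)M) = \tr_X(M(C\otimes\idop))$ is valid only for operators $C$ acting on the traced factor alone, and this must be noted explicitly.

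For the second identity, $\tr_Y(\mathcal{T}_X(a\otimes a')) = \tr(a')\, a^T = \mathcal{T}_X(\tr(a')\, a) = \mathcal{T}_X(\tr_Y(a \otimes a'))$. Here $\mathcal{T}_X$ on the right acts on $\operators{X}$ as the plain transpose. Linearity then extends both identities to all of $\operators{X \otimes Y}$.

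I expect no genuine obstacle. The main point of attention is bookkeeping: keeping clear which transposition, on $X \otimes Y$ or on $X$ alone, is meant, and justifying the reduction to product operators via bilinearity.
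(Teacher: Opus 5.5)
Your proof is correct and is essentially the paper's argument. The paper only remarks that both identities follow by substituting $\mathcal{T}_X(A) = \sum_{i,j} (\dyad{i}{j} \otimes \idop_Y) A (\dyad{i}{j} \otimes \idop_Y)$, which is exactly your alternative route via cyclicity of $\tr_X$ for operators acting on $X$ alone; your reduction to product operators $a \otimes a'$ is an equally elementary way of doing the same direct check (and in the cyclicity version no relabelling $i \leftrightarrow j$ is needed, since moving the left factor $\dyad{i}{j}\otimes\idop_Y$ to the right end already gives $\tr_X(A\,\mathcal{T}_X(B))$ term by term).
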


Finally, the following fact will make frequent appearance since we do not assume anything about a state's support. 
\begin{lemma} \label{lem:support-inclusion-property}
	Let $\rho \in \operators{X \otimes Y}$ such that $\rho \succeq 0$. Then $supp(\rho) \subseteq supp(\rho_X \otimes \idop_Y)$.
\end{lemma}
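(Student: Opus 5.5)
The plan is to show that the orthogonal complement of $supp(\rho_X \otimes \idop_Y)$ lies in the kernel of $\rho$. Since $\rho \succeq 0$ is Hermitian, $supp(\rho) = \ker(\rho)^{\perp}$. Hence this kernel inclusion gives the claim after taking orthogonal complements. Let $\pi$ be the projector onto $supp(\rho_X)$ and $\pi^{\perp} = \idop_X - \pi$. Then $supp(\rho_X \otimes \idop_Y) = supp(\rho_X) \otimes \hilb{Y}$, which is the range of $\pi \otimes \idop_Y$, and its orthogonal complement is the range of $\pi^{\perp} \otimes \idop_Y$.

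The key step is the trace identity
\begin{align}
	\tr\big[(\pi^{\perp} \otimes \idop_Y)\, \rho\, (\pi^{\perp} \otimes \idop_Y)\big] \ = \ \tr\big[\pi^{\perp} \rho_X\big] \ = \ 0 ,
\end{align}
which uses cyclicity of the trace, $(\pi^{\perp})^2 = \pi^{\perp}$, and the definition of the partial trace. The last equality holds because $\pi^{\perp}$ projects onto $\ker(\rho_X)$.

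The operator $(\pi^{\perp} \otimes \idop_Y)\rho(\pi^{\perp} \otimes \idop_Y)$ is positive semi-definite, being a conjugation of $\rho \succeq 0$, and has zero trace, so it vanishes. Write $\rho = \rho^{1/2}\rho^{1/2}$. This gives $\big(\rho^{1/2}(\pi^{\perp} \otimes \idop_Y)\big)^{\dagger}\rho^{1/2}(\pi^{\perp} \otimes \idop_Y) = 0$, hence $\rho^{1/2}(\pi^{\perp} \otimes \idop_Y) = 0$ and therefore $\rho(\pi^{\perp} \otimes \idop_Y) = 0$. Thus every vector in the range of $\pi^{\perp} \otimes \idop_Y$ lies in $\ker(\rho)$. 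Taking orthogonal complements yields $supp(\rho) \subseteq supp(\rho_X \otimes \idop_Y)$.

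There is no real obstacle here. The only point needing care is the passage from a PSD operator with zero trace to the annihilation statement $\rho(\pi^{\perp} \otimes \idop_Y) = 0$, which the square-root argument above handles. Alternatively, one can argue vector by vector: for $\ket{v} = \ket{a}\otimes\ket{b}$ with $\ket{a} \in \ker(\rho_X)$, positivity of $\rho$ and the trace identity force $\bra{v}\rho\ket{v} = 0$, hence $\rho\ket{v} = 0$, and linearity extends this to all of $\ker(\rho_X) \otimes \hilb{Y}$.
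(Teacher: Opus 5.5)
Your proof is correct, but it takes a different route from the paper's. The paper builds the support up from pieces. It first treats a rank-one $\dyad{\Psi}{\Psi}$ via a Schmidt decomposition, showing that $\ket{\Psi}$ lies in the span of $\ket{i}\otimes\ket{j}$ with $\ket{i}$ ranging over the Schmidt vectors on $X$. It then proves separately that $supp(A+B) = span(supp(A)\cup supp(B))$ for PSD $A,B$. Finally it glues together the rank-one terms of an eigendecomposition of $\rho$.

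You instead work from the complement. Compressing $\rho$ onto $\ker(\rho_X)\otimes\hilb{Y}$ gives a PSD operator of trace $\tr[\pi^{\perp}\rho_X]=0$, which must therefore vanish. The factorisation through $\rho^{1/2}$ then upgrades this to $\rho(\pi^{\perp}\otimes\idop_Y)=0$. Your argument is shorter and needs neither the Schmidt decomposition nor the sum-of-supports lemma. It also yields directly the identity $\rho = (\pi\otimes\idop_Y)\rho(\pi\otimes\idop_Y)$, which is exactly how the paper uses the lemma afterwards ($\pi_X\rho\pi_X=\rho$). The paper's route instead exhibits explicitly which vectors span the support, at the cost of more bookkeeping.
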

\begin{proof} See \hyperref[proof:lem:support-inclusion-property]{Proof in App.~B}. 
\end{proof}

This implies in particular that for any $\rho \in \states{X \otimes Y}$ we have that not only $\pi_X \rho_X \pi_X = \rho_X$, but also $\pi_X \rho \pi_X = \rho$, which is, as usual, shorthand for $(\pi_X \otimes \idop_Y ) \rho (\pi_X \otimes \idop_Y )= \rho$. 

\begin{proposition} \label{prop:positive-DJ-equiv-to-PPT}
	CP maps $\mathcal{E} \in \cpmaps{X}{Y}$ with  $D^J(\mathcal{E}) \succeq 0$ and PPT states $\eta \in \states{X \otimes Y}$ are in correspondence with one another, modulo a choice of ONB of $\hilb{X}$ and constant $N > 0$, via $D^J(\mathcal{E}) \ = \ N \mathcal{T}_X \big( \eta \big)$. 
	More precisely, 
	\begin{enumerate}
		\item Given $\mathcal{E} \in \cpmaps{X}{Y}$ with $D^J(\mathcal{E}) \succeq 0$ (and a choice of ONB), 
		$\eta := \big(1 / \tr [D^C(\mathcal{E})]\big) D^C(\mathcal{E})$ yields a PPT state, for which indeed  $D^J(\mathcal{E}) \ = \ N \mathcal{T}_X \big( \eta \big) $ with $N:= \tr[D^J(\mathcal{E})]$. 
		\item Given PPT state $\eta \in \states{X \otimes Y}$ and $N > 0$ (and a choice of ONB), 
		$D^J(\mathcal{E}) \ := \ N  \mathcal{T}_X \big( \eta \big)$ defines a CP map $\mathcal{E}$ with $D^J(\mathcal{E}) \succeq 0$. 
		\item Channels with $D^J(\mathcal{E}) \succeq 0$ correspond, via $D^J(\mathcal{E}) = d_X \mathcal{T}_X \big( \eta \big)$, precisely to PPT states $\eta$ with maximally mixed marginals, $\eta_X = \frac{1}{d_X} \idop_X$  (and a choice of ONB). 
	\end{enumerate}	
\end{proposition}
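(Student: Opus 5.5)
The whole proposition rests on the identity $D^J(\mathcal{E}) = \mathcal{T}_X\big(D^C(\mathcal{E})\big)$ recalled above, together with the involutivity of $\mathcal{T}_X$ (so that equivalently $D^C(\mathcal{E}) = \mathcal{T}_X(D^J(\mathcal{E}))$). Alongside it we use three facts: $D^C(\mathcal{E}) \succeq 0$ iff $\mathcal{E}$ is CP, both isomorphisms are bijective on $\linmaps{X}{Y}$, and partial transposition preserves traces and partial traces over $Y$ (Prop.~\ref{prop:partial-transpose-trace-properties}). The PPT property we need is the one recalled above: $\mathcal{T}_X(\eta) \succeq 0$, with respect to the ONB used for $D^C$.

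\emph{Part 1.} Let $\mathcal{E}$ be CP with $D^J(\mathcal{E}) \succeq 0$. Then $D^C(\mathcal{E}) \succeq 0$ because $\mathcal{E}$ is CP. If $\mathcal{E} \neq 0$, then $N := \tr[D^C(\mathcal{E})] = \tr[D^J(\mathcal{E})] > 0$; this trace equality holds since the partial transpose preserves the trace. Hence $\eta := D^C(\mathcal{E})/N$ is a state. It is PPT because $\mathcal{T}_X(\eta) = D^J(\mathcal{E})/N \succeq 0$, and this same equation rearranges to $D^J(\mathcal{E}) = N\,\mathcal{T}_X(\eta)$. The degenerate case $\mathcal{E}=0$ should be excluded (or $N=0$ treated separately); I will state this caveat explicitly.

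\emph{Part 2.} Given a PPT state $\eta$ and $N>0$, the operator $N\mathcal{T}_X(\eta)$ is PSD, so the map $\mathcal{E} := (D^J)^{-1}\big(N\mathcal{T}_X(\eta)\big)$ is well defined and linear, with $D^J(\mathcal{E}) \succeq 0$. Its Choi operator is $D^C(\mathcal{E}) = \mathcal{T}_X(D^J(\mathcal{E})) = N\eta \succeq 0$, so $\mathcal{E}$ is CP.

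\emph{Part 3.} We combine the above with the TP criterion $\tr_Y[D^J(\mathcal{E})] = \idop_X$. Using Prop.~\ref{prop:partial-transpose-trace-properties},
\[
\tr_Y\big[d_X\mathcal{T}_X(\eta)\big] = d_X\,\mathcal{T}_X(\eta_X).
\]
Hence $\mathcal{E}$ is TP iff $\mathcal{T}_X(\eta_X) = \idop_X/d_X$, which (applying $\mathcal{T}_X$ again, since the identity is invariant under transposition) holds iff $\eta_X = \idop_X/d_X$. In the channel direction, the normalisation is forced: $N = \tr[D^J(\mathcal{E})] = \tr[\idop_X] = d_X$.

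No step is a genuine obstacle. The only points needing care are that "PPT" must be taken relative to the same ONB as the Choi isomorphism (harmless, since PPT is basis independent), the $N=0$ edge case, and keeping track of which factor is transposed.
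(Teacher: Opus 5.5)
Your proof is correct and follows the same route the paper intends when it calls the statement immediate: it combines $D^J(\mathcal{E}) = \mathcal{T}_X(D^C(\mathcal{E}))$, Choi's theorem, the fact that $\mathcal{T}_X$ preserves the trace and commutes with $\tr_Y$, and the TP criterion $\tr_Y[D^J(\mathcal{E})] = \idop_X$. Your caveat about the zero map is a legitimate small correction that the paper's statement omits: the zero map is CP with $D^J = 0 \succeq 0$, yet there is no state $\eta$ and no $N>0$ with $D^J = N\mathcal{T}_X(\eta)$.
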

\begin{proof} 
	Given all the details above, the proof of the statement is immediate. 
\end{proof}

\subsection{{\protect\MPcapital} ambiguity} \label{subsec:MP-ambiguity}

We start with arguably the simplest form of an \sot\ map, namely $\MPodot$, where the star product is given by ordinary operator composition. 
Let us say a state $\rho \in  \states{X \otimes Y}$ is \emph{\CXPPT} iff $\rho$ is PPT and $[\rho,\rho_X ]=0$. 
It turns out that this condition completely captures the ambiguous states with the appearance of this kind of commutation relation being perhaps unsurprising seeing as the product of two PSD operators is PSD iff they commute. 

\begin{theorem} \label{thm:OPA-equivalence}
	Let $\rho \in  \states{X \otimes Y}$, then the following are equivalent 
	\begin{enumerate}[label=(\arabic*), leftmargin=2cm]
		\item $\rho \in \ASset{\MPshort}{X}{Y}$.
		\item $\rho \in \ASset{\rMPshort}{X}{Y}$.
		\item $\rho$ is \CXPPT. 
	\end{enumerate}
\end{theorem}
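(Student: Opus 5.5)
The plan is to prove (1)$\Leftrightarrow$(3) directly and then obtain (2)$\Leftrightarrow$(3) by taking adjoints. Write $\rho = D^J(\mathcal{E})\,\rho_X$ for the $\MPshort$-factorisation, using the padding convention. Since $\rho$ and $\rho_X\otimes\idop_Y$ are Hermitian, we have $(D^J(\mathcal{E})\rho_X)^\dagger = \rho_X D^J(\mathcal{E})^\dagger$. Moreover $D^J(\mathcal{E})^\dagger = D^J(\mathcal{E})$ for a Hermiticity-preserving (in particular CP) $\mathcal{E}$: indeed $\mathcal{E}(\dyad{i}{j})^\dagger = \mathcal{E}(\dyad{j}{i})$. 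Hence $\rho = D^J(\mathcal{E})\rho_X$ iff $\rho = \rho_X D^J(\mathcal{E})$, so $\ARCset{\MPodot}{X}{Y}{\rho} = \ARCset{\rMPodot}{X}{Y}{\rho}$. This gives (1)$\Leftrightarrow$(2) immediately, and it suffices to show (1)$\Leftrightarrow$(3).

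\emph{(1)$\Rightarrow$(3).} Suppose $\rho = D^J(\mathcal{E})\rho_X = \rho_X D^J(\mathcal{E})$. Then
\[
\rho\rho_X = D^J(\mathcal{E})\rho_X^2 = \rho_X D^J(\mathcal{E})\rho_X = \rho_X\rho ,
\]
so $[\rho,\rho_X]=0$. Also $D^J(\mathcal{E})$ commutes with $\rho_X\otimes\idop_Y$, hence with $\rho_X^{1/2}$, so $\rho = \rho_X^{1/2} D^J(\mathcal{E})\rho_X^{1/2}$. Now take $\mathcal{T}_X$ in the eigenbasis of $\rho_X$. By Lem.~\ref{lem:transpose-commutativity},
\[
\mathcal{T}_X(\rho) = \rho_X^{1/2}\, \mathcal{T}_X\big(D^J(\mathcal{E})\big)\, \rho_X^{1/2} = \rho_X^{1/2} D^C(\mathcal{E})\rho_X^{1/2} \succeq 0
\]
by Lem.~\ref{lem:psd-corrected}, since $D^C(\mathcal{E})\succeq0$ for CP $\mathcal{E}$. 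Hence $\rho$ is PPT, and thus \CXPPT.

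\emph{(3)$\Rightarrow$(1).} This is the constructive direction and the main obstacle, since we must build a genuine channel whose factorisation holds even off the support of $\rho_X$. Let $\pi$ project onto $supp(\rho_X)$ and define
\[
J := \rho_X^{-1}\rho + \pi^\perp\otimes \tfrac{1}{d_Y}\idop_Y .
\]
The first term equals $\rho_X^{-1/2}\rho\rho_X^{-1/2}$ because $[\rho,\rho_X]=0$, so it is Hermitian. We then check the following in turn.
\begin{itemize}
\item \emph{Factorisation.} $J\rho_X = \rho_X^{-1}\rho\rho_X = \pi\rho = \rho$, using commutation and Lem.~\ref{lem:support-inclusion-property} ($\pi\rho\pi=\rho$).
\item \emph{Trace condition.} $\tr_Y J = \rho_X^{-1}\rho_X + \pi^\perp = \idop_X$, so the map $\mathcal{E} := (D^J)^{-1}(J)$ is trace-preserving.
\item \emph{Complete positivity.} We need $D^C(\mathcal{E}) = \mathcal{T}_X(J)\succeq0$, with the transpose taken in the eigenbasis of $\rho_X$. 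Lem.~\ref{lem:transpose-commutativity} gives $\mathcal{T}_X(J) = \rho_X^{-1/2}\mathcal{T}_X(\rho)\rho_X^{-1/2} + \pi^\perp\otimes\idop_Y/d_Y$, and $\mathcal{T}_X(\pi^\perp)=\pi^\perp$ since $\pi^\perp$ is diagonal in that basis. The first term is PSD by PPT and Lem.~\ref{lem:psd-corrected}; the second is PSD. The two terms live on orthogonal blocks, so $\mathcal{T}_X(J)\succeq0$.
\end{itemize}
Thus $\mathcal{E}\in\ARCset{\MPodot}{X}{Y}{\rho}$, establishing (1).

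The main care needed throughout is the bookkeeping with pseudo-inverses and supports, and choosing the transpose basis so that it commutes with functions of $\rho_X$.
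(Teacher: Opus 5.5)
Your proof is correct and follows essentially the paper's argument. Hermiticity of $\rho$ forces $D^J(\mathcal{E})$ to commute with $\rho_X$, which gives (1)$\Leftrightarrow$(2) and $[\rho,\rho_X]=0$. PPT then follows from $\mathcal{T}_X(\rho)=\rho_X^{1/2}D^C(\mathcal{E})\rho_X^{1/2}\succeq 0$ in an eigenbasis of $\rho_X$. The converse builds $D^J$ as $\rho_X^{-1/2}\rho\,\rho_X^{-1/2}$ on $supp(\rho_X)$ plus an off-support completion.

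Your $J$ is precisely the paper's $r\,\mathcal{T}_X(\eta)+D^J(\mathcal{E}'\circ\Pi_X^{\perp})$ for the particular choice of $\mathcal{E}'$ that prepares $\idop_Y/d_Y$. You simply write it directly, without the auxiliary state $\eta$ and Prop.~\ref{prop:positive-DJ-equiv-to-PPT}.
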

\begin{proof}
	See \hyperref[proof:thm:OPA-equivalence]{Proof in App.~B}. 
\end{proof}
Unsurprisingly, \MPshort\ and \rMPshort\ are the same as far as ambiguous states are concerned. 
Henceforth we will often only refer to \MPshort, but it is understood that any statement about it applies analogously to \rMPshort. 

One observes the following properties, which we shall see hold for all notions of ambiguity in this work, in particular: ambiguity-realising channels are unique (on the support of $\rho_X$) and they have non-negative Jamio{\l}kowski representation. 

\begin{proposition}\label{prop:unique-CP-map-for-OP}
	Let $\rho \in \ASset{\MPshort}{X}{Y}$, then 
	\begin{enumerate}[label=(\arabic*), leftmargin=2cm]
		\item There is a unique $\MPCPmap{\rho} \in \cpmaps{X}{Y}$ such that $\forall \mathcal{E} \in \ARCset{\MPshort}{X}{Y}{\rho}$ it holds that $\mathcal{E} \circ \Pi_X = \MPCPmap{\rho}$, 
		or equivalently 
		$\pi_X D^J(\mathcal{E}) \pi_X \ = \ D^J(\MPCPmap{\rho})$; 
		and $\{ \MPCPmap{\rho} + \mathcal{E}' \circ \Pi_X^{\perp} \mid \mathcal{E}' \in \channels{X}{Y} \} \subseteq \ARCset{\MPshort}{X}{Y}{\rho}$.
		\item $D^J(\MPCPmap{\rho} ) = \rho \rho_X^{-1}$. 
		\item $D^J(\MPCPmap{\rho}) \succeq 0 $.
	\end{enumerate}
\end{proposition}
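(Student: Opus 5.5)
Everything will be read off the defining equation $\rho = \mathcal{E}\MPodot\rho_X = D^J(\mathcal{E})\,\rho_X$ for an arbitrary $\mathcal{E}\in\ARCset{\MPshort}{X}{Y}{\rho}$, together with the facts from Sec.~\ref{subsec:useful-facts}. Let $\pi_X$ be the projector onto $supp(\rho_X)$. Since $\rho_X\rho_X^{-1}=\pi_X$, right-multiplying the defining equation by $\rho_X^{-1}$ (padded with $\idop_Y$) gives $D^J(\mathcal{E})\pi_X = \rho\rho_X^{-1}$. This expression no longer depends on which ambiguity-realising channel was chosen.

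To obtain the two-sided compression, I will use Theorem~\ref{thm:OPA-equivalence}, which says $\rho$ is \CXPPT, so $[\rho,\rho_X]=0$ and hence also $[\rho,\rho_X^{-1}]=0$ and $[\rho,\pi_X]=0$ (all are functions of $\rho_X\otimes\idop_Y$). Lemma~\ref{lem:support-inclusion-property} gives $\pi_X\rho\pi_X=\rho$. Starting from $\rho = D^J(\mathcal{E})\rho_X$, left-multiplying by $\pi_X$ gives $\rho=\pi_X D^J(\mathcal{E})\rho_X$, so
\begin{align}
\pi_X D^J(\mathcal{E})\pi_X = \rho\rho_X^{-1} .
\end{align}
Next I translate this into channel language. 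For the Jamio{\l}kowski representation, $D^J(\mathcal{E}\circ\Pi_X) = \mathcal{T}_X$-conjugated compressions; concretely, $D^J(\mathcal{E}\circ\Pi_X)=\sum_{ij}\dyad{j}{i}\otimes\mathcal{E}(\pi\dyad{i}{j}\pi)$. Choosing the ONB adapted to $\pi_X$, and using that $\pi_X$ is real and symmetric there (so $\pi^T=\pi$), this equals $\pi_X D^J(\mathcal{E})\pi_X$. This gives the stated equivalence in (1). Defining $\MPCPmap{\rho}:=\mathcal{E}\circ\Pi_X$, uniqueness follows because the right-hand side $\rho\rho_X^{-1}$ is independent of $\mathcal{E}$, and CP-ness holds because it is a composition of CP maps. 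This also proves (2).

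For the inclusion in (1), take $\mathcal{E}'' = \MPCPmap{\rho}+\mathcal{E}'\circ\Pi_X^\perp$. Its trace preservation follows because the part $\mathcal{E}\circ(\Pi_X+\Pi_X^\perp)$ has the same trace behaviour; more carefully, $\tr\circ\mathcal{E}''(A)=\tr(\pi A\pi)+\tr(\pi^\perp A\pi^\perp)$, which equals $\tr A$. I need to check that $\mathcal{E}''$ is trace-preserving on all of $\operators{X}$, not just block-diagonal inputs; this is fine since off-diagonal blocks are traceless. For complete positivity, $\mathcal{E}''$ is a sum of CP maps. Finally, $D^J(\mathcal{E}'')\rho_X = \pi_X D^J(\mathcal{E}'')\pi_X \rho_X + \pi_X^\perp(\ldots)\pi_X^\perp\rho_X$, using the block structure, and this equals $\rho\rho_X^{-1}\rho_X=\rho\pi_X=\rho$.

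For (3), I will argue that $\rho\rho_X^{-1}$ is a product of commuting PSD operators, namely $\rho$ and $\rho_X^{-1}\otimes\idop_Y$. It is therefore PSD: write it as $(\rho_X^{-1/2})\rho(\rho_X^{-1/2})$ using the commutation. The main care point throughout is the support and pseudo-inverse bookkeeping, together with verifying $D^J(\mathcal{E}\circ\Pi_X)=\pi_X D^J(\mathcal{E})\pi_X$ in a suitable basis.
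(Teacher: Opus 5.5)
Your proof is correct and follows essentially the paper's route: you right-multiply $\rho = D^J(\mathcal{E})\rho_X$ by $\rho_X^{-1}$, compress with $\pi_X$ (using $\pi_X\rho=\rho$ from Lem.~\ref{lem:support-inclusion-property}), read off uniqueness from the $\mathcal{E}$-independence of $\rho\rho_X^{-1}$, and get (3) from $[\rho,\rho_X]=0$ via Thm.~\ref{thm:OPA-equivalence}. The differences are minor. You do not need the commutation $[D^J(\mathcal{E}),\pi_X]=0$ from Lem.~\ref{lem:MPodot-lemma}, and you spell out the trace-preservation and block-structure checks that the paper calls ``clear''. Finally, $D^J(\mathcal{E}\circ\Pi_X)=\pi_X D^J(\mathcal{E})\pi_X$ holds in any basis with no transposition involved, so your remark about ``$\mathcal{T}_X$-conjugated compressions'' is superfluous, though your concrete computation is right.
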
 

\begin{proof}
	See \hyperref[proof:prop:unique-CP-map-for-OP]{Proof in App.~B}.
\end{proof}

\subsection{Leifer-Spekkens ambiguity} \label{subsec:LSA-ambiguity}

Next we turn to LS-based ambiguity, which interestingly is captured precisely by the PPT condition.  

\begin{theorem} \label{thm:LSA-PPT-theorem}
Let $\rho \in  \states{X \otimes Y}$ then  $\rho \in \ASset{\LSshort}{X}{Y}$ iff it is PPT. 
\end{theorem}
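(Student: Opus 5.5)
We prove the two implications separately. Throughout, fix $\rho \in \states{X\otimes Y}$, let $\pi_X$ be the projector onto $supp(\rho_X)$, and take the partial transpose $\mathcal{T}_X$ with respect to an eigenbasis of $\rho_X$. Since the PPT condition is basis-independent, this costs nothing. By Lem.~\ref{lem:transpose-commutativity}, $\mathcal{T}_X$ then commutes with conjugation by $\rho_X^{\pm 1/2}\otimes \idop_Y$ and by $\pi_X\otimes\idop_Y$.

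\emph{Ambiguous $\Rightarrow$ PPT.} Suppose $\rho = \mathcal{E}\LSodot\rho_X = \rho_X^{1/2} D^J(\mathcal{E})\rho_X^{1/2}$ for some channel $\mathcal{E}$. Applying $\mathcal{T}_X$ and using Lem.~\ref{lem:transpose-commutativity} gives
\[
\mathcal{T}_X(\rho) = \rho_X^{1/2}\,\mathcal{T}_X\big(D^J(\mathcal{E})\big)\,\rho_X^{1/2} = \rho_X^{1/2} D^C(\mathcal{E})\rho_X^{1/2}.
\]
Here $D^C(\mathcal{E})$ is the Choi operator in that same basis, and it is $\succeq 0$ because $\mathcal{E}$ is CP. By the first half of Lem.~\ref{lem:psd-corrected}, $\mathcal{T}_X(\rho)\succeq 0$, so $\rho$ is PPT.

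\emph{PPT $\Rightarrow$ ambiguous.} Given a PPT state $\rho$, we construct the channel explicitly. On the support of $\rho_X$ the natural candidate is
\[
D^J(\mathcal{E}_0) := \rho_X^{-1/2}\rho\,\rho_X^{-1/2}.
\]
Its partial transpose is $\rho_X^{-1/2}\mathcal{T}_X(\rho)\rho_X^{-1/2}\succeq 0$, again by Lem.~\ref{lem:transpose-commutativity}. So $\mathcal{E}_0$ has positive Choi operator and is CP. For trace behaviour, Prop.~\ref{prop:partial-transpose-trace-properties} and the fact that $\rho_X$ commutes with $\pi_X$ give
\[
\tr_Y D^J(\mathcal{E}_0) = \rho_X^{-1/2}\rho_X\rho_X^{-1/2} = \pi_X .
\]
Thus $\mathcal{E}_0$ is trace-preserving on $supp(\rho_X)$ only. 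We complete it as
\[
\mathcal{E} := \mathcal{E}_0 + \mathcal{E}'\circ\Pi_X^{\perp}
\]
for an arbitrary channel $\mathcal{E}'$, for instance discard-and-prepare. Then $D^J(\mathcal{E}) = D^J(\mathcal{E}_0) + \pi_X^\perp D^J(\mathcal{E}')\pi_X^\perp$, which has $\tr_Y = \idop_X$. It is CP because the Choi operator $D^C(\mathcal{E})$ is a sum of two PSD operators. Here one must check that $\mathcal{E}'\circ\Pi^\perp_X$ has Choi operator $(\bar\pi^\perp_X\otimes\idop) D^C(\mathcal{E}')(\bar\pi^\perp_X\otimes\idop)$. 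This is PSD, and since $\pi_X$ is diagonal in the chosen basis, $\bar\pi_X^\perp=\pi_X^\perp$.

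\emph{Verification.} Finally we check $\mathcal{E}\LSodot\rho_X = \rho$. The term $\rho_X^{1/2}\pi_X^\perp(\cdots)\pi_X^\perp\rho_X^{1/2}$ vanishes, and the remaining term gives
\[
\rho_X^{1/2}D^J(\mathcal{E}_0)\rho_X^{1/2} = \pi_X\rho\,\pi_X = \rho,
\]
where the last step uses Lem.~\ref{lem:support-inclusion-property}.

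The main obstacle is bookkeeping rather than substance. Three points need care: the pseudo-inverse and support issues when $\rho_X$ is not full rank, making sure the transpose basis is chosen compatibly with $\rho_X$ so that Lem.~\ref{lem:transpose-commutativity} applies, and confirming that the completion on $\pi_X^\perp$ preserves complete positivity. The core mechanism is the identity $\mathcal{T}_X\circ D^J = D^C$ combined with Lem.~\ref{lem:psd-corrected}.
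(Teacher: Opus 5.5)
Your proof is correct and follows essentially the same route as the paper. For the forward direction you use $\mathcal{T}_X\circ D^J = D^C$ in an eigenbasis of $\rho_X$, together with Lem.~\ref{lem:transpose-commutativity} and Lem.~\ref{lem:psd-corrected}. For the converse you take the CP map with Jamio{\l}kowski operator $\rho_X^{-1/2}\rho\,\rho_X^{-1/2}$, which is equivalently the paper's Choi operator $\gamma=\rho_X^{-1/2}\mathcal{T}_X(\rho)\rho_X^{-1/2}$; you then complete it on $\pi_X^\perp$ and verify via Lem.~\ref{lem:support-inclusion-property}, exactly as the paper does. The only minor difference is that in the forward direction you work with an arbitrary realising channel rather than the unique CP map of Prop.~\ref{prop:basic-LSA-propeties}, which is slightly more direct.
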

\begin{proof} 
	See \hyperref[proof:thm:LSA-PPT-theorem]{Proof in App.~B}. 
\end{proof}
Further discussion to contextualise this fact is postponed to the summary in Sec.~\ref{subsec:ambiguity-summary}.  
Again one finds that ambiguity-realising channels are basically unique (on the support of $\rho_X$) and that they have non-negative Jamio{\l}kowski states.  

\begin{proposition} \label{prop:basic-LSA-propeties}
	Let $\rho \in \ASset{\LSshort}{X}{Y}$.  Then 
	\begin{enumerate}[label=(\arabic*), leftmargin=2cm]
		\item There is a unique $\LSCPmap{\rho} \in \cpmaps{X}{Y}$ such that  
		$\forall \mathcal{E} \in \ARCset{\LSshort}{X}{Y}{\rho}$ it holds that 
		$\mathcal{E} \circ \Pi_X = \LSCPmap{\rho}$, or equivalently 
		$\pi_X D^J(\mathcal{E}) \pi_X \ = \ D^J(\LSCPmap{\rho})$; 
		and $\{ \LSCPmap{\rho} + \mathcal{E}' \circ \Pi_X^{\perp} \mid \mathcal{E}' \in \channels{X}{Y} \} \subseteq \ARCset{\LSshort}{X}{Y}{\rho}$.
		\item $D^J(\LSCPmap{\rho}) \ = \ \rho_X^{-1/2} \rho \ \rho_X^{-1/2}$.
		\item $D^J(\LSCPmap{\rho}) \succeq 0$.
	\end{enumerate}
\end{proposition}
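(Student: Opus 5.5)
The plan is to work directly from the defining equation $\rho = \mathcal{E} \LSodot \rho_X = \rho_X^{1/2} D^J(\mathcal{E}) \rho_X^{1/2}$, valid for any $\mathcal{E} \in \ARCset{\LSshort}{X}{Y}{\rho}$, which is non-empty since $\rho \in \ASset{\LSshort}{X}{Y}$. The steps are: first prove (1) and (2) together by inverting this equation on the support of $\rho_X$, then check the claimed family of channels, and finally deduce (3) from Thm.~\ref{thm:LSA-PPT-theorem} together with Lem.~\ref{lem:transpose-commutativity}.

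For uniqueness, multiply the defining equation on both sides by $\rho_X^{-1/2}$, meaning the pseudo-inverse tensored with $\idop_Y$. Since $\rho_X^{-1/2}\rho_X^{1/2} = \pi_X$, this gives $\pi_X D^J(\mathcal{E}) \pi_X = \rho_X^{-1/2}\rho\,\rho_X^{-1/2}$. The right-hand side does not depend on $\mathcal{E}$, so we define $\LSCPmap{\rho}$ as the linear map with $D^J(\LSCPmap{\rho}) := \rho_X^{-1/2}\rho\,\rho_X^{-1/2}$, which is (2). Next, one computes directly from the definition of $D^J$ that $D^J(\mathcal{E}\circ\Pi_X) = \pi_X D^J(\mathcal{E})\pi_X$, up to the transposition convention. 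This holds because $\Pi_X(\dyad{i}{j})$ sits in the $\dyad{j}{i}$ slot; choosing an ONB adapted to $\pi_X$ makes the identity immediate, since $\pi_X$ is diagonal there. Hence $\mathcal{E}\circ\Pi_X = \LSCPmap{\rho}$ for every ambiguity-realising channel. Complete positivity of $\LSCPmap{\rho}$ then follows because it is a composition of CP maps. Uniqueness is automatic, since $D^J$ is an isomorphism.

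For the inclusion, take $\mathcal{E}' \in \channels{X}{Y}$ and set $\mathcal{F} := \LSCPmap{\rho} + \mathcal{E}'\circ\Pi_X^\perp = \mathcal{E}\circ\Pi_X + \mathcal{E}'\circ\Pi_X^\perp$, where $\mathcal{E}$ is any element of $\ARCset{\LSshort}{X}{Y}{\rho}$. This map is CP. It is TP because $\tr[\mathcal{E}(\pi A\pi)] + \tr[\mathcal{E}'(\pi^\perp A\pi^\perp)] = \tr[A]$; note that $\Pi^\perp$ here must be read as $\pi^\perp(\argline)\pi^\perp$ for CP-ness, and the off-diagonal blocks drop out of the trace anyway. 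Since $D^J(\mathcal{F}) = \pi_X D^J(\mathcal{E})\pi_X + \pi_X^\perp D^J(\mathcal{E}')\pi_X^\perp$ and $\rho_X^{1/2}\pi_X^\perp = 0$, we get $\mathcal{F}\LSodot\rho_X = \rho_X^{1/2}D^J(\mathcal{E})\rho_X^{1/2} = \rho$, so $\mathcal{F}$ realises the ambiguity.

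For (3), Thm.~\ref{thm:LSA-PPT-theorem} gives that $\rho$ is PPT, i.e.\ $\mathcal{T}_X(\rho)\succeq 0$, where we take the partial transpose in the eigenbasis of $\rho_X$. By Lem.~\ref{lem:transpose-commutativity} with $f(t)=t^{-1/2}$ on the nonzero eigenvalues (pseudo-inverse convention), we have $\mathcal{T}_X(D^J(\LSCPmap{\rho})) = \rho_X^{-1/2}\mathcal{T}_X(\rho)\rho_X^{-1/2}\succeq 0$. Since $\mathcal{T}_X$ is an involution, this says that the Choi operator is PSD. That only re-proves CP-ness, so it is not yet (3). For (3) we instead need $D^J(\LSCPmap{\rho}) = \rho_X^{-1/2}\rho\,\rho_X^{-1/2}$ itself to be PSD, and this follows immediately from $\rho\succeq 0$ by congruence with a Hermitian operator (Lem.~\ref{lem:psd-corrected}). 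So (3) is actually the easy part. The PPT condition is what is needed for CP-ness, and the earlier argument already gives that via composition with an ambiguity-realising channel. The main obstacle I expect is therefore only bookkeeping: the transpose conventions in $D^J(\mathcal{E}\circ\Pi_X)$, and being careful that $\Pi_X^\perp$ is interpreted so that $\mathcal{E}'\circ\Pi_X^\perp$ is CP and the sum is TP.
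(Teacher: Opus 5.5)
Your proof is correct and essentially the paper's: you sandwich $\rho=\rho_X^{1/2}D^J(\mathcal{E})\rho_X^{1/2}$ with $\rho_X^{-1/2}$ to get the $\mathcal{E}$-independent identity $\pi_X D^J(\mathcal{E})\pi_X=\rho_X^{-1/2}\rho\,\rho_X^{-1/2}$, pad with $\mathcal{E}'\circ\Pi_X^\perp$ read as $\pi_X^\perp(\argline)\pi_X^\perp$ (the reading the paper's proof uses implicitly), and obtain (3) from $\rho\succeq 0$ via Lem.~\ref{lem:psd-corrected}. Drop the detour through Thm.~\ref{thm:LSA-PPT-theorem} in your last paragraph: the paper derives that theorem from this proposition, and your final argument does not need it anyway.
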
 
\begin{proof} 
	See \hyperref[proof:prop:basic-LSA-propeties]{Proof in App.~B}. 
\end{proof}

\subsection{Jordan-product-based ambiguity} \label{subsec:Jordan-product-based-ambiguity}

Finally, we discuss ambiguity with respect to \JPshort. Although the main characterisation result has appeared before in \cite{song2025bipartite}, we restate it here including a proof for reasons of self-containment and since our proof methods for all \sot\ maps naturally complement each other. 

First of all, note that the \sot\ map  $\JPodot$ is a linear combination of $\MPodot$ and $\rMPodot$, namely their average, $\mathcal{E} \JPodot \eta = \frac{1}{2} \big( \mathcal{E} \MPodot \eta + \mathcal{E} \rMPodot \eta \big)$. Even though the two summands individually fail to preserve hermiticity, their sum does preserve it (also see Prop.~\ref{prop:basic-properties-star-products}). Intuitively speaking, there can hence be more ambiguous states compared to $\MPodot$, since positivity can be obtained through the right kind of cancellation between two summands each of which may not be PSD. 

In order to state the condition on a state to be \JPshort-ambiguous, we will first need the following kind of dephasing CP map. 

\begin{proposition} \label{prop:channel-D}
	Let $\eta \in \states{X}$ and $\eta = \sum_i p_i \dyad{i}{i}$ be an eigendecomposition of it with $r=rank(\eta)$ and wlog assuming that the support of $\eta$ is spanned by the first $r$ basis vectors. 
	Then ${\rleditb{\dephchannel}}_{\eta} \ : \ \operators{X} \ \rightarrow \ \operators{X}$ defined as follows is a CP map.
	\begin{align}
					{\rleditb{\dephchannel}}_{\eta} (A) := \sum_{k,l=1}^{r} \ \frac{2 \sqrt{p_k p_l}}{p_k + p_l} \  \dyad{k}{k} A \dyad{l}{l} \label{eq:def-D-rho}
	\end{align}
\end{proposition}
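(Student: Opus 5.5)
The map $\dephchannel_{\eta}$ is a Schur (Hadamard) multiplier in the eigenbasis of $\eta$. For the matrix entries $A_{kl} = \bra{k}A\ket{l}$ we have $\bra{k}\dephchannel_{\eta}(A)\ket{l} = M_{kl} A_{kl}$ for $k,l \le r$, and the entry is $0$ otherwise, where $M_{kl} := \frac{2\sqrt{p_k p_l}}{p_k+p_l}$. I will therefore use the standard criterion that a Schur multiplier $A \mapsto M \circ A$ is completely positive iff the matrix $M$ is positive semi-definite. This is easy to verify directly. The Choi operator is $D^C(\dephchannel_{\eta}) = \sum_{k,l \le r} M_{kl}\, \dyad{k}{l} \otimes \dyad{k}{l}$. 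It is unitarily equivalent, via the isometry $\ket{k} \mapsto \ket{k}\otimes\ket{k}$, to $M \oplus 0$. Hence it is PSD iff $M \succeq 0$. Equivalently, if $M = \sum_m \ket{v_m}\bra{v_m}$, then $\dephchannel_{\eta}(A) = \sum_m V_m A V_m^{\dagger}$ with Kraus operators $V_m = \sum_{k \le r} (v_m)_k \dyad{k}{k}$.

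It therefore remains to show $M \succeq 0$ as an $r\times r$ matrix. All $p_k > 0$ for $k \le r$, so I factor $M = S C S$ with $S = \mathrm{diag}(\sqrt{p_1},\dots,\sqrt{p_r})$ and $C_{kl} = \frac{2}{p_k+p_l}$. Congruence by the invertible $S$ preserves positivity, so it suffices that $C$ is PSD. Since $p_k + p_l > 0$, I write
\begin{align}
	\frac{1}{p_k+p_l} \ = \ \int_0^\infty e^{-t p_k} e^{-t p_l} \, dt ,
\end{align}
which exhibits $C$ as a Gram matrix. Explicitly, for any $c \in \mathbb{C}^r$,
\begin{align}
	\sum_{k,l} \bar{c}_k C_{kl} c_l \ = \ 2\int_0^\infty \Big| \sum_k c_k e^{-t p_k} \Big|^2 dt \ \geq \ 0 .
\end{align}
This is the Cauchy-matrix positivity argument, and with it the proof is complete.

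The only step that needs an idea is the positivity of $C$, and the integral representation handles it cleanly. An alternative route is to note that the entries $\frac{2\sqrt{p_kp_l}}{p_k+p_l}$ are those of a Hilbert-type Gram matrix. A minor bookkeeping point is that the restriction to the support, $k,l \le r$, is essential: it keeps the denominators nonzero. Padding with zeros outside the support only adds a zero block and does not affect positivity.
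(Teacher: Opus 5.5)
Your proof is correct, but it takes a different route from the paper's.

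\textbf{The paper's route.} It rewrites the coefficients as $\frac{2\sqrt{p_kp_l}}{p_k+p_l}=\sech(\tfrac{z_k-z_l}{2})$ with $z_k=\ln p_k$. It then uses the Fourier self-duality $\sech(\omega/2)=\int_{-\infty}^{\infty}\sech(\pi t)\,e^{it\omega}\,dt$. This yields the continuous Kraus representation $\dephchannel_{\eta}(A)=\int_{-\infty}^{\infty}dt\,K_tAK_t^{\dagger}$ with $K_t=\sqrt{\sech(\pi t)}\,\eta^{it}$ on the support.

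\textbf{Your route.} You reduce complete positivity to positivity of the Schur multiplier $M$ via the Choi operator. You then establish $M\succeq 0$ through the Cauchy-matrix/Laplace identity $\frac{1}{p_k+p_l}=\int_0^\infty e^{-t(p_k+p_l)}\,dt$.

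\textbf{Comparison.} Both are Gram-matrix arguments in disguise. The paper's Gram vectors are $t\mapsto\sqrt{\sech(\pi t)}\,e^{itz_k}$ in $L^2(\mathbb{R})$; yours are $t\mapsto\sqrt{2p_k}\,e^{-tp_k}$ in $L^2(\mathbb{R}_{+})$.
\begin{itemize}
\item Yours needs only an elementary integral rather than a specific Fourier transform. It also gives a finite Kraus decomposition with at most $r$ diagonal Kraus operators.
\item Unpacked, your identity is equivalent to the continuous Kraus form $\dephchannel_{\eta}(A)=\int_0^\infty d\tau\,L_\tau A L_\tau^{\dagger}$ with $L_\tau=\eta^{1/2}e^{-\tau\eta/2}$. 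This is exactly the Lyapunov integral behind $F_\rho$ in Prop.~\ref{prop:F-rho} and Lem.~\ref{lem:F-rho-via-D}, where $\dephchannel_{\rho_X}(\rho)=\rho_X^{1/2}F_\rho\,\rho_X^{1/2}$. So your route fits naturally with the paper's own treatment of Jordan-product ambiguity.
\item The paper's route buys structural insight in return. It exhibits $\dephchannel_{\eta}$ as an average, against the probability density $\sech(\pi t)$, of the conjugations $A\mapsto\eta^{it}A\eta^{-it}$ restricted to the support. That justifies calling it a dephasing map, and it makes $\int K_t^{\dagger}K_t\,dt=\pi$ (trace preservation on the support) immediate. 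In your version the same fact follows just as quickly from $M_{kk}=1$.
\end{itemize}
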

\begin{proof} 
	See \cite{song2025bipartite}. For an alternative derivation and explicit Kraus operators see this  \hyperref[proof:prop:channel-D]{proof in App.~B}. 
\end{proof}

In keeping with our notation we will often suppress the identity channel in $ \big( {\rleditb{\dephchannel}}_{\rho_X} \otimes \idchan_Y \big) (\rho)$ and just write ${\rleditb{\dephchannel}}_{\rho_X}(\rho)$.
For $\rho \in \states{X \otimes Y}$ we say it is \textit{\DXPPT} iff $\mathcal{T}_X \big( {\rleditb{\dephchannel}}_{\rho_X} (\rho) \big) \ \succeq \ 0$ for $\mathcal{T}_X$ with respect to some ONB. 
Note that the two maps commute and hence $\mathcal{T}_X \big( {\rleditb{\dephchannel}}_{\rho_X} (\rho) \big) \ \succeq \ 0$ iff ${\rleditb{\dephchannel}}_{\rho_X} \big( \mathcal{T}_X  (\rho) \big) \ \succeq \ 0$.\footnote{The commutation relation is straightforward to check using $\mathcal{T}_X(A) = \sum_{i,j} (\dyad{i}{j} \otimes \idop_Y) A (\dyad{i}{j} \otimes \idop_Y)$ and noting the symmetry of the fraction in Eq.~\eqref{eq:def-D-rho} under swapping $k$ and $l$. Also see \cite{song2025bipartite}.} 
We can now state the characterisation theorem.

\begin{theorem}[\cite{song2025bipartite}] \label{thm:JPA-equivalence-with-DPPT} 
	Let $\rho \in \states{X \otimes Y}$. Then $\rho \in \ASset{\JPshort}{X}{Y}$ iff it is \DXPPT. 
\end{theorem}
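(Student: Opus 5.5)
The plan is to solve the Jordan-product equation $\rho = \frac12\{D^J(\mathcal{E}), \rho_X\}$ for $D^J(\mathcal{E})$ explicitly in the eigenbasis of $\rho_X$, and then translate complete positivity of $\mathcal{E}$ into a positivity condition on $\rho$.

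Fix an eigendecomposition $\rho_X=\sum_i p_i\dyad{i}{i}$ with support spanned by the first $r$ vectors. All partial transposes $\mathcal{T}_X$ are taken in this ONB.

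First I restrict to the support. By Lem.~\ref{lem:support-inclusion-property}, $\rho = \pi_X\rho\pi_X$. Write $J := D^J(\mathcal{E})$ and block-decompose with respect to $\dyad{k}{l}\otimes(\argline)$. The equation $\rho = J\star^{JP}\rho_X$ then reads
\begin{align}
	\bra{k}\rho\ket{l} = \frac{p_k+p_l}{2}\,\bra{k}J\ket{l}, \label{eq:JP-blocks}
\end{align}
where $\bra{k}\cdot\ket{l}$ denotes the operator on $\hilb{Y}$. For $k,l\le r$ this determines $J$ uniquely. For $k\le r<l$ it forces $\bra{k}J\ket{l}=0$ whenever $p_k>0$, since $\bra{k}\rho\ket{l}=0$. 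The block with $k,l>r$ is unconstrained. Hence on the support we must have
\begin{align}
	\pi_X J\pi_X=\sum_{k,l\le r}\frac{2}{p_k+p_l}\dyad{k}{k}\rho\dyad{l}{l},
\end{align}
and $J$ is block diagonal with respect to $\pi_X,\pi_X^\perp$.

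Next I translate the channel condition. $\mathcal{E}$ is CP iff $D^C(\mathcal{E})=\mathcal{T}_X(J)\succeq0$, and TP iff $\tr_Y J=\idop_X$. By block diagonality, the $\pi_X^\perp$ block can be chosen as the Jamiołkowski operator of any channel restricted to $\pi_X^\perp$, e.g. $\pi_X^\perp\otimes\tau$ for a state $\tau$. That block is PSD after partial transpose and has the right partial trace, so it only remains to analyse the support block. Its trace condition is automatic, since $\tr_Y\bra{k}\rho\ket{l}=p_k\delta_{kl}$, giving $\tr_Y \pi_X J \pi_X = \pi_X$.

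The key algebraic step relates $\pi_X J\pi_X$ to $\dephchannel_{\rho_X}(\rho)$. Conjugating $\pi_X J \pi_X$ by $\rho_X^{1/2}$ (a map that commutes with $\mathcal{T}_X$ by Lem.~\ref{lem:transpose-commutativity}) gives
\begin{align}
	\rho_X^{1/2}\,\pi_X J\pi_X\,\rho_X^{1/2}=\sum_{k,l\le r}\frac{2\sqrt{p_kp_l}}{p_k+p_l}\dyad{k}{k}\rho\dyad{l}{l}=\dephchannel_{\rho_X}(\rho).
\end{align}
Applying Lem.~\ref{lem:psd-corrected} to the partial transposes, with $A=\rho_X\otimes\idop_Y$ whose support projector is $\pi_X$, yields $\mathcal{T}_X(\pi_X J\pi_X)\succeq0$ iff $\mathcal{T}_X(\dephchannel_{\rho_X}(\rho))\succeq0$. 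The forward direction of this equivalence is immediate. For the converse, note that $\mathcal{T}_X(\pi_XJ\pi_X)$ is itself supported on $\pi_X$, so $\pi_X$-compression changes nothing. Equivalently, one can invert the conjugation by $\rho_X^{-1/2}$ on the support, and $\mathcal{T}_X$ preserves this support.

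Combining the steps: if $\rho$ is JP-ambiguous, the forced form of $\pi_X J\pi_X$ together with $\mathcal{T}_X(J)\succeq0$ (compressed by $\pi_X$) gives DX-PPT. Conversely, if $\rho$ is DX-PPT, define $J$ by the formula above on the support plus the $\pi_X^\perp$ block, which yields a channel satisfying Eq.~\eqref{eq:JP-blocks}. Finally, I check that the off-diagonal blocks with $k\le r<l$ vanish on both sides, so that $\mathcal{E}\JPodot\rho_X=\rho$.

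The main obstacle is the careful support bookkeeping in the converse direction of the PSD equivalence. This includes confirming that $\mathcal{T}_X$ taken in the eigenbasis of $\rho_X$ preserves the $\pi_X$-block structure. It also includes confirming that the result does not depend on basis choice, since PPT-type conditions are basis independent up to local unitary conjugation.
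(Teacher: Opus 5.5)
Your proof is correct and follows essentially the paper's route. Both arguments identify the support block of $D^J(\mathcal{E})$ as $\rho_X^{-1/2}\,\dephchannel_{\rho_X}(\rho)\,\rho_X^{-1/2}$, commute $\mathcal{T}_X$ (taken in the eigenbasis of $\rho_X$) past the $\rho_X^{\pm 1/2}$ conjugation, invoke Lem.~\ref{lem:psd-corrected}, and pad with a channel acting on $\pi_X^{\perp}$. The only difference is how that block is found: the paper uses the integral solution $F_\rho$ of the Lyapunov equation plus a separate uniqueness argument, whereas you read it off the entrywise equation directly, which also gives you uniqueness and the vanishing of the blocks between $\pi_X$ and $\pi_X^{\perp}$ at once.
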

\begin{proof} 
	See \hyperref[proof:thm:JPA-equivalence-with-DPPT]{Proof in App.~B}. 
\end{proof}

Evidently, PPT implies \DXPPT. That the converse is not true is not entirely obvious, but indeed the case---see App.~\ref{app:example-DPPT-but-not-PPT} for an example to this effect. 

Lastly, we again note the properties analogous to those in Props.~\ref{prop:unique-CP-map-for-OP} and \ref{prop:basic-LSA-propeties}. 

\begin{proposition} \label{prop:JPA-channel-uniqueness}
	Let $\rho \in \ASset{\JPshort}{X}{Y}$.  Then 
	\begin{enumerate}[label=(\arabic*), leftmargin=2cm]
		\item There is a unique $\JPCPmap{\rho} \in \cpmaps{X}{Y}$ such that 
	$\forall \mathcal{E} \in \ARCset{\JPshort}{X}{Y}{\rho}$ it holds that 
	$\mathcal{E} \circ \Pi_X = \JPCPmap{\rho}$, or equivalently 
	$\pi_X D^J(\mathcal{E}) \pi_X \ = \ D^J(\JPCPmap{\rho})$, where $\pi_X$ is wrt $\rho_X$;  
	and $\{ \JPCPmap{\rho} + \mathcal{E}' \circ \Pi_X^{\perp} \mid \mathcal{E}' \in \channels{X}{Y} \} \subseteq \ARCset{\JPshort}{X}{Y}{\rho}$.
		\item $D^J(\JPCPmap{\rho}) \ = \ \rho_X^{-1/2} \ {\rleditb{\dephchannel}}_{\rho_X} (\rho) \  \rho_X^{-1/2}$.
		\item $D^J(\JPCPmap{\rho}) \succeq 0$.
	\end{enumerate}
\end{proposition}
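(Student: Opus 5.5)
The plan is to compute everything block-wise in an eigenbasis of $\rho_X$. Fix an eigendecomposition $\rho_X=\sum_k p_k\dyad{k}{k}$ with $p_1,\dots,p_r>0$ and $p_k=0$ for $k>r$. For any $\mathcal{E}\in\linmaps{X}{Y}$ write $D:=D^J(\mathcal{E})=\sum_{k,l}\dyad{k}{l}\otimes D_{kl}$ with blocks $D_{kl}:=\mathcal{E}(\dyad{l}{k})\in\operators{Y}$, and similarly $\rho=\sum_{k,l}\dyad{k}{l}\otimes\rho_{kl}$.

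\textbf{Key equation.} Right multiplication by $\rho_X\otimes\idop_Y$ scales block $(k,l)$ by $p_l$, and left multiplication scales it by $p_k$. Hence $\rho=\mathcal{E}\JPodot\rho_X$ is equivalent to
\begin{align}
	\rho_{kl} \ = \ \tfrac{1}{2}(p_k+p_l)\, D_{kl} \qquad \forall k,l .
\end{align}

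\textbf{Uniqueness (part 1).} Let $\mathcal{E}\in\ARCset{\JPshort}{X}{Y}{\rho}$.
\begin{itemize}
	\item For $k,l\le r$ we have $p_k+p_l>0$, so $D_{kl}=2\rho_{kl}/(p_k+p_l)$. Thus $\pi_X D\pi_X$ is fixed by $\rho$ alone.
	\item For $k>r\ge l$, Lem.~\ref{lem:support-inclusion-property} (i.e. $\pi_X\rho\pi_X=\rho$) gives $\rho_{kl}=0$. The key equation then reads $\tfrac{p_l}{2}D_{kl}=0$, so $D_{kl}=0$; symmetrically for $l>r\ge k$.
	\item Consequently $D=\pi_XD\pi_X+\pi_X^\perp D\pi_X^\perp$, i.e. $\mathcal{E}=\mathcal{E}\circ\Pi_X+\mathcal{E}\circ\Pi_X^\perp$, and $\mathcal{E}\circ\Pi_X$ has Jamio{\l}kowski operator $\pi_XD\pi_X$.
\end{itemize}
Define $\JPCPmap{\rho}$ as the map with $D^J(\JPCPmap{\rho})=\pi_XD\pi_X$, which is independent of the choice of $\mathcal{E}$ by the first bullet. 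It is CP, being the composition $\mathcal{E}\circ\Pi_X$ of CP maps. Non-emptiness of $\ARCset{\JPshort}{X}{Y}{\rho}$ holds because $\rho$ is assumed ambiguous, so this definition is meaningful.

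\textbf{The family $\JPCPmap{\rho}+\mathcal{E}'\circ\Pi_X^\perp$.} Take any $\mathcal{E}'\in\channels{X}{Y}$.
\begin{itemize}
	\item CP: it is a sum of CP maps.
	\item TP: on $\operators{X}$ it equals $\mathcal{E}\circ\Pi_X+\mathcal{E}'\circ\Pi_X^\perp$. Its trace is $\tr(\pi_X A\pi_X)+\tr(\pi_X^\perp A\pi_X^\perp)=\tr A$, using that the cross terms $\pi_XA\pi_X^\perp$ are traceless.
	\item It realises $\rho$: its Jamio{\l}kowski operator has zero $(k,l)$-blocks whenever $k>r\ge l$ or $l>r\ge k$. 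Blocks with $k,l>r$ get multiplied by $p_k+p_l=0$. Blocks with $k,l\le r$ coincide with those of $D$. So the key equation holds.
\end{itemize}

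\textbf{Formula (part 2).} For $k,l\le r$,
\begin{align}
	\frac{2}{p_k+p_l}\rho_{kl} \ = \ p_k^{-1/2}\,\frac{2\sqrt{p_kp_l}}{p_k+p_l}\,\rho_{kl}\,p_l^{-1/2}.
\end{align}
Read block-wise, this is exactly $\rho_X^{-1/2}\,(\dephchannel_{\rho_X}\otimes\idchan_Y)(\rho)\,\rho_X^{-1/2}$ by Eq.~\eqref{eq:def-D-rho}, with the pseudo-inverse convention implementing the restriction to the support.

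\textbf{Positivity (part 3).} This step looks like the obstacle, since $D^J$ of a CP map need not be PSD. It is resolved by the formula just obtained. $\dephchannel_{\rho_X}$ is CP (Prop.~\ref{prop:channel-D}), so $(\dephchannel_{\rho_X}\otimes\idchan_Y)(\rho)\succeq0$. The first half of Lem.~\ref{lem:psd-corrected}, applied with $A=\rho_X^{-1}\otimes\idop_Y\succeq0$ (pseudo-inverse), then gives $D^J(\JPCPmap{\rho})\succeq0$.
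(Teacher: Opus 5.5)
Your proof is correct, and its core coincides with the paper's. Both reduce $\rho=\mathcal{E}\JPodot\rho_X$ to the Lyapunov equation $\tfrac12\{D^J(\mathcal{E}),\rho_X\}=\rho$. Both observe that this equation fixes $D^J(\mathcal{E})$ only on the support of $\rho_X$, and both identify the solution there with $\rho_X^{-1/2}\dephchannel_{\rho_X}(\rho)\rho_X^{-1/2}$. Part (3) then follows in both from complete positivity of $\dephchannel_{\rho_X}$ together with Lem.~\ref{lem:psd-corrected}.

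The difference is in how you reach uniqueness and the formula. The paper proceeds in four steps:
\begin{enumerate}
\item it uses $\pi_X\rho\pi_X=\rho$ to rewrite $\rho=(\pi_XD^J(\mathcal{E})\pi_X)\star^{JP}\rho_X$;
\item it invokes uniqueness of solutions of the Lyapunov equation, handling the Hurwitz-stability requirement by restricting to the support of $\rho_X$;
\item it takes the integral $F_\rho$ of Prop.~\ref{prop:F-rho} as the explicit solution;
\item only then does it evaluate $F_\rho$ in the eigenbasis of $\rho_X$ (Lem.~\ref{lem:F-rho-via-D}) to obtain (2).
\end{enumerate}
You bypass $F_\rho$ and Lyapunov theory entirely. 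In the eigenbasis the equation decouples into the scalar relations $\rho_{kl}=\tfrac12(p_k+p_l)D_{kl}$. In one elementary step these give uniqueness on the support and the explicit formula. As a by-product you do not actually need, they also give the vanishing of the cross blocks $\pi_X^\perp D^J(\mathcal{E})\pi_X$ for every realising channel.

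Your route is more self-contained and makes the treatment of the kernel of $\rho_X$ fully explicit. The paper's route instead produces the basis-free operator $F_\rho$, which it reuses in Lem.~\ref{lem:JPA-via-CP-F} and Thm.~\ref{thm:JPA-equivalence-with-DPPT}.

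One small remark concerns your claim that $\JPCPmap{\rho}+\mathcal{E}'\circ\Pi_X^\perp$ is CP as a sum of CP maps. This reads $\Pi_X^\perp$ as $A\mapsto\pi_X^\perp A\pi_X^\perp$, which is exactly what the paper's proof uses (via $D^J(\mathcal{E}'\circ\Pi_X^\perp)=\pi_X^\perp D^J(\mathcal{E}')\pi_X^\perp$). It differs, however, from the literal $\Pi^\perp:=\idchan_X-\Pi$ of the notation section, under which $\mathcal{E}'\circ\Pi_X^\perp$ need not be CP.
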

\begin{proof} 
	See \hyperref[proof:prop:JPA-channel-uniqueness]{Proof in App.~B}. 
\end{proof}

\subsection{Summary} \label{subsec:ambiguity-summary}

For the respective classes of ambiguous states for the four \sot\ maps this work considers we established the following 

\vspace*{0.3cm}
\begin{center}
\begin{tabular}{ccccccc}
	$\ASset{\rMPshort}{X}{Y}$ &  $=$ & \ \ $\ASset{\MPshort}{X}{Y}$ \quad & $\subsetneq$ & \quad $\ASset{\LSshort}{X}{Y}$ \quad & $\subsetneq$ & \quad $\ASset{\JPshort}{X}{Y}$ \quad \\[0.2cm]
	  & & $\shortparallel$ & & $\shortparallel$ & & $\shortparallel$ \\[0.2cm]
	 & & \begin{minipage}{1.8cm} \centering \CXPPT \\[-0.1cm] states \end{minipage} & 
	 & \begin{minipage}{1.8cm} \centering PPT \\[-0.1cm] states \end{minipage} & 
	 & \begin{minipage}{1.8cm} \centering \DXPPT \\[-0.1cm] states \end{minipage}
\end{tabular}
\end{center}
\vspace*{0.3cm}
where both inclusions are strict. 
Note these three sets \CXPPT, PPT and \DXPPT\ are in fact of general interest in their own right in the context of entanglement theory. 

First, PPT states have been studied intensively over decades and it is notoriously hard to characterise them in general. Pioneered originally by Peres~\cite{Peres1996} and the Horodeckis~\cite{Horodecki1996}, their significance stems from the fact that being non-PPT is a sufficient condition for non-separability, i.e. being entangled. It is not a necessary condition though; some PPT states are non-separable, in which case they exhibit \emph{bound entanglement}. This in particular also means that it is impossible to produce Bell pairs from just copies of PPT states and only using local operations and classical communication (LOCC)~\cite{Horodecki1998}. 

Second, the class of \DXPPT\ states is strictly larger, i.e. it contains non-PPT states as we demonstrate through the example in App.~\ref{app:example-DPPT-but-not-PPT}. 
In particular, we show the example to be non-separable beyond bound entanglement---it features 	 \emph{distillable entanglement}, i.e. one \emph{can} use LOCC operations to produce Bell pairs from copies of such states. 
Note that in general it is not known whether non-PPT states with bound entanglement exist~\cite{Horodecki2022}, and the nature of entanglement of states in the set \DXPPT$\setminus$PPT in general is open, too. 

Third, turning to \CXPPT\ states, to the best of our knowledge, this class has not yet been characterised either. The condition on a state to commute with a marginal has appeared in the literature: it is known that states with zero \emph{quantum discord} commute with one of their marginals~\cite{Ferraro2010}, but also that the converse is not true. For example, the $2 \times 2$ Werner state of the form $\rho = \frac{\lambda}{2} (\ket{01} - \ket{10})(\bra{01} - \bra{10}) + \frac{1-\lambda}{4} \idop$ for $\lambda \in [0,1]$  commutes with its marginal seeing as the latter is maximally mixed, but $\rho$ has non-zero discord for $\lambda \in (0,1)$~\cite{Luo2008}. 
\edited{In fact it is not too hard to see that \CXPPT\ states exist that are also non-separable: given some non-separable PPT state $\tau \in \states{X \otimes Y}$ with full support\footnote{Such states can be constructed through a suitable admixture of a non-separable PPT state and the maximally mixed state.} and setting $\rho := (1/d_X) (\tau_X^{-1/2} \otimes \idop_Y) \tau (\tau_X^{-1/2} \otimes \idop_Y)$, clearly $\rho$ still is entangled and PPT, while also $\rho_X = (1/d_X) \idop_X$ and thus $[\rho, \rho_X]=0$.  
} 

Finally, given the non-negativity of the Jamio{\l}kowski states of the unique CP maps induced by our ambiguous states (see Props.~\ref{prop:unique-CP-map-for-OP}, \ref{prop:basic-LSA-propeties}, \ref{prop:JPA-channel-uniqueness}), note that this property always extends to some ambiguity-realising \emph{channels} in the following sense. 
Defining
\begin{align}
	\ARCsetpos{sot}{X}{Y}{\rho} & \ := \ \left\{ \mathcal{C} \in \ARCset{sot}{X}{Y}{\rho} \mid	D^J(\mathcal{C}) \succeq 0 \right\}
\end{align}
for $sot \in \{ \LSshort, \JPshort, \MPshort, \rMPshort \}$ it holds that $\ARCsetpos{sot}{X}{Y}{\rho} \neq \emptyset$ iff $\ARCset{sot}{X}{Y}{\rho} \neq \emptyset$, that is, whenever the state $\rho$ is ambiguous wrt $\odot^{sot}$ then there exists a realising channel with positive Jamio{\l}kowski-state.\footnote{This is immediate from that in all four cases it holds that (1) $D^J(\mathcal{E}^{sot}_{\rho}) \succeq 0$, (2) $\mathcal{E}^{sot}_{\rho} = \mathcal{E}^{sot}_{\rho} \circ \Pi_X$ and (3) for any $\mathcal{C} \in \channels{X}{Y}$ such that $D^J(\mathcal{C}) \succeq 0$ (see Prop.~\ref{prop:positive-DJ-equiv-to-PPT}) setting $\mathcal{E} := \mathcal{E}^{sot}_{\rho} + \mathcal{C} \circ \Pi_X^{\perp}$ one has indeed $D^J(\mathcal{E}) \succeq 0$ and $\mathcal{E} \in \ARCset{sot}{X}{Y}{\rho}$. (See Props.~\ref{prop:unique-CP-map-for-OP}, \ref{prop:basic-LSA-propeties} and \ref{prop:JPA-channel-uniqueness}, respectively, for all three kinds of facts.)}

This is useful for later, but also makes the following observation straightforward.  
For any unitary channel $\mathcal{U}(\argline) = U (\argline) U^\dag$ with unitary $U : \hilb{X} \rightarrow \hilb{Y}$ one finds $\mathcal{U}\not\in \ARCset{\sotmap}{X}{Y}{\rho}$ for any $\sotmap$ studied in this paper and any $\rho$ with full support, i.e. unitary channels cannot realise \MPshort, \LSshort\ or \JPshort\ ambiguity.\footnote{With $\rho_X$ having full support the claim is immediate noting Prop.~\ref{prop:unique-CP-map-for-OP}, \ref{prop:basic-LSA-propeties} and \ref{prop:JPA-channel-uniqueness}, since $D^J(\mathcal{U}) \not \succeq 0$. 
In general, $\mathcal{U}\not\in \ARCsetpos{\sotmap}{X}{Y}{\rho}$ for any state $\rho$.} 
Closely related then is the obvious but important example of a state that lies outside of the three discussed classes, namely that the maximally-entangled state $\rho = \frac{1}{d_X} \sum_{i,j} \dyad{ii}{jj}$ is not \DXPPT\ seeing as it is non-PPT and $\rho_X$ is maximally mixed, hence ${\rleditb{\dephchannel}}_{\rho_X} = \idchan_X$. It is thus not ambiguous in any of the senses studied here.  

\begin{figure}[H]
\centering
\includegraphics[width=1.0\textwidth]{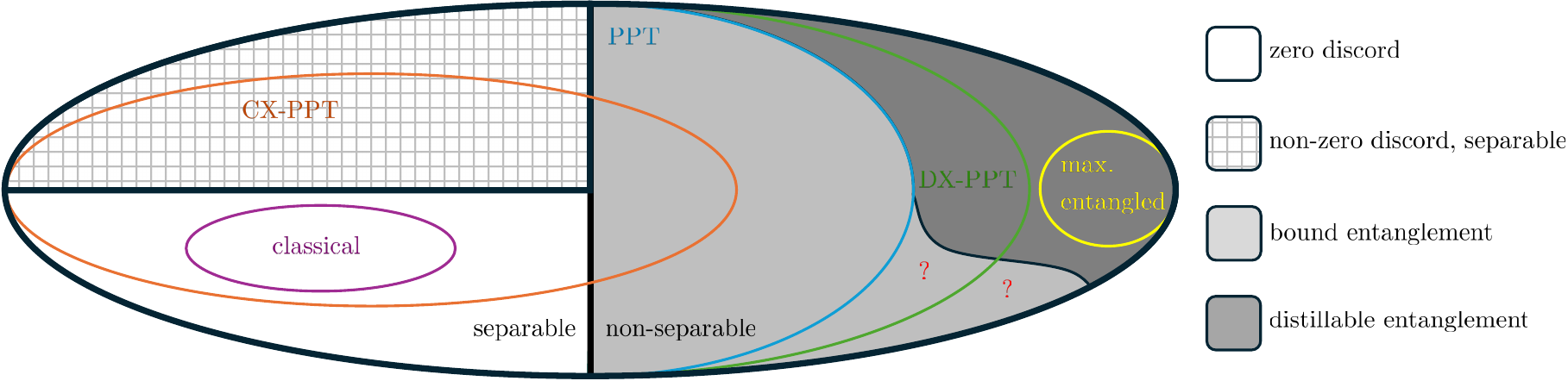}
\caption{Schematic Venn diagram of how the sets \CXPPT, PPT and \DXPPT\ sit within the set of all states $\states{X \otimes Y}$; the question marks indicate which intersections are not known to definitely be (non-)empty; `classical' denotes the subset of separable states that can be written as a convex combination of pairwise orthogonal pure product states.}  
\label{fig:inductive_bias}
\end{figure}

Figure~\ref{fig:inductive_bias} visualises the hierarchy of classes of states described above, including the open questions concerning which intersections are (non-)empty. 
It seems interesting that \LSshort-ambiguous states lie---in terms of the cardinality of the sets and the degree of entanglement---in between those based on \MPshort\ and \JPshort, both of which do not yield non-trivial models. 

With a view to quantum representation learning, it is natural to see the size of sets of ambiguous states and the correlations between $X$ and $Y$ they can describe, in particular the degree of entanglement they cover, as kinds of inductive biases one may choose accordingly, given a problem that quantum representation learning is to be applied to. 
Of course, in practice this would require a parameterisation of states for that respective class, while the parameterisation of the aforementioned classes is non-trivial---we leave this as an interesting direction for future work. 
It will also be interesting to establish in the future what the classes of ambiguous states are with respect to potential other kinds of \sot\ maps not considered here.

\section{The LS-construction: models with inference and generation} \label{sec:main-section-on-LS-models}

The previous section studied classes of ambiguous states. 
One insight was that---unlike what one might have suspected---the property of a quantum state to be ambiguous, that is, to factorise in a way that generalises how any bipartite classical distribution factorises according to the chain rule, does \emph{not} imply that the correlation between $X$ and $Y$ would have to be classical; ambiguous states are not generally separable. 

Now, the no-go fact in Thm.~\ref{thm:no-go-linear} rules out non-trivial models using \MPshort, \rMPshort\ and \JPshort, however for all we know non-linearity of the extended inference (generation) maps is a necessary but not a sufficient condition---it does not mean that any \sot\ map with non-linear extended inference actually yields models. 
The \LSshort-based construction meets the non-linearity requirement; 
in this section then we gather all the pieces, establish the existence of LS-based models and list all their properties.

\begin{lemma} \label{lem:LS-positive-extended-map}	
	Let $\rho \in \ASset{\LSshort}{X}{Y}$ 
	then for any ambiguity-realising channel $\mathcal{C} \in \ARCsetpos{\LSshort}{X}{Y}{\rho}$, the induced map $\mathcal{C} \LSodot (\argline) : \operators{X} \rightarrow \operators{X \otimes Y}$ is positive; moreover, $\forall \eta \in \states{X}$ the state $\mathcal{C} \LSodot \eta$ is again PPT.\footnote{In light of $\mathcal{C} \LSodot \eta$ being PPT again, as an aside question one may wonder how many distinct PPT states $\rho$ are needed such that the union of $\mathcal{C} \LSodot (\states{X})$ (varying over $\rho$ and $\mathcal{C}\in \ARCset{\LSshort}{X}{Y}{\rho}$) covers all PPT states.} 
\end{lemma}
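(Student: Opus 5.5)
Write $J := D^J(\mathcal{C})$. By definition of $\ARCsetpos{\LSshort}{X}{Y}{\rho}$ we have $J \succeq 0$, so positivity of $\mathcal{C} \LSodot (\argline)$ follows from Prop.~\ref{prop:basic-properties-star-products}(3). It is also immediate from Lem.~\ref{lem:psd-corrected}: for any $A \succeq 0$ in $\operators{X}$, the operator $A \otimes \idop_Y$ is PSD with square root $A^{1/2} \otimes \idop_Y$. Hence
\begin{align}
\mathcal{C} \LSodot A \ = \ (A^{1/2} \otimes \idop_Y)\, J\, (A^{1/2} \otimes \idop_Y) \ \succeq \ 0 .
\end{align}
For $\eta \in \states{X}$, condition (M) of Def.~\ref{def:sot-map} gives $\tr(\mathcal{C} \LSodot \eta) = \tr(\eta) = 1$. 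So $\mathcal{C} \LSodot \eta$ is a state, and it remains to show that it is PPT.

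To show PPT, I will take $\mathcal{T}_X$ with respect to an eigenbasis of $\eta$. This is legitimate because the PPT condition does not depend on the choice of ONB (Sec.~\ref{subsec:useful-facts}). Lem.~\ref{lem:transpose-commutativity} with $f(t)=t^{1/2}$ on the spectrum of $\eta$ then gives
\begin{align}
\mathcal{T}_X\big(\mathcal{C} \LSodot \eta\big) \ = \ \mathcal{T}_X\big(\Phi_{\eta^{1/2}}(J)\big) \ = \ \Phi_{\eta^{1/2}}\big(\mathcal{T}_X(J)\big) .
\end{align}
In this basis $\mathcal{T}_X(J) = D^C(\mathcal{C})$, which is PSD because $\mathcal{C}$ is CP. Applying Lem.~\ref{lem:psd-corrected} once more shows that $\Phi_{\eta^{1/2}}(D^C(\mathcal{C})) \succeq 0$, so $\mathcal{C} \LSodot \eta$ is PPT.

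The only delicate point is the basis bookkeeping. $D^J$ is basis-independent, whereas $D^C$ and $\mathcal{T}_X$ are defined relative to a basis. I need $\mathcal{T}_X \circ D^J = D^C$ for the same basis used in Lem.~\ref{lem:transpose-commutativity}, and I need $D^C(\mathcal{C}) \succeq 0$ to hold for every choice of basis. Both hold: the paper states the relation $\mathcal{T}_X(D^J(\mathcal{E})) = D^C(\mathcal{E})$ for any chosen basis, and the Choi criterion holds in every basis. Beyond this, the hypothesis that $\rho$ is ambiguous plays no role except to guarantee that such a channel $\mathcal{C}$ exists, by the remark following Prop.~\ref{prop:JPA-channel-uniqueness}.
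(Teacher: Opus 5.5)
Your proof is correct and follows essentially the same route as the paper: positivity comes from $D^J(\mathcal{C}) \succeq 0$ via Lem.~\ref{lem:psd-corrected}, and PPT-ness comes from commuting $\mathcal{T}_X$ past $\eta^{1/2}$ with Lem.~\ref{lem:transpose-commutativity}, identifying $\mathcal{T}_X(D^J(\mathcal{C})) = D^C(\mathcal{C}) \succeq 0$, and applying Lem.~\ref{lem:psd-corrected} again. Your explicit choice of an eigenbasis of $\eta$ for $\mathcal{T}_X$, justified by the basis-independence of the PPT condition, makes precise a step that the paper's proof leaves implicit.
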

\begin{proof}
	See \hyperref[proof:lem:LS-positive-extended-map]{Proof in App.~B}. 
\end{proof}

Note Secs.~\ref{sec:sot-maps}-\ref{sec:ambiguous-states} and above Lem.~\ref{lem:LS-positive-extended-map} are agnostic as to which roles $X$ and $Y$ play. 
So, due to the symmetry of the PPT condition, we can hence finally conclude the following main result. 

\begin{theorem} \label{thm:main-LS-model-theorem}
	For a model class $\{\rho_{\theta}\}_{\theta \in I}  \subseteq \states{X \otimes Y}$ with $I \subseteq \mathbb{R}^n$ the following are equivalent: 
	\begin{enumerate}[label=(\arabic*), leftmargin=2cm]
		\item It is a \emph{model class with inference and generation} with respect to $\LSodot$, i.e. $\forall \theta \in I$ there is an \infchan\ $\infchansymbol_{\theta} \in \channels{X}{Y}$ and \genchan\ $\genchansymbol_{\theta} \in \channels{Y}{X}$ such that 
		\begin{align}
			\infchansymbol_{\theta} \LSodot \rho_{\theta,X} & \quad = \ \rho_{\theta} \ = \quad \genchansymbol_{\theta} \LSodot \rho_{\theta,Y}  \label{eq:LS-theorem-equation}	
		\end{align}
		\item $\rho_{\theta}$ is PPT $\forall \theta \in I$. 
	\end{enumerate}	 
\end{theorem}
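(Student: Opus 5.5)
The plan is to assemble the equivalence from Thm.~\ref{thm:LSA-PPT-theorem}, Prop.~\ref{prop:basic-LSA-propeties}, the discussion of $\ARCsetpos{\LSshort}{X}{Y}{\rho}$ in Sec.~\ref{subsec:ambiguity-summary}, and Lem.~\ref{lem:LS-positive-extended-map}. Everything is done pointwise in $\theta$, so it suffices to fix one $\rho:=\rho_\theta$. Throughout, the requirement ``model with inference and generation'' is read via Def.~\ref{def:model-with-inf-or-gen}: the state must be ambiguous in both directions, and the respective extended maps must send states to states.

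For (1)$\Rightarrow$(2), suppose an inference channel $\infchansymbol_\theta$ exists with $\infchansymbol_\theta \LSodot \rho_X = \rho$. Then $\rho \in \ASset{\LSshort}{X}{Y}$, and Thm.~\ref{thm:LSA-PPT-theorem} gives that $\rho$ is PPT. The generation half is not even needed for this direction.

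For (2)$\Rightarrow$(1), let $\rho$ be PPT. By Thm.~\ref{thm:LSA-PPT-theorem}, $\rho\in\ASset{\LSshort}{X}{Y}$. By the footnote remark following the definition of $\ARCsetpos{sot}{X}{Y}{\rho}$, there is then a channel $\mathcal{C}=\LSCPmap{\rho}+\mathcal{C}'\circ\Pi_X^\perp$ in $\ARCsetpos{\LSshort}{X}{Y}{\rho}$. Concretely, take $\mathcal{C}'$ to be the channel with $D^J(\mathcal{C}')=\idop_X\otimes \idop_Y/d_Y$, i.e.\ discard-and-prepare the maximally mixed state, which has positive Jamio{\l}kowski operator by Prop.~\ref{prop:positive-DJ-equiv-to-PPT}. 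Lem.~\ref{lem:LS-positive-extended-map} shows that $\mathcal{C}\LSodot(\argline)$ is positive. By Prop.~\ref{prop:basic-properties-star-products} it is also trace-preserving (equivalently, condition (M) holds), so it maps $\states{X}$ into $\states{X\otimes Y}$. Setting $\infchansymbol_\theta:=\mathcal{C}$ therefore provides the inference channel.

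For generation, note that the PPT condition is symmetric: $\mathcal{T}_X(\rho)\succeq0$ iff $\mathcal{T}_Y(\rho)\succeq0$. Since Thm.~\ref{thm:LSA-PPT-theorem}, Prop.~\ref{prop:basic-LSA-propeties} and Lem.~\ref{lem:LS-positive-extended-map} never used which factor is visible, the same argument with $X$ and $Y$ interchanged gives $\rho\in\ASset{\LSshort}{Y}{X}$ and a generation channel $\genchansymbol_\theta\in\ARCsetpos{\LSshort}{Y}{X}{\rho}$ whose extended map takes states to states. Together these give Eq.~\eqref{eq:LS-theorem-equation}.

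The main point needing care is not a calculation but the choice of realising channel. Positivity of $\mathcal{E}\LSodot(\argline)$ holds iff $D^J(\mathcal{E})\succeq0$ (Prop.~\ref{prop:basic-properties-star-products}), and this may fail for an arbitrary element of $\ARCset{\LSshort}{X}{Y}{\rho}$ when $\rho_X$ lacks full support, because the completion on $\Pi_X^\perp$ is free. So one must explicitly select the channel from $\ARCsetpos{\LSshort}{X}{Y}{\rho}$. This means checking that $D^J(\LSCPmap{\rho}+\mathcal{C}'\circ\Pi_X^\perp)=\pi_X D^J(\LSCPmap{\rho})\pi_X+\pi_X^\perp D^J(\mathcal{C}')\pi_X^\perp$ is block-diagonal and positive. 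It also means checking that the resulting map is trace-preserving, which holds because $\LSCPmap{\rho}$ is trace-preserving on the support of $\rho_X$ since $\tr_Y(\rho_X^{-1/2}\rho\rho_X^{-1/2})=\pi_X$. Finally, one needs the mirrored versions of these facts, so that the statements for the $(Y,X)$ direction are genuinely available by symmetry.
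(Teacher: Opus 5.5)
Your proof is correct and follows the paper's argument. Thm.~\ref{thm:LSA-PPT-theorem}, together with the symmetry of the PPT condition, gives ambiguity in both directions, and Lem.~\ref{lem:LS-positive-extended-map} plus Prop.~\ref{prop:basic-properties-star-products}(3) show that the extended maps are positive and trace-preserving. Choosing the realising channels from $\ARCsetpos{\LSshort}{X}{Y}{\rho}$ (completing $\LSCPmap{\rho}$ off the support of $\rho_X$ with a channel of positive Jamio{\l}kowski operator) makes explicit a step the paper's proof leaves implicit, since Lem.~\ref{lem:LS-positive-extended-map} only applies to such channels.
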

\begin{proof}
	Theorem~\ref{thm:LSA-PPT-theorem} established that $\rho_{\theta} \in \ASset{\LSshort}{X}{Y}$ iff it is PPT. 
	By the symmetry of the PPT condition it follows that this is also equivalent to $\rho_{\theta} \in \ASset{\LSshort}{Y}{X}$. 
	Hence any class of PPT states is a class of states that can be factorised with respect to $\LSodot$ in both directions, $(X,Y)$  and $(Y,X)$, that is, there exist channels $\infchansymbol_{\theta} \in \channels{X}{Y}$, $\genchansymbol_{\theta} \in \channels{Y}{X}$ such that Eq.~\eqref{eq:LS-theorem-equation} holds. 
	Finally, it follows from Lem.~\ref{lem:LS-positive-extended-map} that $\infchansymbol_{\theta} \LSodot (\argline)$ and $\genchansymbol_{\theta} \LSodot (\argline)$ are both positive maps and from (3) of Prop.~\ref{prop:basic-properties-star-products} that they are trace-preserving. 
	Thus both map states to states---they indeed define extended inference and extended generation maps, respectively. 
\end{proof}

With the \LSshort-based \sot\ map we thus have identified model classes with inference and generation in the sense of Sec.~\ref{sec:the-framework}. 
The framework we devised from first principles is not vacuous---a priori it was not obvious what the stipulations jointly imply and whether ideal models in its sense exist. 
Naturally, the next steps are the further study of \LSshort-based models and the exploration of the entire landscape of \sot\ maps that can yield non-trivial models. 
See Sec.~\ref{sec:discussion} for further details.

\section{Data extension without inference and generation} \label{sec:data-extension-not-from-inf}

Classically, there is no reason to---and no one ever would---distinguish the concepts of, and methods for, data extension and extended inference. 
However, as Sec.~\ref{subsec:def-data-extension} argued, it is instructive to a priori separate them for a quantum framework. 
Of course, when training a model with inference and generation, ideally one would prefer quantum data extension to be given by some \infmap, be it the model's current one (i.e. an exact inference setup) or some other \infmap\ (hence an instance of approximate inference), for it is a stronger notion, satisfying desirable properties and also a closer analogue to the classical case. 

Still, for what we generally referred to as an approximate inference setup it is in principle conceivable that it uses a data extension map that is not given by any \infmap.  
See the overview of setups in Sec.~\ref{subsec:general-training-setups} and also recall that the conditions on a data extension map only reflect what training via the minimisation of some divergence such as the quantum relative entropy requires, namely states in $\states{X \otimes Y}$ with the right marginals on $X$ (recovering the data state); 
it does not ask for marginals on $Y$ that encode the output of inference or to yield the right `classical limit' (cf. Def.~\ref{def:model-with-inf-or-gen}). 
Under such weaker constraints more possibilities may open up.

Yet, even in considering these different approaches to training so far, it always concerned some model with inference and generation---as is the main focus of this work. 
As a final direction let us now consider a different and weaker goal, where one is not, a priori, interested in inference and generation as such at all, but only in successfully learning some quantum model (on $\hilb{X} \otimes \hilb{Y}$) of data (on $\hilb{X}$); 
\edited{
see, for instance, \cite{kieferova2017tomography, torlai2018latent, song2019geometry, Huijgen_2024, ezzell2023quantum, Guo2024QuantumStateTOmography, Wilde2025} for works that fall into this category.}

It is reasonable to assume that the purpose of such a model $\rho_{\theta} \in \states{X \otimes Y}$ then ultimately is to compute expectation values for certain observables. 
NB using a model to do inference or generation, on the other hand, means a stronger thing---deterministically outputting a quantum state again.   
In light of this premise, and further motivated by the concrete example below, it is instructive to weaken the notions of data extension and being a model-of-data as follows. 

\begin{definition}[Generalisation of data extension and model-of-data condition] \label{def:generalised-data-extension}
	A map $\Phi \ : \ \operators{X} \ \rightarrow \ \operators{X \otimes Y}$ 
	is a \emph{\gendataext} iff $\Phi \ : \ \states{X} \hookrightarrow \states{X \otimes Y}$ and $\forall O \in \Herm{X}$ there exists a `correction' function $\corr_{O} : \mathbb{R} \rightarrow \mathbb{R}$ such that 
	\begin{align}
		\forall \eta \in \states{X} \quad : \quad  \tr[O \eta ] \ = \ \corr_{O} \big(\tr\big[ O \trace_{Y}[\Phi(\eta)]	 \big] \big) 
	\end{align}	

	A model $\rho_{\Theta} \in \states{X \otimes Y}$ 
	is a \emph{model of data} $\sigma$ iff 
	there exists a \gendataext\ $\Phi$ such that  $\rho_{\Theta} = \Phi(\sigma)$.
\end{definition}

Clearly, this weakening is innocent from a practical point of view---of course, assuming that the functions $\corr_O$ are efficiently computable. That a model's marginal $\rho_{\theta,X}$ is not equal to some data $\sigma$, be it on the nose or approximately, but deviates in a---maybe even significant---way, yet one that can be compensated for perfectly through some (classical) postprocessing of measurement statistics would not change how practically suitable the notion of model and data extension are.\footnote{Note that the two parts of Def.~\ref{def:generalised-data-extension} are of course consistent with each other and the general idea of training a model via the minimisation of some divergence that comes with a DPI---even though $\Phi(\sigma)$ and $\sigma$ are then generally not related to each other via a channel that also relates $\rho_{\Theta}$ and its marginal. If $\rho_{\Theta}$ is close to some state $\Phi(\sigma)$ then so are their marginals and therefore $\rho_{\Theta,X}$ is precisely such that $\rho_{\Theta}$ is close to being a model of $\sigma$ in the generalised sense.}

Let us then give an example of the kind of map that instantiates the above slight generalisation. 
First define the maps $\mathcal{B}^+$ and $\mathcal{B}^-$ as follows
\begin{align}
	\mathcal{B}^\pm \ & : \ \operators{X} \ \rightarrow \ \operators{X \otimes X} \\
					& \quad \ A \ \ \mapsto \ \ \frac{2}{d_X \pm 1} \left(\frac{ \idop_{XX} \pm S}{2}\right) (A \otimes \idop_X) \left(\frac{\idop_{XX} \pm S}{2}\right) \label{eq:def-channels-B-pm}
\end{align}
where $S \in \operators{X \otimes X}$ is the SWAP operator.\footnote{Defined by $S \ket{\phi} \ket{\psi} = \ket{\psi} \ket{\phi} \ \forall \ket{\psi}, \ket{\phi} \in \hilb{X}$.} 
Note that both maps $\mathcal{B}^{\pm}$ are channels~\cite{Parzygnat_2023}. 
For distinction let $X'$ refer to the second factor of the joint space, i.e. write $\hilb{X \otimes X'} $ where $\hilb{X'}=\hilb{X}$. 
It is straightforward to calculate the marginal on $X$ to be given by (for $A \in \operators{X}$)
\begin{align}
	\tr_{X'} \big[ \mathcal{B}^\pm (A) \big] \ = \ \frac{d_X \pm 2}{2(d_X \pm 1)} A \ + \ \frac{\tr[A]}{2(d_X \pm 1)} \idop_{X} \ . \label{eq:marginal-B-pm}
\end{align}
Hence, for $O \in \Herm{X}$ and a state $\eta \in \states{X}$ (hence $\tr[\eta]=1$):
\begin{align}
	\tr [O \eta] \ = \ \corr_O^\pm \Big(\tr \big[ O \tr_{X'} [ \mathcal{B}^\pm (\eta) ] \big] \Big) \ , \label{eq:marginal-statistics-B-pm}
\end{align}
where the respective functions $\corr_O^\pm : \mathbb{R} \rightarrow \mathbb{R}$ (assuming $d_X > 2$ in case of $\mathcal{B}^-$) are defined as
\begin{align}
	\corr_O^\pm(a) \ := \ \frac{2(d_X \pm 1)}{d_X \pm 2} \ a \ - \  \frac{1}{d_X \pm 2} \ \tr[O] \ . \label{eq:marginal-statistics-B-pm-corr}
\end{align}
Now letting $\mathcal{E}_{\phi} \in \channels{X'}{Y}$ be any family of channels parametrised by $\phi \in I \subseteq \mathbb{R}^m$, then 
\begin{align}
	\Phi_{\phi} \ := \  (\idchan_X \otimes \mathcal{E}_{\phi}) \circ \mathcal{B}^+ \ : \ \operators{X} \ \rightarrow \ \operators{X \otimes Y} 	\label{eq:B-plus-DEM}
\end{align}
defines a family of generalised data extension maps in the sense of Def.~\ref{def:generalised-data-extension} (similarly if using $\mathcal{B}^-$ instead). 
This is straightforward to see since $\tr_Y \circ \Phi_{\phi} = \tr_{X'} \circ \mathcal{B}^+$ and given Eqs.~\eqref{eq:marginal-statistics-B-pm}, \eqref{eq:marginal-statistics-B-pm-corr}. 

It is worth noting that, beyond simply yielding examples of generalised data extension maps, the channels $\mathcal{B}^\pm$ are particularly interesting for they can be used, through the right kind of linear combination of both, to implement the virtual broadcasting map~\cite{Parzygnat_2023}; and the latter in turn is closely related to `\JPshort-based \infmap{s}' (see App.~\ref{app:algorithm} for the details).   

We note that the map in Eq.~\eqref{eq:B-plus-DEM} is indeed an example of `data extension \emph{not} from inference' seeing as the map
\begin{align}
	 \cpmaps{X}{Y} \times \operators{X} & \ \rightarrow \ \operators{X \otimes Y} \\ 
	 (\mathcal{E}, A) \quad \quad  & \ \mapsto \  \ \big( (\idchan_X \otimes \mathcal{E}) \circ \mathcal{B}^+ \big) (A)
\end{align}
is not an \sot\ map. 
Even if one considered a weakening of \sot\ maps in keeping with Def.~\ref{def:generalised-data-extension} to allow for the fact that $\tr_Y\big[ \big( (\idchan_X \otimes \mathcal{E}) \circ \mathcal{B}^+ \big) (A) \big] \neq A$, the fact that $\tr_X\big[ \big( (\idchan_X \otimes \mathcal{E}) \circ \mathcal{B}^+ \big) (A) \big] = \mathcal{E} \big( \tr_X[\mathcal{B}^+ (A) ]\big) \neq \mathcal{E}(A)$\footnote{Noting Eq.~\eqref{eq:marginal-B-pm}, which for $\mathcal{B}^+$ is symmetric in which of the two factors one traces.} means there is no reason to consider $\mathcal{E}$ outputting any representation of $A$. 
This kind of argument also makes manifest that the above $\Phi_{\phi}$ is incompatible with using it for approximate inference training. In fact, this observation is not specific to the above example, but a general consequence of Def.~\ref{def:generalised-data-extension}---generalised data extension is not compatible with training some \sot-based model with inference or generation. 

The further exploration in the future of such weaker approaches based on Def.~\ref{def:generalised-data-extension} will likely depend on any success of and the costs of practically training models \emph{with} inference and generation in the sense this work focused on otherwise. 

\section{Discussion} \label{sec:discussion}
	
We have introduced a framework for unsupervised representation learning from coherently available quantum data in which a model is a joint state over visible and latent systems, while inference and generation are defined through channel-state factorisations induced by a state-over-time map. 
We pointed out how, unlike classically, extended inference---that is, a map induced by an inference channel but that goes from the visible to the joint system---is distinguished from the logically weaker notion of data extension;  
and we sketched how the notions we stipulated indeed constitute the building blocks for various training setups for a quantum generalisation of unsupervised representation learning. 
On the basis of this framework we then showed two main results. 
First, an \emph{extended} inference or generation map that is linear, positive and exactly preserves its input marginal must be trivial (Thm.~\ref{thm:no-go-linear}). Thus any non-trivial model satisfying our consistency requirements necessarily involves a non-linear extension map to joint states, even though the operational inference and generation maps themselves remain ordinary quantum channels. 
Second, for the LS construction, a model class has inference and generation if and only if every model state is PPT (Thm.~\ref{thm:main-LS-model-theorem}).

It is worth emphasising that the latter result identifies PPT not as an externally imposed ansatz, but as the precise boundary condition within all joint states allowing for valid bidirectional LS factorisations.  
This boundary is restrictive but genuinely quantum: it includes separable states as well as PPT-entangled states, while excluding any non-PPT states such as maximally entangled ones. 
The hierarchy of ambiguous-state classes found for the operator-product, LS and Jordan-product constructions may be thought of as potential choices of inductive bias, however, our results showed that factorising a fixed joint state is a necessary but not sufficient condition for obtaining a globally valid learning model. 
In particular, the Jordan-product construction admits a larger class of ambiguous states than LS, but its linearity prevents it from yielding non-trivial models under our positivity requirement. 

Our structural results by themselves do not imply efficient trainability---whether the PPT restriction provides a useful inductive bias for learning, or instead limits expressivity too strongly, is an empirical question we leave open. 
Also note there is no claim of quantum advantage in the usual or any rigorous sense of the word; however the problem may be seen, by definition, as intrinsically quantum and could not be done classically seeing as part of the requirements is to take in and process a single quantum state and output a quantum state, be it in the inferential or the generative direction. If any such map factored through a classical system it would clash with the `single-shot requirement'---a key difference between training on data from a quantum data source and using the model for inference or generation.

The work also provides an application-driven perspective on the wider theory of states over time. Axiomatic approaches have studied SOT constructions using requirements such as (bi-)linearity, covariance and conditionability, leading to recent uniqueness results in both bipartite and multipartite settings \cite{Lie_2024, lie2025multipartite}. 
Our setting imposes a different additional criterion: for learning, the channel-state construction should define a genuine joint state throughout the domain on which data may be processed. Our no-go theorem makes explicit the resulting tension between linearity, positivity and non-triviality, while the LS construction shows how non-linearity is indeed a viable way out of the tension. These observations complement, rather than contradict, existing uniqueness results, whose assumptions and intended operational interpretations differ from ours.

\edited{
The framework has a more qualified relevance to supervised representation learning. A purely discriminative quantum model needs only an encoder followed by a readout and therefore does not require a compatible joint state as a model with also generation. 
The structure we assume would become relevant though when supervised learning is combined with unsupervised pretraining, when the representation is required to remain compatible with a coherent generative model, or when labels and prediction targets are themselves quantum systems. 
A natural extension might append a task register and optimise the preservation of task-relevant information in the latent state, thereby connecting the framework to quantum information-bottleneck approaches 
\cite{banchi2021generalization, hayashi2023efficient, Catli_2024}.} 
For completeness we also mention that the notion of inference treated in this work should not be equated with quantum causal inference. 
The LS factorisation is causally neutral and bidirectional, and no interventions or causal Markov assumptions are introduced here; 
though at a technical level as well as in historical respects there are of course links to quantum causal model frameworks \cite{allen2017quantum, barrett2019quantum} and to works on quantum causal inference \cite{song2025bipartite}.

This work naturally leads to several directions for future work. 
A first priority is to study the general \emph{landscape of \sot\ maps} and in particular characterise those maps  that are non-linear in their second argument and induce some \emph{non-separable} ambiguous states $\rho \in \states{X \otimes Y}$, which moreover are such that $\rho$ is a model with inference and generation; in short characterise the \sot\ maps that support non-trivial models with inference and generation.  
Within that set the goal then is to identify the potentially interesting candidates to compare against \LSshort\, and to establish which further requirements might single out an optimal choice as unique---be it LS or another one. 
A further aspect is to generally explore a `channels-first' approach, that is a complementary characterisation of ambiguity: instead of characterising ambiguous states, determine the class of `admissible ambiguity realising channels', that is given some $\odot$ what are the channels $\mathcal{E} \in \channels{X}{Y}$ such that $\mathcal{E} \odot \eta$ is a state for all states $\eta \in \states{X}$?  
This also is a natural approach if wanting to define a parametrised model class in terms of decoder maps and prior states (on the latent system) in the first place, as is common classically. 

A second priority is an approximate and robust version of the framework. 
Realistic models will only approximately reproduce their visible marginals and may only approximately factorise with respect to a chosen \sot\ map. 
It will therefore be important to define quantitative notions of proximity on the basis of which all central concepts in the framework can be generalised suitably and in a consistent manner. 
Also generalisation beyond training data could be worth investigating. 
At least for the \sot\ maps this work considered ambiguity-realising channels are unique only on the support of a given (data) state $\sigma$. 
A question hence is whether there are canonical and stable off-support extensions of respective unique CP maps to channels. 

Third, on the algorithmic side and with a view to LS-based models, it will be useful to identify tractable parametrised subclasses of PPT states that are expressive enough to contain non-separable states (e.g. using constructions based \edited{on unextendible product bases, low-rank extremal PPT families, or strong-PPT states defined through Cholesky block factorisations \cite{BennetEtAL_1999_Unextendible, LeinaasEtAL_2010_LowRankPPT, Bylicka_2013}}). For each family one would then need efficient parametrisations of the respective ambiguity realising channels, i.e. the inference and generation channels, as well as estimable training objectives and gradients (see App.~\ref{app:gradient_qre}), together with bounds on sample and circuit complexity. 

A fourth direction concerns practical and more heuristically guided work. 
All the mentioned future developments will hopefully enable proof-of-principle \emph{experiments}---in simulation and on actual quantum hardware---to demonstrate when, and how efficiently, models can be trained successfully, as well as eventually used for downstream tasks.

Finally, one might consider a more radical extension of the current framework, namely by relaxing the requirement that a quantum model be given by a joint quantum state, i.e. a positive operator which induces well-defined statistics (for observables). 
This possibility has close precedents in the pseudo-density-matrix and virtual-broadcasting formalisms, where nonpositive Hermitian objects can nevertheless encode operationally accessible temporal or quasiprobabilistic statistics \cite{Fitzsimons2015, fullwood2025quantum, Parzygnat_2024}. 
Holding onto the \sot\ construction, but allowing non-PSD operators as output of extended inference maps would open many more possibilities---in particular avoid our no-go result and bring back linearity---however require a way to make sense of `non-physical' operators as quantum models in a learning framework.  
Note that by definition of \sot\ maps, when \emph{using} such supposed models for inference or generation one would again be dealing with ordinary channels and states; it remains open though whether one can define any suitable loss function and in principle implementable training procedures for such models.


\phantomsection
\addcontentsline{toc}{section}{References}
\bibliography{bibliography}
\bibliographystyle{alpha}

\appendix

\section{Complementary material}

\subsection{Training and the gradients of quantum relative entropy} \label{app:gradient_qre}
While this work does not provide the technical details to implement the training of quantum models, this section will sketch some basic aspects thereof. 
As noted in Sec.~\ref{subsec:general-training-setups}, given quantum data $\sigma$ training a model class $\{\rho_{\theta}\}_{\theta}$ in the framework means the minimisation of  
\begin{align}	
	D\big( \Phi_{\phi}( \sigma) \ \| \ \rho_{\theta} \big) \label{eq:basic-objective-function_app}
\end{align}
wrt parameters $\theta$ and $\phi$ for some family of data extension maps $\Phi_{\phi}$ and some objective function $D$, which for the sake of concreteness we will here henceforth assume is given by the quantum relative entropy.\footnote{As always, assuming the corresponding inclusion relation on the support of the operators so as to guarantee a well-defined loss; in fact we will here assume that that support is independent from $\theta$ to avoid complication below when considering differentiation.} 
For practical reasons one would normally minimise wrt either $\phi$ or $\theta$ at a time, not both at the same time.  

Let us start by considering the gradient wrt $\theta_k$, i.e. effectively the second argument of $D(\argline \| \argline)$.\footnote{In an exact inference setup, with the same parameters appearing in both arguments, in order to obtain more tractable expressions one may for instance hold one argument fixed and proceed in an iterative manner akin to a classical Expectation-Maximization algorithm (see Sec.~\ref{subsec:intro-classical-case}). \label{footnote-label}}  
Recall
\begin{align}
	D\big( \Phi_{\phi}( \sigma) \ \| \ \rho_{\theta} \big) = \tr \big[ \Phi_{\phi}( \sigma) (\log \Phi_{\phi}( \sigma) - \log\rho_{\theta}) \big] 	\label{eq:objective-as-rel-ent}
\end{align}	
and note that from~\cite[Lemma 3.4]{Sutter2017} and~\cite[Remark 24]{Wilde2025} it follows that 
\begin{align}
	\frac{\partial}{\partial \theta_k} \log \rho_{\theta} = \int_{-\infty}^{\infty} \beta_0(t) \rho_{\theta}^{-\frac{1+it}{2}} \frac{\partial}{\partial \theta_k}(\rho_{\theta}) \rho_{\theta}^{-\frac{1-it}{2}} dt \label{eq:log-der}
\end{align} 
where $\beta_0(t) := \frac{\pi}{4} \sech^2\left(\frac{t\pi}{2}\right)$ is a probability density. Thus one finds  
\begin{align}
	\Big( \grad_\theta  D\big( \Phi_{\phi}( \sigma) \ \| \ \rho_{\theta} \big) \Big)_k \ & = \ 
		- \tr \big[ \Phi_{\phi}( \sigma) \ \frac{\partial}{\partial \theta_k} \log \rho_{\theta} \big] \\ 
		& = \  - \tr \big[ \Phi_{\phi}( \sigma) \   
		\int_{-\infty}^{\infty} \beta_0(t) \rho_{\theta}^{-\frac{1+it}{2}} \frac{\partial}{\partial \theta_k}(\rho_{\theta}) \rho_{\theta}^{-\frac{1-it}{2}} dt
		   \big] \\
		& = \  - \int_{-\infty}^{\infty} \beta_0(t) \ \tr \Big[ \Phi_{\phi}( \sigma) \   
		 \rho_{\theta}^{-\frac{1+it}{2}} \frac{\partial}{\partial \theta_k}(\rho_{\theta}) \rho_{\theta}^{-\frac{1-it}{2}} \Big] dt \label{eq:app-grad-expression_before_cycling} \\
		 & = \  -  \tr \Big[  \Big( \int_{-\infty}^{\infty} \beta_0(t) \rho_{\theta}^{-\frac{1-it}{2}} \Phi_{\phi}( \sigma) \  \rho_{\theta}^{-\frac{1+it}{2}} dt \Big) \ \frac{\partial}{\partial \theta_k}(\rho_{\theta}) \Big]   \label{eq:app-grad-expression}
\end{align}
where we used the cyclicity and the linearity of the trace. 

How might one actually estimate such gradients? 
While the integral can be seen to define a linear (in fact CP) map in $\Phi_{\phi}( \sigma)$, there are various sources of non-linearity. 
First, we saw that $\Phi_{\phi}( \argline)$ should be expected to not be CPTP (Prop.~\ref{prop:no-braodcasting-kind-of-fact}), in fact non-linear in general (Lem.~\ref{lem:no-braodcasting-kind-of-fact-for-any-positive-map}). 
Second, just like classically one might want to suppose a model class $\{\rho_{\theta}\}_{\theta}$ that comes with generation wrt some \sot\ map, i.e. there are generation channels $\genchansymbol_{\theta}$ such that $\rho_{\theta} = \genchansymbol_{\theta} \odot \rho_{\theta,Y}$ and the model class is in fact given by a parametrisation of the prior latent states $\rho_{\theta,Y}$ and the decoder channels $\genchansymbol_{\theta}$ (also see Sec.~\ref{subsec:general-training-setups}). 
For the same reason as before (Lem.~\ref{lem:no-braodcasting-kind-of-fact-for-any-positive-map}) the map $\genchansymbol_{\theta} \odot (\argline)$ should be expected to be non-linear, yet $\frac{\partial}{\partial \theta_k}(\genchansymbol_{\theta} \odot \rho_{\theta,Y})$ would be required. 
Third, if considering the gradient of Eq.~\eqref{eq:objective-as-rel-ent} with respect to $\phi$, or in any case the first argument (see comment in footnote~\ref{footnote-label}), then this evidently leads to expressions manifestly non-linear even in $\Phi_{\phi}( \sigma)$ as such. 
The bottom line is that dealing with gradients that are non-linear in the given states is inevitable. 

In order to demonstrate why this is not a fundamental issue in principle, we just offer the following generic observation. 
The desired gradients are non-linear functionals in states. 
Although it is not a necessity for the estimation techniques mentioned below, suppose the functional of interest can be written in the form 
\begin{align}
	(\grad_\theta  D)_k = \tr [O_k f(\sigma)] 
	\label{eq:grad-as-exp-value}
\end{align}
for some observables $O_k \in \Herm{X \otimes Y}$ and some generally non-linear function $f : \states{X} \rightarrow \states{X \otimes Y}$.\footnote{If one allowed just any $f : \operators{X} \rightarrow \operators{X\otimes Y}$ then a rewriting as in Eq.~\eqref{eq:grad-as-exp-value} is of course always possible.} 
As is straightforward to see, this is the case for the above: setting  $O_{k,t} := \rho_{\theta}^{-\frac{1+it}{2}} \frac{\partial}{\partial \theta_k}(\rho_{\theta}) \rho_{\theta}^{-\frac{1-it}{2}}$, which is of the form $A B A^{\dagger}$ with $B \in \Herm{X \otimes Y}$ and thus Hermitian, one may rewrite the RHS of Eq.~\eqref{eq:app-grad-expression_before_cycling} as $\int_{-\infty}^{\infty}  \beta_0(t) \tr [O_{k,t} \Phi_{\phi}( \sigma)] dt$, so that  
a Monte Carlo approximation of the integral then has the gradient essentially become an expectation value of the claimed form. 

Now, there are various well-known results and standard tricks to estimate non-linear functionals of states. 
\edited{They typically involve acting jointly on several independent copies of the state. 
In particular, every degree-n polynomial functional of a state $\sigma$ can be represented as the expectation value of an observable on $\sigma^{\otimes{n}}$; 
the controlled-SWAP and cyclic-permutation tests being basic examples of this multi-copy strategy \cite{ekert2002direct, Brun2004}. 
Thus, for polynomial functionals, or for nonpolynomial functionals after a suitable polynomial approximation, one may seek a decomposition of the form
 \begin{align}
	\tr \big[O f(\sigma)\big] \ \approx \ \sum_{l=1}^L c_l \tr \big[O_l \ \mathcal{C}_l(\sigma^{\otimes{n_l}}) \big] 
	\label{eq:generic-decomposition-exp-values}
\end{align}
where $L \in \mathbb{N}$ and for $\forall l=1,...,L$: $c_l \in \mathbb{R}$, \rleditb{$O_l \in \Herm{Z}$ and $\mathcal{C}_l \in \channels{X^{\otimes{n_l}}}{Z}$ for some ancilla system $Z$}. 
} 

Finally, suppose a dataset $(\sampleindex{\sigma}{j})_{j=1}^N$ is given, drawn i.i.d. from some quantum data source $\{ (p_i, \decompindex{\sigma}{i}) \}_{i=1}^k$. 
If partitioning the index set $\{1,..,N\}$ into blocks $B_{l,r}$ each of size $n_l$  (with $r=1,...,R$ for some suitable $R \in \mathbb{N}$ such that $\sum_{l=1}^L Rn_l = N$) then for large enough $N$ one may approximate the RHS of Eq.~\eqref{eq:generic-decomposition-exp-values} by
\begin{align}
	\sum_{l=1}^L c_l \sum_{r=1}^R \frac{1}{R} \tr \Big[O_l \ \mathcal{C}_l \big(\bigotimes_{j \in B_{l,r}} \sampleindex{\sigma}{j} \big) \Big] 
\end{align}

The assessment of whether a practical implementation of such a procedure is tractable has to be with respect to a specific choice of a parametrised model class (be it \LSshort, or some other suitable \sot\ map; see Secs.~\ref{sec:main-section-on-LS-models} and \ref{sec:discussion}); we leave this to future work.

\subsection{Example of a non-PPT state that is \DXPPT} \label{app:example-DPPT-but-not-PPT} 

The below concrete example establishes that bipartite states exist that are \DXPPT, but not PPT. 
Let $d_X = 3, d_Y = 2$ and define $\rho := \rho_\mathrm{real} + i \rho_\mathrm{imag}$ where 
\begin{equation}
\label{eq:ppt_dppt_example_32}
\begin{split}
\rho_\mathrm{real} &=
{\small
\left(
\begin{array}{cccccc}
0.341392 & -0.018181 & -0.115488 & -0.067222 & -0.002548 & -0.027566 \\
-0.018181 & 0.443973 & 0.012921 & -0.025352 & 0.005881 & 0.015663 \\
-0.115488 & 0.012921 & 0.083515 & -0.001546 & -0.003357 & 0.024048 \\
-0.067222 & -0.025352 & -0.001546 & 0.098782 & 0.000088 & -0.009840 \\
-0.002548 & 0.005881 & -0.003357 & 0.000088 & 0.019474 & 0.003833 \\
-0.027566 & 0.015663 & 0.024048 & -0.009840 & 0.003833 & 0.012863 \\
\end{array}
\right)
} \\
\rho_\mathrm{imag} &=
{\small
\left(
\begin{array}{cccccc}
0.000000 & -0.046591 & 0.037808 & -0.010960 & -0.059082 & -0.009599 \\
0.046591 & 0.000000 & -0.026300 & 0.010275 & 0.003416 & 0.001479 \\
-0.037808 & 0.026300 & 0.000000 & 0.013893 & 0.029772 & 0.015276 \\
0.010960 & -0.010275 & -0.013893 & 0.000000 & 0.026384 & -0.011634 \\
0.059082 & -0.003416 & -0.029772 & -0.026384 & 0.000000 & -0.006456 \\
0.009599 & -0.001479 & -0.015276 & 0.011634 & 0.006456 & 0.000000 \\
\end{array}
\right)
}
\end{split}
\end{equation}

\rledit{Note $\rho_\mathrm{imag}^T = -\rho_\mathrm{imag}$ and hence $\rho$ is Hermitian; also $\tr[\rho]=1$. 
Now as one may straightforwardly check the eigenvalues of $\rho$, $\mathcal{T}_X(\rho)$ and $\mathcal{T}_X({\rleditb{\dephchannel}}_{\rho_X}(\rho))$, respectively, are as follows
\begin{align}
\mathrm{EV(\rho)} = 
{\small
\left(
\begin{array}{c}
4.2202828 \cdot 10^{-9} \\
1.3058831 \cdot 10^{-3} \\
3.3435285 \cdot 10^{-2} \\
9.7071648 \cdot 10^{-2} \\
3.8111183 \cdot 10^{-1} \\
4.8707500 \cdot 10^{-1} \\
\end{array}
\right)
}
\,, \qquad
\mathrm{EV(\mathcal{T}_X({\rleditb{\dephchannel}}_{\rho_X}(\rho)))} = 
{\small
\left(
\begin{array}{c}
0.00121465 \\
0.00873855 \\
0.04097736 \\
0.10474302 \\
0.37509558 \\
0.46923079 \\
\end{array}
\right)
} \,,
\label{eq:eigenvalues_line_1}
\end{align}
\begin{align}
\mathrm{EV(\mathcal{T}_X(\rho))} = 
{\small
\left(
\begin{array}{c}
-0.01634675 \\
0.00291748 \\
0.04072443 \\
0.10628327 \\
0.386001 \\
0.48042065 \\
\end{array}
\right)
} \ .
\label{eq:eigenvalues_line_2}
\end{align}
While the positivity of all eigenvalues in Eq.~\eqref{eq:eigenvalues_line_1} establishes that $\rho \succeq 0$ and hence that $\rho$ is a state, as well as that it is \DXPPT, the negative eigenvalue $-0.01634675$ in Eq.~\eqref{eq:eigenvalues_line_2} establishes that $\rho$ is indeed not PPT.}

Note that it is one of the open questions in quantum information theory whether there exist non-PPT states that have bound, rather than distillable, entanglement. It is known however that in case $d_X=2$ and $d_Y \geq 2$ (or the other way round) non-PPTness is both necessary and sufficient for featuring distillable entanglement~\cite{Dur2000}. Hence, the above example shows that \DXPPT\ \rledit{indeed extends PPT states beyond bound entanglement.} 

\rledit{For completeness, we note that the example was found via a brute-force random search over a particular parametrisation of density matrices. 
For $d = d_X d_Y$ we generated $d(d+1)/2 + d(d-1)/2$ independent random real numbers from a normal distribution with mean zero and variance one to define a lower triangular matrix $A$; the first $d(d+1) / 2$ numbers were used to fill the lower triangular part (including the diagonal) of $A$, with zeros on its upper strict triangular part (excluding the diagonal), while the remaining $d(d-1)/2$ numbers were used to define the imaginary part of the strict lower triangular part of $A$. The corresponding random density matrix was then set to
\begin{equation}
	\rho := \frac{A^\dagger A}{\tr[A^\dagger A]} \,.
\end{equation}
For each generated $\rho$ we then computed $\mathcal{T}_X(\rho)$ and $\mathcal{T}_X({\rleditb{\dephchannel}}_{\rho_X}(\rho))$ and the corresponding eigenvalues for each of the three matrices. 
In addition to the setting with $d_X = 3, d_Y = 2$ we also did a search for $d_X =4, d_Y = 2$; in both settings we generated $1000$ random $\rho$. 
For $d_X = 3$ we found $\approx 2.2\%$ examples of non-PPT states that are \DXPPT, for $d_X = 4$ we found $\approx 0.1\%$ examples of such states.}

\subsection{On restricted \JPshort-based inference and virtual broadcasting} \label{app:algorithm} 

As Prop.~\ref{prop:basic-properties-star-products} noted $\star^{JP}$ is bilinear and hermiticity-preserving, but $\mathcal{E} \JPodot (\argline) : \operators{X} \rightarrow \operators{X \otimes Y}$ is generally not positive for CP maps $\mathcal{E}$. 
Thm.~\ref{thm:no-go-linear} then established that for $\rho \in \ASset{\JPshort}{X}{Y}$ and $\mathcal{E} \in \ARCset{\JPshort}{X}{Y}{\rho}$ the map $\mathcal{E} \odot (\argline)$ is in fact positive iff $\rho$ and $\mathcal{E}$ are trivial in the respective senses of Sec.~\ref{sec:general-results-sot-ambiguity}. 

It is interesting to note though that $\mathcal{E} \JPodot (\argline)$ is the linear approximation of $\mathcal{E} \LSodot (\argline)$ when the input state is close to the maximally mixed state~\cite{Parzygnat_2023}. 
In a regime of highly mixed datapoints, the map  $\mathcal{E} \JPodot (\argline)$ may thus still yield physical statistics. 
More generally, given $\rho \in \ASset{\JPshort}{X}{Y}$ (a \DXPPT\ state) and some choice of $\mathcal{E} \in \ARCset{\JPshort}{X}{Y}{\rho}$, one may ask what the size and nature of the subset of input states is that are `amenable to inference', i.e. 
\begin{align}
	IS_{\rho}^{\JPshort} := \left\{ \eta \in \states{X} \mid \mathcal{E} \JPodot \eta \succeq 0 \right\}
\end{align}
		
Now, for any $\mathcal{E} \in \ARCset{\JPshort}{X}{Y}{\rho}$ this map $\mathcal{E}$ is a \emph{channel} and hence $\mathcal{E}\big(\states{X}\big) \subseteq \states{Y}$. 
Yet, for $\eta \in \states{X}$ we regard it only sound to interpret $\mathcal{E}(\eta)$ as inference if also $\mathcal{E} \JPodot \eta \in \states{X \otimes Y}$. 
The reason of course is consistency within our framework, that is, otherwise one would apply `inference' to states, which, had they been included in the training data, the training of the model could not even have processed them due to ill-defined statistics. 
Hence, inference on states outside of $IS_{\rho}^{\JPshort}$ should also be considered ill-defined. 
 
Yet the linearity is of course attractive from the perspective of an efficient implementation. 
So suppose it were to be the case that for some non-separable states $\rho$ the size and nature of $IS_{\rho}^{\JPshort}$ was interesting enough and moreover that it was plausible that a quantum data source $\{ (p_i, \decompindex{\sigma}{i}) \}_{i=1}^k$ came with a promise that all $\decompindex{\sigma}{i} \in IS_{\rho}^{\JPshort}$. 
Then the idea of \emph{inference on restricted inputs only} could in principle be worth considering as an alternative to circumvent the no-go result from Sec.~\ref{sec:general-results-sot-ambiguity}. 
We leave the characterisation of the sets $IS_{\rho}^{\JPshort}$ for future work. 

While the comment in this subsection is mostly intended to be a general one, i.e. noting a way \sot\ maps with linear \infmap{s} might possibly still be relevant,\footnote{For completeness we note that with regards to \MPshort-based models, `restricted inference' is not an interesting route to pursue. 
First, we have that $\ASset{\MPshort}{X}{Y} \subset \ASset{\JPshort}{X}{Y}$. 
Second, for $\rho \in \ASset{\MPshort}{X}{Y}$ and $\mathcal{C} \in \ARCset{\MPshort}{X}{Y}{\rho}$ 
the corresponding subsets of input states $IS_{\rho}^{\MPshort} := \left\{ \eta \in \states{X} \mid \mathcal{C} \MPodot \eta \succeq 0 \right\}$ are precisely those that commute with $D^J(\mathcal{C})$ and hence, for such $\rho$ we also have that  $IS_{\rho}^{\MPshort} \subset IS_{\rho}^{\JPshort}$.} 
it is worth expanding a little on the \JPshort\ case due to links with other parts of this work and due to the key role \JPshort\ plays in the rest of the state-over-time literature \cite{Fullwood_2022,Parzygnat_2023, Lie_2024}. 

Note that for $\rho \in \ASset{\JPshort}{X}{Y}$ and $\mathcal{E} \in \ARCset{\JPshort}{X}{Y}{\rho}$ the map of interest, $\mathcal{E}_{\rho} \JPodot (\argline) \ : \ \operators{X} \rightarrow \operators{X\otimes Y}$, may be rewritten as follows
\begin{align}
	A \ \mapsto \ \mathcal{E}_{\rho} \JPodot A = \big( \idchan_X \otimes \mathcal{E}_{\rho}\big) \Big( \frac{1}{2} \left\{ S, A \otimes \idop_X \Big\} \right) = \big( \idchan_X \otimes \mathcal{E}_{\rho}\big) \circ \mathcal{B} (A) , \label{eq:def-JP-map-using-vir-broadcasting}
\end{align}
where $S$ is the SWAP operator (see Sec.~\ref{sec:data-extension-not-from-inf}) and the map $\mathcal{B} : \operators{X} \rightarrow \operators{X\otimes X}$ has been introduced, which is also called the \emph{virtual broadcasting} map~\cite[Theorem 1]{Parzygnat_2024}. 
The latter is linear, hermiticity- and trace-preserving and hence can be decomposed as the weighted difference of two channels. Indeed,  
\begin{align}
	\mathcal{B} (\argline) \ = \ \frac{d_X+1}{2} \; \mathcal{B}^+ (\argline) \ -  \ \frac{d_X-1}{2} \; \mathcal{B}^- (\argline)
\end{align}
where $\mathcal{B}^\pm$ are the channels as given in Eq.~\eqref{eq:def-channels-B-pm}. 
These two channels, $\mathcal{B}^+$ and $\mathcal{B}^-$, map into the symmetric and antisymmetric subspaces of $\hilb{X} \otimes \hilb{X}$, respectively, and can be realised by the well-known circuit for the swap test (see~\cite{Zheng2025} for an experimental demonstration).
So although $\mathcal{B}$ is not a positive (let alone CP) map, the above linear decomposition allows for the computation (of estimates) of expectation values, that is, provided the ability to execute the `inference channel' $\mathcal{E}_{\rho}$, we can for all $O \in \Herm{X\otimes Y}$ and any $\eta \in IS_{\rho}$ indeed estimate
\begin{align}
	\tr\big[\big(\mathcal{E}_\rho \JPodot \eta \big) \; O\big] \ = \ \frac{d_X + 1}{2} \tr \big[ \big( \mathcal{E}_\rho \circ \mathcal{B}^+ (\eta) \big) \; O \big] \ - \ \frac{d_X - 1}{2} \tr \big[ \big( \mathcal{E}_\rho \circ \mathcal{B}^-  (\eta) \big) \; O \big] . \label{eq:JPmap-via-virtual-broadcasting}  
\end{align}
We close with a few comments. 

First, 
Eq.~\ref{eq:JPmap-via-virtual-broadcasting} looks similar to Eq.~\ref{eq:generic-decomposition-exp-values} in App.~\ref{app:gradient_qre}; in the latter case the map $f$ is \emph{not assumed} to be linear, but \emph{does map} all states to states; in the former case the map \emph{is} linear---making the decomposition much simpler---but the map \emph{is not} generally positive. 

Second, the virtual broadcasting map $\mathcal{B}$ is a \emph{broadcasting} map, i.e. it satisfies $\tr_X \circ \mathcal{B} = \idchan_X$ (for either of the two codomain factors being the one that is traced out). 
It is the kind of linear map that by Prop.~\ref{prop:no-braodcasting-kind-of-fact} is forced to not be CP; and more strongly, by Lem.~\ref{lem:no-braodcasting-kind-of-fact-for-any-positive-map} to not even be positive. It is for this reason that it is commonly referred to as a \emph{virtual} broadcasting map---it could not be a physical map according to quantum theory.  

Third, dropping the $\mathcal{B}^-$ summand in the above equations (and correctly normalising again) leads precisely to the example of a generalised data extension map from Sec.~\ref{sec:data-extension-not-from-inf}, namely  considering the map $\big( \idchan_X \otimes \mathcal{E}_{\rho}\big) \circ \mathcal{B}^+$. 
This move thus trades the existence of an \infmap\ for that now any state, not just from $IS_{\rho}$, can be input and still yields an output state. 
In this context it is interesting to note that $\mathcal{B}^+$ is the channel, i.e. the physical map, that optimally approximates the virtual broadcasting map $\mathcal{B}$ in the diamond distance \cite[Theorem 2]{Parzygnat_2024}. 
We leave it for future work to answer whether it is also the case that the channel $\big( \idchan_X \otimes \mathcal{E}_{\rho}\big) \circ \mathcal{B}^+$ is the optimal approximation to the corresponding \JPshort-based \infmap, $\big( \idchan_X \otimes \mathcal{E}_{\rho}\big) \circ \mathcal{B}$.  

Fourth, if one considered keeping \emph{both} summands, i.e. working with the map in Eq.~\eqref{eq:def-JP-map-using-vir-broadcasting}, but \emph{yet} considered any state $\eta \in \states{X}$ as allowed input, this would constitute a basic deviation from our current framework, where now quantum models are not anymore given by quantum states (i.e. positive operators). 
Such generalisation of the framework---if it is possible to develop such at all---would leave the current scope and is a direction Sec.~\ref{sec:discussion} mentions for future exploration.

\section{Proofs} \label{app:proofs}

\subsection{Proofs Secs.~\ref{subsec:def-data-extension} and \ref{sec:sot-maps}} \label{subsec:earlier-proofs}

\begin{proof}[Proof of Prop.~\protect\ref{prop:no-braodcasting-kind-of-fact}] \label{proof:prop:no-braodcasting-kind-of-fact} 
	Suppose $\mathcal{E} \in \channels{X}{X \otimes Y}$ and such that $\tr_Y \circ \mathcal{E}  = \idchan_X$. 
	Let $\big\{K_i : \hilb{X} \rightarrow \hilb{X \otimes Y} \big\}_{i=1}^k$ (for some $k \in \mathbb{N}$) be a Kraus-representation of $\mathcal{E}$ and $\alpha \in \states{X}$, as well as $\{\ket{l}_X\}_{l=1}^{d_X}$ and $\{\ket{m}_Y\}_{m=1}^{d_Y}$ be ONBs of $\hilb{X}$ and $\hilb{Y}$, respectively. 	
	Then
	\begin{align}
		\tr_Y \circ \mathcal{E}(\alpha) \ = \ \sum_{i=1}^k \tr_Y \big[ K_i \alpha K_i^{\dagger} \big] 
		\ = \ \sum_{i=1}^k \sum_{j=1}^{d_Y}  \ \prescript{}{Y}{\bra{j}} K_i \alpha K_i^{\dagger} \ket{j}_Y \
	\end{align}
	and $\big\{L_{ij} :=  \prescript{}{Y}{\bra{j}} K_i \big\}_{i,j}$ is a Kraus representation of the marginal channel. 
	Since  $\tr_Y \circ \mathcal{E}  = \idchan_X$ and all Kraus representations of a CPTP map are isometrically related to each other, there exist $c_{ij} \in \mathbb{C}$ such that $\sum_{i,j} |c_{ij}|^2 =1$ and $L_{ij} = c_{ij} \idop_X$. 
	One then finds 
	\begin{align}
		 \mathcal{E}(\alpha) & 
		 	\ = \ \sum_{i=1}^k \ K_i \alpha K_i^{\dagger}  
		 	\ = \ \sum_{i=1}^k \sum_{l,l'=1}^{d_X} \sum_{m,m'=1}^{d_Y} \ \dyad{l,m} K_i \alpha K_i^{\dagger} \dyad{l',m'} \\
		 	& \ = \ \sum_{i=1}^k \sum_{l,l'=1}^{d_X} \sum_{m,m'=1}^{d_Y} \ \dyad{l,m}{l} L_{im} \alpha L_{im'}^{\dagger} \dyad{l'}{l',m'} \\
		 	& \ = \ \sum_{l,l'=1}^{d_X} \ \dyad{l}{l} \alpha  \dyad{l'}{l'} \otimes \Big( \sum_{i=1}^k  \sum_{m,m'=1}^{d_Y} c_{im} c^*_{im'} \dyad{m}{m'} \Big)  \ = \ \alpha \otimes \beta
	\end{align}
	for $\beta := \sum_{i=1}^k  \sum_{m,m'=1}^{d_Y} c_{im} c^*_{im'} \dyad{m}{m'}$. Since $\mathcal{E}$ is a channel, $\beta \in \states{Y}$, which concludes the proof.
\end{proof}

\begin{proof}[Proof of Prop.~\ref{prop:basic-properties-star-products}] \label{proof:prop:basic-properties-star-products}
	Concerning (1), the bilinearity of $\star^{\MPshort}$ is evident. 
	For $A,B \in \Herm{H}$ the product $AB$ (equivalently $BA$) is again Hermitian iff $[A,B]=0$; clearly, $\star^{\MPshort}$ is thus not hermiticity-preserving. 
	The claimed properties of $\mathcal{E} \MPodot (\argline)$ for some given $\mathcal{E} \in \channels{X}{Y}$ thus follow immediately, with trace-preservation following straightforwardly from that of  $\mathcal{E}$, since for all $A \in \operators{X}$ we have that 
	\begin{align}
		\tr[ \mathcal{E} \MPodot A ] \ = \  \tr \big[ D^J(\mathcal{E}) A \big]  
 		\ = \ \tr_X \big[ \tr_Y[ D^J(\mathcal{E}) ] A \big] \ = \ \tr(A) \ . \nonumber
	\end{align}
	Analogously for $\star^{\rMPshort}$. 
	
	As for (2), the bilinearity of $\star^{\JPshort}$ is evident. 
	Let $A,B \in \Herm{H}$. While $AB$ or $BA$ are generally not Hermitian, one does find that $(A \star^{JP} B)^{\dagger} = \frac{1}{2} \big((AB)^{\dagger} + (BA)^{\dagger} \big) =  \frac{1}{2} \big(B^{\dagger} A^{\dagger}  + A^{\dagger}B^{\dagger} \big) = \frac{1}{2} \big(BA + AB \big) = A \star^{JP} B$.  
	Suppose $A,B\succeq 0$. If also $[A,B]=0$ then of course $A\star^{\JPshort}B \succeq 0$, but also if $[A,B]\neq 0$ it may be the case that $A\star^{\JPshort}B \succeq 0$ due to the right kind of cancellations in a sum of non-PSD operators. Although maybe obvious enough, in order to see that this cancellation does not generally happen, consider e.g. 
	$A:= \frac{1}{2}\begin{pmatrix} 1 & 1 \\ 1 & 1 \end{pmatrix}$ and 
	$B:= \begin{pmatrix} 1 & 0 \\ 0 & 0 \end{pmatrix}$, 
	both understood as elements of $L(\mathbb{C}^2)$. 
	It is easy to verify that $A$ and $B$ are indeed Hermitian and positive-semidefinite, but $A \star^{JP} B$ has negative eigenvalue $\frac{1 - \sqrt{2}}{4}$. 
	Given $\mathcal{E} \in \channels{X}{Y}$ the claimed properties of $\mathcal{E} \JPodot (\argline)$ are then immediate with the trace-preservation property straightforward by the same reasons as for (1): for all $A \in \operators{X}$ we have that 
	\begin{align}
		\tr[ \mathcal{E} \JPodot A ] \ = \ \frac{1}{2} \tr \big[ D^J(\mathcal{E}) A + A D^J(\mathcal{E}) \big]  
		\ = \ \tr \big[ D^J(\mathcal{E}) A\big] \ = \ \tr(A) \ . \nonumber
	\end{align}

	Finally, concerning (3), the linearity of $\star^{\LSshort}$ in the first argument and non-linearity in the second is obvious. 
	That $B^{\frac{1}{2}}AB^{\frac{1}{2}} \in \Herm{H}$  whenever $A,B \in \Herm{H}$ is straightforward and 
	that $B^{\frac{1}{2}}AB^{\frac{1}{2}} \succeq 0$ whenever $A,B \succeq 0$ holds by Lem.~\ref{lem:psd-corrected}. 
	Given $\mathcal{E} \in \channels{X}{Y}$ the claimed properties of $\mathcal{E} \LSodot (\argline)$ are then implied---noting $D^J(\mathcal{E})$ is always Hermitian and again noting (both directions of) Lem.~\ref{lem:psd-corrected}--- with the trace-preservation property following straightforwardly just as for (1) and (2): for all $A \in \operators{X}$
	\begin{align}
		\tr[ \mathcal{E} \LSodot A ] \ = \ \tr[A^{1/2} D^J(\mathcal{E}) A^{1/2}] \ = \ \tr[D^J(\mathcal{E}) A] 
\ = \ \tr(A) \ . \nonumber
	\end{align}
\end{proof}

\subsection{Proofs Sec.~\ref{sec:general-results-sot-ambiguity}} \label{subsec:proofs-general-results-sot-ambiguity}

\begin{proof}[Proof of Lem.~\protect\ref{lem:general-fact-product-states}] \label{proof:lem:general-fact-product-states} 
	Suppose $\odot : \cpmaps{X}{Y} \times \operators{X} \rightarrow \operators{X \otimes Y}$ is an \sot\ map. 
	Let $\alpha \in \states{X}$ and $\beta \in \states{Y}$ then 
	\begin{align}
		\alpha \otimes \beta & \ = \ (\idop_X \otimes \beta) (\alpha \otimes \idop_Y) \ = \ D^J(\beta \circ \tr_X) (\alpha \otimes \idop_Y) 
		\ = \ (\beta \circ \tr_X) \odot \alpha \ , \label{eq:triviality-lemma-eq-2}
	\end{align}		
	where the last step holds by (C) of Def.~\ref{def:sot-map} seeing as there evidently is a product ONB wrt which $D^J(\beta \circ \tr_X)$ and $(\alpha \otimes \idop_Y)$ are simultaneously diagonal. 
	This establishes that $\alpha \otimes \beta \in \ASset{\sotmap}{X}{Y}$, and hence the last claim of Lem.~\protect\ref{lem:general-fact-product-states}. 
	
	Note that the same steps also establish (1) $\Rightarrow$ (2). 
	
	That (2) $\Rightarrow$ (3) is straightforward, again by essentially the same steps as in Eq.~\ref{eq:triviality-lemma-eq-2} just read backwards. In more detail, suppose $(2)$, i.e. $\rho_Y \circ \tr_X \in \ARCset{\sotmap}{X}{Y}{\rho}$, then for $\alpha \in \states{X}$ we have $(\rho_Y \circ \tr_X) \odot \alpha = D^J(\rho_Y \circ \tr_X) (\alpha \otimes \idop_Y) = (\idop_X \otimes \rho_Y) (\alpha \otimes \idop_Y) = \alpha \otimes \rho_Y$ and therefore, since it holds for all states $\alpha$, we have that $(\rho_Y \circ \tr_X) \odot (\argline) = \idchan_X \otimes \rho_Y$. 
	
	Finally, (3) $\Rightarrow$ (1) is evident, since by assumption $\rho = (\rho_Y \circ \tr_X) \odot (\rho_X) = \rho_X \otimes \rho_Y$. 
\end{proof}

\subsection{Proofs Sec.~\ref{subsec:useful-facts}} \label{subsec:proofs-useful-facts}

\begin{proof}[Proof of Lem.~\protect\ref{lem:psd-corrected}] \label{proof:lem:psd-corrected} 
Let $A,B \in \operators{X}$ with $A \succeq 0$. 
First suppose also $B \succeq 0$. Let $\ket{v} \in \hilb{X}$ and set $\ket{v'} = A^{1/2}\ket{v}$. 
By assumption then, and seeing as $A^{1/2}$ is Hermitian, $\bra{v'} B \ket{v'} = \bra{v}A^{1/2} B A^{1/2}\ket{v} \succeq 0$. 
Conversely, suppose $A^{1/2} B A^{1/2}  \succeq 0$. 
Then there exists $C \in \operators{X}$ such that $A^{1/2} B A^{1/2} = C^\dag C$; 
and multiplying both sides by $A^{-1/2}$ we obtain $\pi_AB\pi_A = A^{-1/2} C^\dag C A^{-1/2} = (C A^{-1/2})^\dag(C A^{-1/2}) $ and thus $\pi_AB\pi_A \succeq 0$.
\end{proof}

\begin{proof}[Proof of Lem.~\ref{lem:transpose-commutativity}] \label{proof:lem:transpose-commutativity}
	Let $A \in \operators{X}$ be Hermitian and $\{\ket{i} \}_{i=1}^{d_X}$ an eigenbasis of $A$, writing $A= \sum_i \alpha_i \dyad{i}{i}$; let $\mathcal{T}_X$ be wrt that eigenbasis and let $\{\ket{k} \}_{k=1}^{d_Y}$ be some ONB of $\hilb{Y}$. 
	Then for any $ B \in \operators{X \otimes Y}$ 
	\begin{align}
		\mathcal{T}_X \circ \Phi_{f(A)} (B) \ 
		& = \ \mathcal{T}_X \Big( \left(f(A) \otimes \idop_Y\right) B \left(f(A)^\dag \otimes \idop_Y\right) \Big) \\
		&= \ \mathcal{T}_X \Big( \Big(\sum_i f(\alpha_i) \dyad{i}{i} \otimes  \sum_k  \dyad{k}{k} \Big) B \Big(\sum_j f(\alpha_j)^* \dyad{j}{j} \otimes \sum_l  \dyad{l}{l} \Big) \Big) \\
		&= \ \sum_{i,j,k,l} f(\alpha_i) f(\alpha_j)^* \; \mathcal{T}_X \Big( \bra{i,k} B \ket{j,l} \ \dyad{i,k}{j,l}  \Big) \\
		&= \ \sum_{i,j,k,l} f(\alpha_i) f(\alpha_j)^*  \bra{j,k} B \ket{i,l} \ \dyad{i,k}{j,l} \\ 
		&= \ \sum_{i,j,k,l} f(\alpha_i) f(\alpha_j)^*  \bra{i,k} \mathcal{T}_X \big( B \big) \ket{j,l} \ \dyad{i,k}{j,l} \\
		&= \ \Big(\sum_i f(\alpha_i) \dyad{i}{i} \otimes  \sum_k  \dyad{k}{k} \Big) \mathcal{T}_X \big( B \big) \Big(\sum_j f(\alpha_j)^* \dyad{j}{j} \otimes \sum_l  \dyad{l}{l} \Big) \\ 
		& = \ \Phi_{f(A)} \circ \mathcal{T}_X (B)
	\end{align}
\end{proof}

\begin{proof}[Proof of Lem.~\ref{lem:support-inclusion-property}] \label{proof:lem:support-inclusion-property}
	We first prove the claim for positive rank-1 operators.  
	Let $\ket{\Psi} \in \hilb{X} \otimes \hilb{Y}$ and let $\ket{\Psi} = \sum_{i \in M} c_i \ket{i,i}$ denote a Schmidt-decomposition of it, where $M$ is finite, $c_i \neq 0$ for all $i \in M$ and $\ket{i,i} \in \hilb{X} \otimes \hilb{Y}$ are orthonormal vectors. 
	Then 
	\begin{align}
		\dyad{\Psi}{\Psi} \ & = \  \sum_{i,j \in M} c_i c_j^* \dyad{i,i}{j,j} \\
		 \tr_{Y} \left(\dyad{\Psi}{\Psi}\right) \ & = \  \sum_{i \in M} |c_i|^2 \dyad{i}{i} \\
		 \tr_{Y} \left(\dyad{\Psi}{\Psi}\right) \otimes \idop_Y \ & = \ \sum_{i \in M} \sum_{j \in \overline{M}} |c_i|^2 \dyad{i,j}{i,j} 
	\end{align}	 
	where $\overline{M}$ extends $M$ such that $\{\ket{i}\}_{i\in \overline{M}}$ is an ONB of $\hilb{Y}$. 
	Clearly, $supp(\dyad{\Psi}{\Psi}) = span(\{\ket{\Psi}\}) \subseteq span(\{ \ket{i,j} | i\in M, j\in \overline{M}\}) = supp(\tr_{Y} \left(\dyad{\Psi}{\Psi}\right) \otimes \idop_Y)$. 
	
	Next note that for $A,B \in \operators{Z}$ for any finite-dimensional $\hilb{Z}$ and such that $A,B \succeq 0$ it holds that 
	\begin{align}
		supp(A + B) & = ker(A+B)^{\perp} \stackrel{(a)}{=} \big( ker(A) \cap ker(B) \big)^{\perp} 
		= span \big( ker(A)^{\perp} \cup ker(B)^{\perp} \big) \label{eq:supp-equation} \\ 
		& = span \big( supp(A) \cup supp(B) \big) \nonumber 
	\end{align}
	where (a) is easy to see given the positive semi-definiteness of $A$ and $B$.\footnote{Clearly, $ker(A+B) \supseteq ker(A) \cap ker(B)$; for the converse note $A+B \succeq 0$ and for $\ket{\psi} \in \hilb{Z}$ it holds $0 = \bra{\psi} (A+B) \ket{\psi} = \bra{\psi} A  \ket{\psi} + \bra{\psi} B \ket{\psi}$ iff both summands on the RHS are zero and thus $ker(A+B) \subseteq ker(A) \cap ker(B)$.} 

	Now let $\rho \in \operators{X \otimes Y}$ such that $\rho \succeq 0$ and let $\rho = \sum_k p_k \dyad{\phi_k}{\phi_k}$ be an eigendecomposition, where $\{\ket{\phi_k}\}_k$ denotes an ONB of $\hilb{X} \otimes \hilb{Y}$ and $p_k \geq 0$ for all $k$. 
	Using the orthonormality of the ONB vectors, the result on the above rank-1 case, and Eq.~\eqref{eq:supp-equation} one finds
	\begin{align}
		supp(\rho) & = span\Big(\cup_k supp(p_k \dyad{\phi_k}{\phi_k})\Big) 
			\subseteq span\Big(\cup_k supp\big(p_k \tr_Y\left(\dyad{\phi_k}{\phi_k}\right) \otimes \idop_Y \big) \Big) \\ 
			& = supp\Big( \sum_k p_k \tr_Y\left(\dyad{\phi_k}{\phi_k} \right) \otimes \idop_Y \Big) 
			= supp\big( \rho_X \otimes \idop_Y \big)
	\end{align} 
\end{proof}

\subsection{Proofs Sec.~\ref{subsec:MP-ambiguity}} \label{subsec:proofs-MP-ambiguity}

\begin{lemma} \label{lem:MPodot-lemma} 
Given an \MPshort-ambiguous state, $\rho \in \ASset{\MPshort}{X}{Y}$, then for all $\mathcal{E} \in \ARCset{\MPshort}{X}{Y}{\rho}$ it holds that 
\begin{align}
	[D^J(\mathcal{E}), \rho_X] = 0 \quad \text{and} \quad D^J(\mathcal{E} \circ \Pi_X) \succeq 0 \ ,
\end{align}
where $\pi_X$ is with respect to $\rho_X$.
\end{lemma}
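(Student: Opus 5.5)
Write $J := D^J(\mathcal{E})$ and use the defining relation $\rho = \mathcal{E} \MPodot \rho_X = J(\rho_X \otimes \idop_Y)$. Every step comes from this equation, together with the Hermiticity of $\rho$ and $J$ and the positivity of $\rho$.

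\emph{Commutation.} Since $\rho$ is a state, $\rho = \rho^\dagger$. Because $J$ is Hermitian (Jamio{\l}kowski operators of Hermiticity-preserving maps are Hermitian), we obtain
\begin{align}
	J \rho_X = \rho = \rho^\dagger = \rho_X J .
\end{align}
This is exactly $[J, \rho_X] = 0$, with the usual padding by $\idop_Y$. It holds for every $\mathcal{E} \in \ARCset{\MPshort}{X}{Y}{\rho}$, since only the defining equation was used.

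\emph{Positivity.} First I will note that $D^J(\mathcal{E} \circ \Pi_X) = \pi_X J \pi_X$. This follows directly from the definition of $D^J$: composing with $\Pi_X$ at the input amounts to sandwiching by $\pi_X$ on the $X$ factor, up to a transpose of $\pi_X$ on the $X$ leg. That transpose is harmless if we work in an eigenbasis of $\rho_X$, where $\pi_X$ is real diagonal. I will check this small index computation carefully; it is the only place where a basis issue could arise.

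Next, $\rho_X$ and $J$ commute, so $\rho_X^{1/2}$ and $\rho_X^{-1/2}$ (the pseudo-inverse, as in the paper) are functions of $\rho_X$ and therefore also commute with $J$. Hence
\begin{align}
	\rho_X^{-1/2} \rho \, \rho_X^{-1/2} = \rho_X^{-1/2} J \rho_X \rho_X^{-1/2} = J \rho_X^{-1/2}\rho_X\rho_X^{-1/2} = J \pi_X = \pi_X J \pi_X ,
\end{align}
where the last equality uses $\pi_X^2 = \pi_X$ and $[J, \pi_X] = 0$. The left-hand side is positive by Lem.~\ref{lem:psd-corrected}, applied with $A = \rho_X^{-1/2} \otimes \idop_Y \succeq 0$ and $B = \rho \succeq 0$. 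Therefore $D^J(\mathcal{E} \circ \Pi_X) = \pi_X J \pi_X \succeq 0$.

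I expect no real obstacle. The one point needing care is the identification $D^J(\mathcal{E} \circ \Pi_X) = \pi_X J \pi_X$, in particular the transpose convention on the $X$ leg of $D^J$. Lem.~\ref{lem:support-inclusion-property} is not strictly needed here, though it confirms consistency: $\pi_X \rho \pi_X = \rho$.
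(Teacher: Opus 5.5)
Your proof is correct and follows essentially the same approach as the paper: Hermiticity of $\rho$ and of $D^J(\mathcal{E})$ gives the commutator, and positivity of $\rho$ together with commutation gives $\pi_X D^J(\mathcal{E})\pi_X \succeq 0$, a step the paper states tersely and you make explicit by conjugating with $\rho_X^{-1/2}$. Two small points: no transpose actually arises, since $D^J(\mathcal{E}\circ\Pi_X) = (\pi_X\otimes\idop_Y)\,D^J(\mathcal{E})\,(\pi_X\otimes\idop_Y)$ holds in any basis (a transpose appears only for $D^C$). Also, to obtain conjugation by $\rho_X^{-1/2}$ from Lem.~\ref{lem:psd-corrected} you should take $A=\rho_X^{-1}$ (the pseudo-inverse) rather than $A=\rho_X^{-1/2}$, or simply use that $CBC^\dagger\succeq 0$ whenever $B\succeq 0$.
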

\begin{proof} 
Let $\rho \in \ASset{\MPshort}{X}{Y}$ and $\mathcal{E} \in \ARCset{\MPshort}{X}{Y}{\rho}$, i.e. $\rho = \mathcal{E} \MPodot \rho_X = D^J(\mathcal{E}) \rho_X$. 
Recall that the Jamio{\l}kowski-representation of any CP map, in particular $D^J(\mathcal{E})$, is Hermitian. 
Thus the hermiticity of $\rho$, and therefore that of $D^J(\mathcal{E}) \rho_X$, implies that $[D^J(\mathcal{E}), \rho_X]=0$. 
Further, the positivity of $\rho$ together with $[D^J(\mathcal{E}), \rho_X]=0$ imply that 
$\pi_XD^J(\mathcal{E}) \pi_X \succeq 0$, which is equivalent to $D^J(\mathcal{E} \circ \Pi_X) \succeq 0$. 
\end{proof}

\begin{proof}[Proof of Thm. ~\ref{thm:OPA-equivalence}] \label{proof:thm:OPA-equivalence}	\

	``(1) $\Leftrightarrow$ (2)": Given the commutation condition observed in Lem.~\ref{lem:MPodot-lemma} it is immediate that $\ASset{\MPshort}{X}{Y} \subseteq \ASset{\rMPshort}{X}{Y}$. Analogous arguments give the opposite inclusion.  
	
	``(1) $\Rightarrow$ (3)": 
	Suppose $\rho \in \ASset{\MPshort}{X}{Y}$ and let $\mathcal{E} \in \ARCset{\MPshort}{X}{Y}{\rho}$, i.e. $\rho = D^J(\mathcal{E}) \rho_X$. 
	Lem.~\ref{lem:MPodot-lemma} tells us that $[D^J(\mathcal{E}), \rho_X]=0$ and $D^J(\mathcal{E} \circ \Pi_X) \succeq 0$. 
	First, it is immediate that then also $[\rho,\rho_X ]=0$ and that $\rho = D^J(\mathcal{E} \circ \Pi_X) \rho_X$, as well as $[D^J(\mathcal{E} \circ \Pi_X), \rho_X] = 0$. 
	Second, letting $\mathcal{T}_X$ be with respect to an eigenbasis of $\rho_X$ it now follows from Prop.~\ref{prop:positive-DJ-equiv-to-PPT} 	that $\exists \eta \in \states{X \otimes Y}$ such that 
	$D^J(\mathcal{E} \circ \Pi_X) = \ N \mathcal{T}_X \big( \eta \big) $ for a suitable $N > 0$. 	 
	Hence, 
	\begin{align}
		\mathcal{T}_X\big( \rho \big) 
			\ = \ N \mathcal{T}_X \Big( \mathcal{T}_X \big( \eta \big) \rho_X \Big) 
			\ = \ N \rho_X^{1/2} \mathcal{T}_X\Big( \mathcal{T}_X \big( \eta \big) \Big) \rho_X^{1/2} 
			\ = \ N	\rho_X^{1/2} \eta \rho_X^{1/2} \ \succeq \ 0
	\end{align} 
	where the second equality uses $[\mathcal{T}_X \big( \eta \big), \rho_X^{1/2}]=0$ and Lem.~\ref{lem:transpose-commutativity}, and the last step Lem.~\ref{lem:psd-corrected}. 
	This establishes that $\rho$ is \CXPPT.
	
	``(1) $\Leftarrow$ (3)": Suppose $\rho \in \states{X \otimes Y}$ is \CXPPT, i.e. PPT and it holds that $[\rho,\rho_X ]=0$. 
	Let $\mathcal{T}_X$ be with respect to an eigenbasis of $\rho_X$.  
	Set $r:= rank(\rho_X)$ and $\eta := \frac{1}{r} \rho_X^{-1/2} \mathcal{T}_X(\rho) \rho_X^{-1/2}$. 
	By assumption and Lem.~\ref{lem:psd-corrected} we have that $\eta \succeq 0$ and it is easy to check (using Prop.~\ref{prop:partial-transpose-trace-properties}) that $\tr(\eta)=1$, hence  $\eta  \in \states{X \otimes Y}$. 
	It is also clear that $\eta$ is PPT (Lem.~\ref{lem:transpose-commutativity} and Lem.~\ref{lem:psd-corrected}) with $\eta = (1/r) \pi_X$. 
	Then 
	\begin{align}
		\rho 
		\ & \stackrel{\text{\tiny Lem.~\ref{lem:support-inclusion-property}}}{=} \ \rho  \pi_X 
		\ = \ \rho \rho_X^{-1} \rho_X 
		\ = \ \rho_X^{-1/2} \rho \rho_X^{-1/2} \rho_X \\
		 & = \ r \frac{1}{r} \rho_X^{-1/2} \mathcal{T}_X\big(\mathcal{T}_X(\rho) \big) \rho_X^{-1/2} \rho_X
		\  \stackrel{\text{\tiny Lem.~\ref{lem:transpose-commutativity}}}{=} \ r \mathcal{T}_X \Big( \frac{1}{r} \rho_X^{-1/2} \mathcal{T}_X(\rho) \rho_X^{-1/2} \Big) \rho_X \\
		& = \ r \mathcal{T}_X \Big( \eta \Big) \rho_X
	\end{align}
	Noting Prop.~\ref{prop:positive-DJ-equiv-to-PPT}, $r \mathcal{T}_X \Big( \eta \Big) = D^J(\mathcal{E})$ for some $\mathcal{E} \in \cpmaps{X}{Y}$. By analogous arguments as in the proof of Lem.~\ref{lem:MPodot-lemma}, we have $\mathcal{E} = \mathcal{E} \circ \Pi_X$. Clearly, for any $\mathcal{E'} \in \channels{X}{Y}$, $\widetilde{\mathcal{E}} := \mathcal{E} + \mathcal{E}' \circ \Pi_X^{\perp} \in \channels{X}{Y}$ and it holds $\rho = D^J(\widetilde{\mathcal{E}}) \rho_X$, which completes the proof.
\end{proof}

\begin{proof}[Proof of Prop.~\ref{prop:unique-CP-map-for-OP}] \label{proof:prop:unique-CP-map-for-OP}
	Let $\rho \in \ASset{\MPshort}{X}{Y}$ and $\mathcal{E}, \mathcal{E}' \in \ARCset{\MPshort}{X}{Y}{\rho}$. 
	By assumption, after multiplication with $\rho_X^{-1}$ (from the right), we have 
	\begin{align}
		D^J(\mathcal{E})\pi_X \ = \ \rho \rho_X^{-1} \ = \ D^J(\mathcal{E}')\pi_X \ . \label{eq:OP-uniqueness-eq1}
	\end{align}
	Note that (see proof of Lem.~\ref{lem:MPodot-lemma}) $[D^J(\mathcal{E}), \rho_X] = 0$ and $[D^J(\mathcal{E}'), \rho_X] = 0$ and hence also $[D^J(\mathcal{E}), \pi_X] = 0$ and $[D^J(\mathcal{E}'), \pi_X] = 0$. 
	Thus by multiplying Eq.~\eqref{eq:OP-uniqueness-eq1} with $\pi_X$ from the left we find 
	\[
		D^J(\mathcal{E} \circ \Pi_X) = \pi_X D^J(\mathcal{E})\pi_X = \pi_X D^J(\mathcal{E}')\pi_X = D^J(\mathcal{E}' \circ \Pi_X)	
	\]
	and hence $\mathcal{E} \circ \Pi_X = \mathcal{E}' \circ \Pi_X$.  
	Defining $\MPCPmap{\rho} := \mathcal{E} \circ \Pi_X $, the above establishes (1).  
	(In particular, clearly for any $\mathcal{E}' \in \channels{X}{Y}$ we have $\MPCPmap{\rho} + \mathcal{E}' \circ \Pi_X^{\perp} \in \ARCset{\MPshort}{X}{Y}{\rho}$.) 
	The above also gives $D^J(\MPCPmap{\rho}) = \rho \rho_X^{-1}$, which is claim (2); 
	claim (3) follows then from $[\rho, \rho_X] = 0$ (Thm.~\ref{thm:OPA-equivalence}). 	
\end{proof}

\subsection{Proofs Sec.~\ref{subsec:LSA-ambiguity}} \label{subsec:proofs-LSA-ambiguity}

For the sake of an analogous presentation compared with \MPshort\ and \JPshort, Sec.~\ref{subsec:LSA-ambiguity} presented first Thm.~\ref{thm:LSA-PPT-theorem} and then Prop.\ref{prop:basic-LSA-propeties}, while the proofs here are given in the reverse order, simply as it makes them more straightforward.

\begin{proof}[Proof of Prop.\ref{prop:basic-LSA-propeties}] \label{proof:prop:basic-LSA-propeties}
	Let $\rho \in \ASset{\LSshort}{X}{Y}$ and $\mathcal{E}, \mathcal{E}' \in \ARCset{\LSshort}{X}{Y}{\rho}$. 
	Then  
	\begin{align}
		0 \ & = \ D^J(\mathcal{E}) \star^{LS} \rho_X \ - \ D^J(\mathcal{E}') \star^{LS} \rho_X \\
			& = \ \rho_X^{1/2} \left( D^J(\mathcal{E}) - D^J(\mathcal{E}') \right) \rho_X^{1/2} \ . 
	\end{align}
	Multiplication with $\rho_X^{-1/2}$ from both sides yields $0 = \pi_X \left( D^J(\mathcal{E}) - D^J(\mathcal{E}') \right)  \pi_X$ and hence, $\pi_X D^J(\mathcal{E}) \pi_X = \pi_X D^J(\mathcal{E}') \pi_X$, 	
	which can easily be seen to be equivalent to $\mathcal{E} \circ \Pi_X =\mathcal{E}' \circ \Pi_X$. 
	Define $\LSCPmap{\rho} := \mathcal{E} \circ \Pi_X$. 
	Since for any $\mathcal{E}' \in \channels{X}{Y}$ we have 
	\[ 
		\rho_X^{1/2} \left( D^J(\LSCPmap{\rho}) + \pi_X^{\perp} D^J(\mathcal{E}') \pi_X^{\perp} \right) \rho_X^{1/2} = \rho_X^{1/2} D^J(\LSCPmap{\rho}) \rho_X^{1/2} = \rho
	\] 
	we find that $\LSCPmap{\rho} + \mathcal{E}' \circ \Pi_X^{\perp} \in \ARCset{\LSshort}{X}{Y}{\rho}$.	
	This establishes (1). 
	Given (1) and noting that in particular $\LSCPmap{\rho} = \LSCPmap{\rho} \circ \Pi_X$, (2) is immediate from multiplying $\rho = \LSCPmap{\rho} \LSodot \rho_X$ with $\rho_X^{-1/2}$ on both sides. 
	Finally, (3) is immediate from (2) together with Lem.~\ref{lem:psd-corrected}. 
\end{proof}

\begin{proof}[Proof of Thm.~\ref{thm:LSA-PPT-theorem}] \label{proof:thm:LSA-PPT-theorem}
Let $\rho \in \states{X \otimes Y}$.\\ 
	\noindent ``$\Rightarrow$": Suppose $\rho \in \ASset{\LSshort}{X}{Y}$, then with Prop.\ref{prop:basic-LSA-propeties} we have that $\rho = \LSCPmap{\rho} \LSodot \rho_X = \rho_X^{1/2} \ D^J(\LSCPmap{\rho}) \ \rho_X^{1/2}$.  
	Let $\mathcal{T}_X$ denote the partial transpose with respect to an eigenbasis of $\rho_X$ and let $D^C(\LSCPmap{\rho})$ be the Choi state with respect to that same basis. Then 
	\begin{align}
		\mathcal{T}_X \left(\rho \right) & \ = \ \mathcal{T}_X \left(  \rho_X^{1/2} \ D^J(\mathcal{E}) \ \rho_X^{1/2}  \right) 
		\ \stackrel{\text{\tiny Lem.~\ref{lem:transpose-commutativity}}}{=} \ \rho_X^{1/2}  \ \mathcal{T}_X \left( D^J(\mathcal{E}) \right) \ \rho_X^{1/2} \\ 
		& \ = \ \rho_X^{1/2}  \ D^C(\mathcal{E}) \ \rho_X^{1/2} 
		\ \stackrel{\text{{\tiny Lem.~\ref{lem:psd-corrected}}}}{\succeq} \ 0
	\end{align}
\noindent ``$\Leftarrow$": Let $\mathcal{T}_X$ again denote the partial transpose with respect to an eigenbasis of $\rho_X$ and suppose $\mathcal{T}_X \left(\rho \right) \succeq 0$. 
	Define the following operator
	\begin{align}
		\gamma := \rho_X^{-1/2} \ \mathcal{T}_X \left( \rho  \right) \ \rho_X^{-1/2}  \label{eq:def-gamma}
	\end{align}
	{Due to Lem.~\ref{lem:psd-corrected}} we have $\gamma \succeq 0$ and it thus defines a $\mathcal{C} \in CP(X,Y)$ via $D^C(\mathcal{C}) := \gamma$, which is trace-non-increasing:
	\begin{align}
		\tr_Y \left( D^C(\mathcal{C}) \right) & 
		\ = \ \tr_Y \left( \rho_X^{-1/2}  \ \mathcal{T}_X(\rho) \ \rho_X^{-1/2} \right) 
		\ \stackrel{\text{\tiny Lem.~\ref{lem:transpose-commutativity}}}{=} \ \tr_Y \left( \mathcal{T}_X \left( \rho_X^{-1/2} \ \rho \ \rho_X^{-1/2}  \right) \right) \\
		& \ \stackrel{\text{\tiny Prop.~\ref{prop:partial-transpose-trace-properties}}}{=} \ \mathcal{T}_X \left( \tr_Y \left( \rho_X^{-1/2}  \ \rho \ \rho_X^{-1/2}  \right) \right)  
		\ = \  \mathcal{T}_X \left( \rho_X^{-1/2} \rho_X \ \rho_X^{-1/2} \right) 
		\ = \ \mathcal{T}_X \left( \pi_X \right) \\
		 &  \ = \ \pi_X
	\end{align}
	Note that $\gamma  = \pi_X \gamma \pi_X $ and thus $\mathcal{C} = \mathcal{C} \circ \Pi_X$. 
	Let $\widetilde{\mathcal{C}} \in \channels{X}{Y}$ be such that 
	$\mathcal{C} = \widetilde{\mathcal{C}} \circ \Pi_X$, i.e. 
	$\gamma = \pi_X \ D^C(\widetilde{\mathcal{C}}) \ \pi_X$  
	(e.g. $ \widetilde{\mathcal{C}} := \mathcal{C} + \mathcal{C}' \circ \Pi_X^{\perp}$ for any $\mathcal{C}' \in \channels{X}{Y}$).
	Now we find
	\begin{align}
		 \widetilde{\mathcal{C}} \LSodot \rho_X 
			& \ = \ \rho_X^{1/2}  \ D^J(\widetilde{\mathcal{C}}) \ \rho_X^{1/2}  
			\ = \ \rho_X^{1/2}  \ \pi_X \ D^J(\widetilde{\mathcal{C}}) \ \pi_X \ \rho_X^{1/2} \\
			& \stackrel{ }{=} \ \rho_X^{1/2} \ \pi_X \ \mathcal{T}_X \left( D^C(\widetilde{\mathcal{C}}) \right) \ \pi_X \ \rho_X^{1/2}  \\
			& \stackrel{\text{\tiny Lem.~\ref{lem:transpose-commutativity}}}{=} \ \rho_X^{1/2}  \  \mathcal{T}_X \left( \pi_X \ D^C(\widetilde{\mathcal{C}}) \ \pi_X \right) \ \rho_X^{1/2} 
			\ =  \ \rho_X^{1/2}  \  \mathcal{T}_X \left( \gamma \right) \ \rho_X^{1/2}  \\
			& \stackrel{\text{\tiny Eq.~\eqref{eq:def-gamma}}}{=} \ \rho_X^{1/2}  \ \mathcal{T}_X \left( \rho_X^{-1/2} \ \mathcal{T}_X \left( \rho  \right) \ \rho_X^{-1/2}  \right) \ \rho_X^{1/2}  \\
			& \stackrel{\text{\tiny Lem.~\ref{lem:transpose-commutativity}}}{=} \ \rho_X^{1/2}  \ \rho_X^{-1/2}  \ \mathcal{T}_X \left(  \mathcal{T}_X \left( \rho  \right) \right) \ \rho_X^{-1/2}  \ \rho_X^{1/2} 
			\ = \pi_X  \ \rho \ \pi_X \ = \ \rho
	\end{align}
	Thus $\rho \in \ASset{\LSshort}{X}{Y}$. 
\end{proof}

\subsection{Proofs Sec.~\ref{subsec:Jordan-product-based-ambiguity}} \label{subsec:proofs-Jordan-product-based-ambiguity}

\begin{proposition}	 \label{prop:F-rho}
	Let $\rho \in \states{X \otimes Y}$, then 
	\begin{align}
	F_{\rho} \ := \ \int_{0}^{\infty} d\tau \ e^{-1/2 \rho_X \tau} \ \rho \ e^{-1/2 \rho_X \tau}
\end{align}
	satisfies the following properties: 
	(1) $\rho \ = \ F_{\rho} \ \star^{JP} \ \rho_X $; 
	(2) $F_{\rho} \succeq 0$; 
	(3) $\tr_Y\left( F_{\rho} \right) = \pi_X$. 
\end{proposition}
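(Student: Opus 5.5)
The plan is to work throughout in an eigenbasis of $\rho_X$, say $\rho_X = \sum_k p_k \dyad{k}{k}$ with $p_k > 0$ exactly for $k \le r$, and to write $\rho = \sum_{k,l} \dyad{k}{k}\rho\dyad{l}{l}$ in blocks with respect to this basis (each block $\dyad{k}{k}\rho\dyad{l}{l}$ being an operator on $\hilb{X}\otimes\hilb{Y}$). By Lem.~\ref{lem:support-inclusion-property} we have $\pi_X\rho\pi_X = \rho$. Hence only blocks with $k,l \le r$ are nonzero, so the integrand $e^{-\rho_X\tau/2}\rho\, e^{-\rho_X\tau/2}$ acts blockwise as multiplication by $e^{-(p_k+p_l)\tau/2}$. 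Integrating gives the explicit formula
\begin{align}
F_\rho \ = \ \sum_{k,l=1}^{r} \frac{2}{p_k+p_l}\, \dyad{k}{k}\rho\dyad{l}{l} .
\end{align}
In particular the integral converges, since only strictly positive exponents appear. Establishing this closed form is the first step, and everything else follows from it.

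For (1), compute $\frac12\{F_\rho, \rho_X\}$ blockwise. The $(k,l)$ block of $F_\rho\rho_X + \rho_X F_\rho$ equals $\frac{2}{p_k+p_l}(p_l + p_k)\dyad{k}{k}\rho\dyad{l}{l}$. Halving this returns exactly the $(k,l)$ block of $\rho$. Summing over all $k,l \le r$ then recovers $\pi_X\rho\pi_X = \rho$.

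For (3), use that partial trace over $Y$ commutes with the $X$-block structure, so $\tr_Y(\dyad{k}{k}\rho\dyad{l}{l}) = \dyad{k}{k}\rho_X\dyad{l}{l} = \delta_{kl}\, p_k \dyad{k}{k}$. Plugging this into the formula gives $\tr_Y F_\rho = \sum_{k\le r}\frac{2}{2p_k}p_k\dyad{k}{k} = \pi_X$.

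For (2), I would argue directly from the integral rather than from the closed form. For each $\tau$ the integrand $A_\tau \rho A_\tau$ with $A_\tau = e^{-\rho_X\tau/2}\otimes \idop_Y$ Hermitian is PSD by Lem.~\ref{lem:psd-corrected}. An integral of PSD operators with nonnegative weight is PSD, since for any vector $\ket{v}$ the quantity $\bra{v}F_\rho\ket{v}$ is an integral of nonnegative numbers. The only real subtlety is convergence of the integral, because $e^{-\rho_X\tau/2}$ does not decay on $\ker\rho_X$. This is exactly where the support inclusion of Lem.~\ref{lem:support-inclusion-property} is needed: it guarantees that the integrand lives on $\mathrm{supp}(\rho_X)\otimes\hilb{Y}$, where it decays exponentially. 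I expect this convergence and support bookkeeping to be the only point requiring care; the rest is routine.
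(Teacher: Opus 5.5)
Your proof is correct, but it is organised differently from the paper's. The paper argues without choosing a basis and treats (1) as a Lyapunov equation. Since $\frac{d}{d\tau}\big(e^{-\rho_X\tau/2}\rho\, e^{-\rho_X\tau/2}\big) = -\tfrac12\big\{\rho_X,\, e^{-\rho_X\tau/2}\rho\, e^{-\rho_X\tau/2}\big\}$, the fundamental theorem of calculus turns $F_\rho\star^{JP}\rho_X$ into a boundary term. It does (3) the same way via $\tr_Y(F_\rho)=\int_0^\infty e^{-\rho_X\tau}\rho_X\,d\tau$. For (2) it applies Lem.~\ref{lem:psd-corrected} to the integrand, exactly as you do.

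You instead diagonalise $\rho_X$ first and derive the closed form $F_\rho=\sum_{k,l\le r}\frac{2}{p_k+p_l}\dyad{k}{k}\rho\dyad{l}{l}$. After that, (1) and (3) are one-line blockwise checks.

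Your route makes convergence and the behaviour as $\tau\to\infty$ fully transparent. In the paper's computation the term at $\tau=\infty$ is $\pi_X^\perp\rho\,\pi_X^\perp$ in (1), which vanishes only by Lem.~\ref{lem:support-inclusion-property}. In (3) it is $\pi_X^\perp$, which is what turns $\idop_X$ into $\pi_X$. The paper's displayed steps compress this bookkeeping.

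Your closed form is also exactly the computation the paper performs later, in the proof of Lem.~\ref{lem:F-rho-via-D}, to get $F_\rho=\rho_X^{-1/2}\,\dephchannel_{\rho_X}(\rho)\,\rho_X^{-1/2}$. So you are front-loading work that is needed anyway.

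The paper's version is shorter and needs no eigenbasis. It also keeps the Lyapunov structure in view, which its uniqueness argument in Prop.~\ref{prop:JPA-channel-uniqueness} relies on.
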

\begin{proof} 
	First, $F_{\rho}$ is the known solution to condition (1) seen as a Lyapunov equation as is straightforward to verify: 
	\begin{align}
		F_{\rho} \ \star^{JP} \ \rho_X \ & = \ \frac{1}{2} \left(F_{\rho} \rho_X \ + \ \rho_X F_{\rho} \right) \\
			& = \ \frac{1}{2} \int_{0}^{\infty} d\tau \ \left(  e^{-1/2 \rho_X \tau} \ \rho \ e^{-1/2 \rho_X \tau}  \rho_X  \ + \ \rho_X  e^{-1/2 \rho_X \tau} \ \rho \ e^{-1/2 \rho_X \tau} \right) \\ 
			& = \ - \int_{0}^{\infty} d\tau \ \frac{d}{d\tau} \left( e^{-1/2 \rho_X \tau} \ \rho \ e^{-1/2 \rho_X \tau} \right) \\
			& = \ - \left. e^{-1/2 \rho_X \tau} \ \rho \ e^{-1/2 \rho_X \tau} \right|_{0}^{\infty} 
			\ = \ \pi_X \ \rho \ \pi_X \ \stackrel{\text{\tiny Lem.~\ref{lem:support-inclusion-property}}}{=} \ \rho 
	\end{align} 
	(2) is immediate from Lem.~\ref{lem:psd-corrected}. 
	(3) is a straightforward calculation:
	\begin{align}
		\tr_Y \left( F_{\rho} \right) \ & = \ \tr_Y \left( \int_{0}^{\infty} d\tau \ e^{-1/2 \rho_X \tau} \ \rho \ e^{-1/2 \rho_X \tau} \right) \\
		& = \ \int_{0}^{\infty} d\tau \  \tr_Y \left( e^{-1/2 \rho_X \tau} \ \rho \ e^{-1/2 \rho_X \tau}  \right) \\
		& = \ \int_{0}^{\infty} d\tau \ e^{-1/2 \rho_X \tau} \ \rho_X \ e^{-1/2 \rho_X \tau} 
		\ = \ \int_{0}^{\infty} d\tau \ e^{-\rho_X \tau} \ \rho_X  \\
		& = \ \left. - e^{-\rho_X \tau} \right|_{0}^{\infty} \ = \ e^{-\rho_X 0} \ = \ \pi_X
	\end{align}
\end{proof}

\begin{proof}[Proof of Prop.~\ref{prop:channel-D}] \label{proof:prop:channel-D}
		In order to establish that ${\rleditb{\dephchannel}}_{\eta}$ is a CP map, the following will be useful. For each $i=1,...,r$ set $z_i := \ln(p_i)$ and observe that 
	\begin{align}
		\frac{2\sqrt{p_k p_l}}{p_k + p_l} = \frac{2e^{\frac{z_k + z_l}{2}}}{e^{z_k} + e^{z_l}} = \frac{2}{e^{\frac{z_k - z_l}{2}} + e^{-\frac{z_k - z_l}{2}}} = \sech( \frac{z_k - z_l}{2} ) 
	\end{align}	 
	and that the hyperbolic secant is a fixed point of the Fourier transform, in particular,
	\begin{align}
		\sech( \frac{\omega}{2} ) = \int_{-\infty}^{\infty} \sech(\pi t) e^{i t \omega} dt .
	\end{align}
	Let $A \in \operators{X}$. 
	\begin{align}
		{\rleditb{\dephchannel}}_{\eta} (A) \ & = \ \sum_{k,l=1}^{r} \ \frac{2 \sqrt{p_k p_l}}{p_k + p_l} \  \dyad{k}{k} A \dyad{l}{l}\\
		& = \ \sum_{k,l=1}^{r} \ \left( \int_{-\infty}^{\infty} dt \sech(\pi t) e^{i t (z_k - z_l)} \right)  \ \dyad{k}{k} A \dyad{l}{l} \\
		& = \ \sum_{k,l=1}^{d_X} \ \int_{-\infty}^{\infty} dt \sqrt{\sech(\pi t)} \ p_k^{it} \dyad{k}{k} A \dyad{l}{l} p_l^{-it} \sqrt{\sech(\pi t)} \\
		& = \ \sum_{k,l=1}^{d_X} \ \int_{-\infty}^{\infty} dt \sqrt{\sech(\pi t)} \ \eta^{it} \dyad{k}{k} A \dyad{l}{l} \eta^{-it} \sqrt{\sech(\pi t)} \\
		& = \ \int_{-\infty}^{\infty} dt \ \sqrt{\sech(\pi t)} \ \eta^{it} \ A \ \eta^{-it} \sqrt{\sech(\pi t)} \\
		& = \ \int_{-\infty}^{\infty} dt \ K_t \ A \ K_t^{\dagger}
	\end{align}
	for the Kraus-operators $K_t := \sqrt{\sech(\pi t)} \ \eta^{it}$. 
	Also note
	\begin{align}
		 \int_{-\infty}^{\infty} dt \  \ K_i^{\dagger} K_t \ & = \ \int_{-\infty}^{\infty} dt \ \sech(\pi t)  \ \eta^{-it} \eta^{it} \ = \ \left( \int_{-\infty}^{\infty} dt \ \sech(\pi t) \right) \pi_X \ = \  \pi_X
	\end{align}
	where $\pi_X$ is wrt $\eta$. 
	Hence, ${\rleditb{\dephchannel}}_{\eta}$ is indeed a (trace-non-increasing) CP map. 
\end{proof}

\begin{lemma} \label{lem:F-rho-via-D}
	Let $\rho \in \states{X \otimes Y}$ then 
	\begin{align}
		F_{\rho} \ = \  \rho_X^{-1/2} \ {\rleditb{\dephchannel}}_{\rho_X} (\rho) \  \rho_X^{-1/2} \label{eq:F-rho-via-D}
	\end{align}
\end{lemma}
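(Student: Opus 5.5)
The plan is a direct computation in an eigenbasis of $\rho_X$. We write $\rho_X = \sum_{k=1}^{d_X} p_k \dyad{k}{k}$, with $p_k>0$ exactly for $k\le r$, and let $\{\ket{m}\}$ be any ONB of $\hilb{Y}$. We then compare the matrix elements $\bra{k,m}(\argline)\ket{l,n}$ of both sides of Eq.~\eqref{eq:F-rho-via-D}. Since all operators acting on $X$ in $F_\rho$ are functions of $\rho_X$, they are diagonal in this basis. So it suffices to work blockwise with the operator blocks $\rho_{kl} := (\bra{k}\otimes \idop_Y)\,\rho\,(\ket{l}\otimes\idop_Y) \in \operators{Y}$.

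For the left-hand side, we have $e^{-\frac12\rho_X\tau}\ket{k} = e^{-\frac12 p_k\tau}\ket{k}$. This gives $(F_\rho)_{kl} = \int_0^\infty e^{-\frac12(p_k+p_l)\tau}\,d\tau\;\rho_{kl}$. If $k,l\le r$, the integral converges to $\frac{2}{p_k+p_l}$. If $k>r$ or $l>r$, then $\rho_{kl}=0$ by Lem.~\ref{lem:support-inclusion-property} (since $\pi_X\rho\pi_X=\rho$). Strictly speaking the integral diverges when both $p_k=p_l=0$, so I will adopt the natural reading that in that case the integrand vanishes identically and the contribution is $0$. Equivalently, $F_\rho$ is understood with $\rho$ replaced by $\pi_X\rho\pi_X$, which is consistent with property (3) of Prop.~\ref{prop:F-rho}. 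Making this degenerate case explicit is the only delicate point; the rest is bookkeeping.

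For the right-hand side, Eq.~\eqref{eq:def-D-rho} applied blockwise (with $\idchan_Y$ suppressed) gives $\big(\dephchannel_{\rho_X}(\rho)\big)_{kl} = \frac{2\sqrt{p_kp_l}}{p_k+p_l}\rho_{kl}$ for $k,l\le r$, and $0$ otherwise. Conjugating with the pseudo-inverse square root $\rho_X^{-1/2}$, which acts as $p_k^{-1/2}$ on $\ket{k}$ for $k\le r$ and as $0$ otherwise, multiplies the $(k,l)$ block by $(p_kp_l)^{-1/2}$. The resulting block is $\frac{2}{p_k+p_l}\rho_{kl}$ for $k,l\le r$, and zero outside the support. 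This matches the left-hand side block by block, which proves the identity.
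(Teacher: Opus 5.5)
Your proof is correct and follows essentially the same route as the paper: expand $F_\rho$ blockwise in an eigenbasis of $\rho_X$, evaluate $\int_0^\infty e^{-\frac12(p_k+p_l)\tau}\,d\tau = \frac{2}{p_k+p_l}$ on the support, and then factor out $\rho_X^{-1/2}$ on both sides to recognise $\dephchannel_{\rho_X}(\rho)$. Your treatment of the off-support blocks is more explicit than the paper's, which silently restricts the sum to $i,j\le r$. There is no genuine divergence to interpret away, though: by Lem.~\ref{lem:support-inclusion-property}, the operator-valued integrand is identically zero on those blocks, so $F_\rho$ is well defined as it stands.
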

\begin{proof} 
	Let $\rho \in \states{X \otimes Y}$ and write $\rho_X = \sum_i p_i \dyad{i}{i}$ for an eigendecomposition of $\rho_X$ (with $r=rank(\rho_X)$ and wlog assuming that the support of $\rho_X$ is spanned by the first $r$ basis vectors). 
\begin{align}
	F_{\rho} \ 
		& = \ \int_{0}^{\infty} d\tau \ e^{-1/2 \rho_X \tau} \rho \ e^{-1/2 \rho_X \tau} \\
		& = \ \sum_{i,j=1}^{d_X} \ \int_{0}^{\infty} d\tau \ e^{-1/2 \rho_X \tau} \ \big( \dyad{i}{i} \otimes \idop_Y \big) \rho \big( \dyad{j}{j} \otimes \idop_Y \big) \ e^{-1/2 \rho_X \tau} \label{eq:Frho-calc-a-2} \\
		& = \ \sum_{i,j=1}^r \ \Big( \int_{0}^{\infty} d\tau \ e^{-1/2 (p_i + p_j) \tau} \Big) \big( \dyad{i}{i} \otimes \idop_Y \big) \rho \ \big( \dyad{j}{j} \otimes \idop_Y \big) \label{eq:Frho-calc-a-3} \\ 
		& = \ \sum_{i,j=1}^r \ \frac{2}{p_i + p_j} \ \big( \dyad{i}{i} \otimes \idop_Y \big) \rho \big( \dyad{j}{j} \otimes \idop_Y \big) \\
		& = \ \sum_{i,j=1}^r \ \frac{2\sqrt{p_i p_j}}{p_i + p_j} \  p_i^{-1/2}  \big( \dyad{i}{i} \otimes \idop_Y \big) \rho \big( \dyad{j}{j} \otimes \idop_Y \big) p_j^{-1/2} \\
		& = \ \rho_X^{-1/2} \sum_{i,j=1}^r \ \frac{2\sqrt{p_i p_j}}{p_i + p_j} \  \big( \dyad{i}{i} \otimes \idop_Y \big) \rho \big( \dyad{j}{j} \otimes \idop_Y \big) \rho_X^{-1/2} \\
		& = \ \rho_X^{-1/2} \ {\rleditb{\dephchannel}}_{\rho_X} (\rho) \  \rho_X^{-1/2} 
\end{align}  
\end{proof}

\begin{proof}[Proof of Prop.~\ref{prop:JPA-channel-uniqueness}] \label{proof:prop:JPA-channel-uniqueness} 
Let $\rho \in \ASset{\JPshort}{X}{Y}$ and $\mathcal{E} \in \ARCset{\JPshort}{X}{Y}{\rho}$. 
Using $\rho = \pi_X \rho \pi_X $ (see Lem.~\ref{lem:support-inclusion-property}) we find
\begin{align}
	\rho & \ = \ D^J(\mathcal{E}) \star^{JP} \rho_X \ = \ \pi_X \rho \pi_X 
	 \ = \ \frac{1}{2} \left( \pi_X D^J(\mathcal{E}) \rho_X \ + \ \rho_X D^J(\mathcal{E}) \pi_X  \right) \\
	& = \ \frac{1}{2} \left( \pi_X D^J(\mathcal{E}) \pi_X \rho_X \ + \ \rho_X \pi_X D^J(\mathcal{E}) \pi_X  \right) 
	\ = \ \pi_X D^J(\mathcal{E}) \pi_X \ \star^{JP} \rho_X 
\end{align}	
Note that $\pi_X D^J(\mathcal{E}) \pi_X =  D^J(\mathcal{E} \circ \Pi_X)$ and define 
$\JPCPmap{\rho} := \mathcal{E} \circ \Pi_X \in \cpmaps{X}{Y}$. 
Now the uniqueness of the solution of the Lyapunov equation---the Hurwitz-stability criterion amounts to precisely restricting to the support of $\rho_X$ here---implies that $D^J(\JPCPmap{\rho})$ is that unique solution and hence $\forall \mathcal{C} \in \ARCset{\JPshort}{X}{Y}{\rho}$ we have that $\mathcal{C} \circ \Pi_X = \JPCPmap{\rho}$. 
Moreover since for any $\mathcal{E}' \in \channels{X}{Y}$ we have $D^J(\mathcal{E}' \circ \Pi_X^{\perp}) = \pi_X^{\perp}D^J(\mathcal{E}') \pi_X^{\perp}$, by the above steps it is hence also obvious that $\JPCPmap{\rho} + \mathcal{E}' \circ \Pi_X^{\perp} \in \ARCset{\JPshort}{X}{Y}{\rho}$. 
This establishes (1). 
This uniqueness of $D^J(\JPCPmap{\rho})$ as a solution of $\rho \ = \ F_{\rho} \ \star^{JP} \ \rho_X $, together with Lem.~\ref{lem:F-rho-via-D}, thus yields (2). 
(3) is immediate from (2) due to Lem.~\ref{lem:psd-corrected}. 
\end{proof}

\begin{lemma} \label{lem:JPA-via-CP-F}
	For $\rho \in \states{X \otimes Y}$ the following are equivalent:
	\begin{enumerate}[label=(\arabic*), leftmargin=2cm]
		\item $\rho \in \ASset{\JPshort}{X}{Y}$.
		\item $\exists \mathcal{E} \in \cpmaps{X}{Y} \ \ \text{such that} \ \ D^J(\mathcal{E}) = F_{\rho}$. 
		\item $\mathcal{T}_X \big( F_{\rho} \big) \succeq 0$ with $\mathcal{T}_X$ wrt any ONB of $\hilb{X}$. 
	\end{enumerate}
\end{lemma}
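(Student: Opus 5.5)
I would prove the cycle of implications (1) $\Rightarrow$ (2) $\Rightarrow$ (3) $\Rightarrow$ (1), resting almost entirely on Prop.~\ref{prop:F-rho}, the uniqueness argument in the proof of Prop.~\ref{prop:JPA-channel-uniqueness}, and the correspondence of Prop.~\ref{prop:positive-DJ-equiv-to-PPT}.

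\textbf{(1) $\Rightarrow$ (2).} Take $\rho \in \ASset{\JPshort}{X}{Y}$. The proof of Prop.~\ref{prop:JPA-channel-uniqueness} shows that $\JPCPmap{\rho} = \mathcal{E}\circ\Pi_X$ is a CP map whose Jamio{\l}kowski operator solves the Lyapunov equation $\rho = A \star^{JP} \rho_X$. It also lies in the support block $A = \pi_X A \pi_X$. On that block the solution is unique. By Prop.~\ref{prop:F-rho}(1) together with $F_\rho = \pi_X F_\rho \pi_X$ (immediate from the integral, since $e^{-\rho_X\tau/2}$ acts on $\rho=\pi_X\rho\pi_X$), $F_\rho$ is also such a solution. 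Hence $F_\rho = D^J(\JPCPmap{\rho})$, which gives (2).

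\textbf{(2) $\Leftrightarrow$ (3).} $D^J$ is a linear bijection $\linmaps{X}{Y}\to\operators{X\otimes Y}$, so (2) amounts to asking that the map $(D^J)^{-1}(F_\rho)$ be CP. A map is CP iff its Choi operator $D^C = \mathcal{T}_X\circ D^J$ is PSD, with respect to any chosen ONB. The condition is therefore exactly $\mathcal{T}_X(F_\rho)\succeq 0$. Independence of the basis follows because partial transposes in different ONBs differ by a local unitary conjugation $\bar U\,\cdot\,\bar U^\dagger$ on $X$, which preserves positivity.

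\textbf{(2) $\Rightarrow$ (1).} Let $\mathcal{C}$ be the CP map with $D^J(\mathcal{C}) = F_\rho$. By Prop.~\ref{prop:F-rho}(3), $\tr_Y D^J(\mathcal{C}) = \pi_X$, so $\mathcal{C}$ is trace-preserving on the support and $\mathcal{C} = \mathcal{C}\circ\Pi_X$. Set $\widetilde{\mathcal{C}} := \mathcal{C} + \mathcal{C}'\circ\Pi_X^\perp$ for any channel $\mathcal{C}'$. This is CP, and its Jamio{\l}kowski operator has partial trace $\pi_X + \pi_X^\perp = \idop_X$, so $\widetilde{\mathcal{C}}$ is a channel. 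Since $\pi_X^\perp\rho_X = 0$, the added block drops out of the Jordan product with $\rho_X$. Then $\widetilde{\mathcal{C}}\JPodot\rho_X = F_\rho\star^{JP}\rho_X = \rho$ by Prop.~\ref{prop:F-rho}(1), so $\rho\in\ASset{\JPshort}{X}{Y}$.

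The main obstacle is the uniqueness step in (1) $\Rightarrow$ (2). The paper only invokes this informally through Hurwitz stability, so I would verify it directly in the eigenbasis of $\rho_X$. In block form the Lyapunov equation reads $\tfrac12(p_i+p_j)A_{ij}=\rho_{ij}$. For $i,j\le r$ the coefficient $p_i+p_j$ is strictly positive, so $A_{ij}$ is forced. The blocks outside the support are then fixed by the restriction $A=\pi_X A\pi_X$. Everything else is bookkeeping with supports.
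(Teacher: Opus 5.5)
Your proof is correct and follows the same route as the paper. You get (1)$\Rightarrow$(2) from uniqueness of the Lyapunov solution on the support of $\rho_X$ together with Prop.~\ref{prop:F-rho}, (2)$\Leftrightarrow$(3) from Choi's theorem, and (2)$\Rightarrow$(1) by padding the CP map with $\mathcal{C}'\circ\Pi_X^{\perp}$ to obtain a channel. Your direct derivation of $F_\rho=\pi_X F_\rho \pi_X$ from the integral and your explicit block-wise uniqueness check are slightly cleaner than the paper's appeal to Hurwitz stability, but the argument is the same.
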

\begin{proof} 
	Let $\rho \in \states{X \otimes Y}$ \\	
	``(1) $\Leftarrow$ (2)": This follows essentially from Prop.~\ref{prop:F-rho} together with the fact that the CP map can be extended suitably to a CPTP map. 
	In more detail, let $\mathcal{E} \in \cpmaps{X}{Y}$ such that $D^J(\mathcal{E}) = F_{\rho}$. 
	The analogous steps as in the proof of (1) of Prop.~\ref{prop:JPA-channel-uniqueness} yield (due to the uniqueness argument) $\pi_X F_{\rho} \pi_X = F_{\rho}$ and thus $D^J(\mathcal{E}) = D^J(\mathcal{E} \circ \Pi_X)$, i.e  $\mathcal{E} = \mathcal{E} \circ \Pi_X$.  
	Hence, for any $\mathcal{E}' \in \channels{X}{Y}$ setting $\widetilde{\mathcal{E}} := \mathcal{E}  + \mathcal{E}' \circ \Pi_X^{\perp}$ we have that $\widetilde{\mathcal{E}} \in \channels{X}{Y}$. 
	It is easy to check that $\pi_X D^J(\widetilde{\mathcal{E}}) = D^J( \mathcal{E}) = D^J(\widetilde{\mathcal{E}}) \pi_X $ and hence we find (using $\rho_X = \rho_X \pi_X =  \pi_X \rho_X$):
	\begin{align}
		D^J(\widetilde{\mathcal{E}}) \ \star^{JP} \ \rho_X \ 
			& = \ \frac{1}{2} \left( D^J(\widetilde{\mathcal{E}}) \rho_X \ + \ \rho_X D^J(\widetilde{\mathcal{E}}) \right) 
			\ = \ \frac{1}{2} \left(D^J(\widetilde{\mathcal{E}}) \pi_X  \rho_X \ + \ \rho_X \pi_X D^J(\widetilde{\mathcal{E}}) \right)	\\
			& = \ \frac{1}{2} \left(D^J(\mathcal{E}) \rho_X \ + \ \rho_X D^J(\mathcal{E}) \right) 
			\ = \ D^J(\mathcal{E}) \ \star^{JP} \ \rho_X \ = \ \rho
	\end{align}
	So (1) indeed holds for $\rho$. \\
	``(1) $\Rightarrow$ (2)": Immediate from Prop.~\ref{prop:JPA-channel-uniqueness} and Prop.~\ref{prop:F-rho}.   \\	
	``(2) $\Rightarrow$ (3)": Suppose (2) holds, then for any choice of ONB of $\hilb{X}$ we have $\mathcal{T}_X \big( F_{\rho} \big) = \mathcal{T}_X \big( D^J(\mathcal{E}) \big) = D^C(\mathcal{E}) \succeq 0$ due to Choi's theorem. \\
	``(2) $\Leftarrow$ (3)": Suppose $\mathcal{T}_X \big( F_{\rho} \big) \succeq 0$ then $\exists \mathcal{E} \in \cpmaps{X}{Y}\ \ \text{such that} \ \ D^C(\mathcal{E}) = \mathcal{T}_X \big( F_{\rho} \big)$ and thus  $D^J(\mathcal{E}) = F_{\rho}$. 
\end{proof}

\begin{lemma}\label{lem:LSA-implies-JPA}
	For $\rho \in \states{X \otimes Y}$ if $\rho \in \ASset{\LSshort}{X}{Y}$ then $\rho \in \ASset{\JPshort}{X}{Y}$, i.e. \LSshort-ambiguity implies \JPshort-ambiguity; 
	and if indeed $\rho \in \ASset{\LSshort}{X}{Y}$ then the respective unique CP maps satisfy $\JPCPmap{\rho} = \LSCPmap{\rho} \circ {\rleditb{\dephchannel}}_{\rho_X}$.~\footnote{Fun fact: one can show that in this case $D^J(\mathcal{E}^{LS}) = {\rleditb{\dephchannel}}_{\rho_X}(\rho) \starprod^{JP} \rho_X^{-1}$, while Prop.~\ref{prop:JPA-channel-uniqueness}(2) gives $D^J(\JPCPmap{\rho}) \ = {\rleditb{\dephchannel}}_{\rho_X}(\rho) \starprod^{LS}  \rho_X^{-1}$. That is, the Jamio{\l}kowski representation for one type of ambiguity can be defined in terms of the star product of the other type of ambiguity.}
\end{lemma}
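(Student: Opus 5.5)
The plan is to route everything through Lem.~\ref{lem:JPA-via-CP-F}. By that lemma, $\rho \in \ASset{\JPshort}{X}{Y}$ holds iff $\mathcal{T}_X(F_\rho) \succeq 0$, and Lem.~\ref{lem:F-rho-via-D} gives $F_\rho = \rho_X^{-1/2}\,\dephchannel_{\rho_X}(\rho)\,\rho_X^{-1/2}$. So I will first express $F_\rho$ in terms of the LS map. For an LS-ambiguous $\rho$, Prop.~\ref{prop:basic-LSA-propeties} gives $D^J(\LSCPmap{\rho}) = \rho_X^{-1/2}\rho\,\rho_X^{-1/2} =: G$ and $\mathcal{T}_X(G) = D^C(\LSCPmap{\rho}) \succeq 0$.

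I will fix the eigenbasis $\{\ket{i}\}$ of $\rho_X$, with eigenvalues $p_i$, and take $\mathcal{T}_X$ with respect to it. The key computation is entrywise in the $X$-blocks. Write $(A)_{ij} := (\bra{i}\otimes \idop_Y) A (\ket{j}\otimes\idop_Y)$ for $i,j\le r$. Then
\[
(F_\rho)_{ij} = \tfrac{2}{p_i+p_j}\,\rho_{ij}, \qquad \rho_{ij} = \sqrt{p_ip_j}\,G_{ij},
\]
which gives
\[
(F_\rho)_{ij} = \tfrac{2\sqrt{p_ip_j}}{p_i+p_j}\,G_{ij}.
\]
Hence $F_\rho = \dephchannel_{\rho_X}(G)$ (as usual with $\idchan_Y$ suppressed), and all blocks outside the support vanish because $G = \pi_X G\pi_X$.

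Next I need the commutation of $\mathcal{T}_X$ with $\dephchannel_{\rho_X}$, which the paper already notes in the footnote on \DXPPT. It also follows from the Kraus form $K_t = \sqrt{\sech(\pi t)}\,\rho_X^{it}$ of Prop.~\ref{prop:channel-D} together with Lem.~\ref{lem:transpose-commutativity}: each conjugation $\Phi_{\rho_X^{it}}$ commutes with $\mathcal{T}_X$ in that eigenbasis, and I integrate over $t$. With this,
\[
\mathcal{T}_X(F_\rho) = \dephchannel_{\rho_X}\big(\mathcal{T}_X(G)\big) = \dephchannel_{\rho_X}\big(D^C(\LSCPmap{\rho})\big) \succeq 0,
\]
because $\dephchannel_{\rho_X}$ is CP, and in particular positive when applied with $\idchan_Y$. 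Lem.~\ref{lem:JPA-via-CP-F} then gives $\rho\in\ASset{\JPshort}{X}{Y}$.

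For the identity between the CP maps, Prop.~\ref{prop:JPA-channel-uniqueness}(2) gives $D^J(\JPCPmap{\rho}) = F_\rho = \dephchannel_{\rho_X}(D^J(\LSCPmap{\rho}))$. It remains to check that $D^J(\mathcal{E}\circ\dephchannel_{\rho_X}) = (\dephchannel_{\rho_X}\otimes\idchan_Y)(D^J(\mathcal{E}))$. I will compute this on the basis $\dyad{j}{i}\otimes\mathcal{E}(\dephchannel(\dyad{i}{j}))$. Since $\dephchannel(\dyad{i}{j}) = c_{ij}\dyad{i}{j}$ with $c_{ij}$ real and symmetric, and $\dephchannel(\dyad{j}{i}) = c_{ji}\dyad{j}{i}$, the two sides agree.

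The main obstacle is convention bookkeeping. I must make sure that the $\dephchannel$ acting on the Jamiołkowski $X$-factor (which holds $\dyad{j}{i}$, i.e. a transposed index) matches the $\dephchannel$ precomposed on the input. The symmetry $c_{ij} = c_{ji}$ is exactly what makes this work. I also need the support restriction, which ensures that $\dephchannel$ kills the off-support blocks consistently with $\LSCPmap{\rho} = \LSCPmap{\rho}\circ\Pi_X$.
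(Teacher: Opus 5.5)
Your proposal is correct and is essentially the paper's argument: the core step is the same computation $F_\rho = (\dephchannel_{\rho_X}\otimes\idchan_Y)\big(D^J(\LSCPmap{\rho})\big) = D^J(\LSCPmap{\rho}\circ\dephchannel_{\rho_X})$, after which Lem.~\ref{lem:JPA-via-CP-F} and Prop.~\ref{prop:JPA-channel-uniqueness} give both claims. The only difference is that for ambiguity you check criterion (3) of Lem.~\ref{lem:JPA-via-CP-F} by commuting $\mathcal{T}_X$ past $\dephchannel_{\rho_X}$, whereas the paper uses criterion (2) directly (a composite of CP maps is CP); since you prove the Jamio{\l}kowski identity anyway for the second claim, that detour is dispensable.
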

\begin{proof} 
	Suppose $\rho \in \ASset{\LSshort}{X}{Y}$, hence $\rho = \rho_X^{1/2} \ D^J(\LSCPmap{\rho}) \ \rho_X^{1/2}$. Inserting the latter into the RHS of Eq.~\eqref{eq:F-rho-via-D} it is a straightforward calculation that finds $F_{\rho} = D^J(\LSCPmap{\rho} \circ {\rleditb{\dephchannel}}_{\rho_X})$. 
With $\LSCPmap{\rho}$ and ${\rleditb{\dephchannel}}_{\rho_X}$ both CP it follows that $\rho$ is also \JPshort-ambiguous  (Lem.~\ref{lem:JPA-via-CP-F}). 
Then it also holds that $\JPCPmap{\rho} = \LSCPmap{\rho} \circ {\rleditb{\dephchannel}}_{\rho_X}$ (see Lem.~\ref{lem:JPA-via-CP-F} and Prop.~\ref{prop:JPA-channel-uniqueness}). 
\end{proof}

\begin{proof}[Proof of Thm.~\ref{thm:JPA-equivalence-with-DPPT}] \label{proof:thm:JPA-equivalence-with-DPPT}
	Let $\rho \in \states{X \otimes Y}$. 
	From Lem.~\ref{lem:F-rho-via-D} and Lem.~\ref{lem:transpose-commutativity} it follows that 
	\begin{align}
		\mathcal{T}_X \big( F_{\rho} \big) = \rho_X^{-1/2} \ \mathcal{T}_X \circ {\rleditb{\dephchannel}}_{\rho_X} (\rho) \ \rho_X^{-1/2}
	\end{align}
	The fact that $\mathcal{T}_X \big( F_{\rho} \big) \succeq 0$ is equivalent to $\rho$ being \JPshort-ambiguous  (Lem.~\ref{lem:JPA-via-CP-F}) then concludes the proof observing Lem.~\ref{lem:psd-corrected} (with one direction literally given by Lem.~\ref{lem:psd-corrected} and the converse holds due to $\pi_X \big( \mathcal{T}_X \circ {\rleditb{\dephchannel}}_{\rho_X} (\rho) \big) \pi_X = \mathcal{T}_X \circ {\rleditb{\dephchannel}}_{\rho_X} (\rho) $). 
\end{proof}

\subsection{Proofs Sec.~\ref{sec:main-section-on-LS-models}} \label{subsec:proofs-main-section-on-LS-models}

\begin{proof}[Proof of Lem.~\protect\ref{lem:LS-positive-extended-map}] \label{proof:lem:LS-positive-extended-map} 
		Let $\rho \in \ASset{\LSshort}{X}{Y}$ and $\mathcal{C} \in \ARCsetpos{\LSshort}{X}{Y}{\rho}$. 
		The positivity of $\mathcal{C} \LSodot (\argline)$ follows from the definition of $\ARCsetpos{\LSshort}{X}{Y}{\rho}$, i.e. that $D^J(\mathcal{C}) \succeq 0$ (essentially Prop.~\ref{prop:basic-LSA-propeties}), together with Prop.~\ref{prop:basic-properties-star-products}, or explicitly,
	\[
		\mathcal{C} \LSodot \eta \ = \  \eta^{1/2} D^J(\mathcal{C}) \eta^{1/2} \ \stackrel{{\tiny {Lem.~\ref{lem:psd-corrected}}}}{=} \  \succeq 0 \ .
	\]
	The PPTness of the output state is also straightforward, since $\forall \eta \in \states{X}$
	\[
		\mathcal{T}_X \big( \mathcal{C} \LSodot \eta \big) \ = \  \mathcal{T}_X \big( \eta^{1/2} D^J(\mathcal{C}) \eta^{1/2} \big) \ \stackrel{{\tiny Lem.~\ref{lem:transpose-commutativity}}}{=} \ \eta^{1/2}  \mathcal{T}_X \big( D^J(\mathcal{C}) \big) \eta^{1/2}  \ = \ \eta^{1/2} D^C(\mathcal{C})  \eta^{1/2} \ \stackrel{{\tiny {Lem.~\ref{lem:psd-corrected}}}}{\succeq } 0 \ .
	\]
\end{proof}

\end{document}